\title{New Results on Directed Edge Dominating Set\thanks{Partially supported by JSPS and MAEDI under the Japan-France Integrated Action Program (SAKURA), Project GRAPA 38593YJ. An extended abstract of this paper appeared in \cite{BelmonteHK0L18}.}}
\author[R\'{e}my Belmonte et al.]{
 R\'{e}my Belmonte\affiliationmark{1}
\and Tesshu Hanaka\affiliationmark{2}
\and Ioannis Katsikarelis\affiliationmark{3}
\and Eun Jung Kim\affiliationmark{3}\thanks{Partially supported by ANR grants ``S-EX-AP-PE-AL'' (ANR-21-CE48-0022), ``ASSK'' (ANR-18-CE40-0025-01), and ``ESIGMA" (ANR-17-CE23-0010).}
\and {Michael Lampis\affiliationmark{3}\thanks{Partially supported by ANR grants ``S-EX-AP-PE-AL'' (ANR-21-CE48-0022), ``ASSK'' (ANR-18-CE40-0025-01), and ``ESIGMA" (ANR-17-CE23-0010).}}}
\affiliation{
University of Electro-Communications, Chofu, Tokyo, 182-8585, Japan\\
Department of Information and System Engineering, Chuo University, Tokyo, Japan\\
Universit\'{e} Paris-Dauphine, PSL Research University, CNRS, UMR 7243, LAMSADE, 75016, Paris, France}
\keywords{Edge Dominating Set, Treewidth, Tournaments}
\newtheorem{theorem}{Theorem}
\newtheorem{lemma}[theorem]{Lemma}
\newtheorem{definition}[theorem]{Definition}
\newtheorem{fact}[theorem]{Fact}
\newcommand{\tw}{\ensuremath\textrm{tw}}
\newcommand{\pw}{\ensuremath\textrm{pw}}
\newcommand{\vc}{\ensuremath\textrm{vc}}
\newcommand{\eds}{\textsc{EDS}}
\newcommand{\dist}{\ensuremath \textsf{dist}}
\newtheorem{claim}{Claim}[theorem]
\date{}
\begin{document}

\publicationdata{vol. 25:1}{2023}{4}{10.46298/dmtcs.5378}{2019-04-14; 2019-04-14; 2020-09-01; 2022-08-17; 2023-01-04}{2023-01-05}

\maketitle

\begin{abstract}

We study a family of generalizations of \textsc{Edge Dominating Set} on
directed graphs called \textsc{Directed $(p,q)$-Edge Dominating Set}. In this problem an
arc $(u,v)$ is said to dominate itself, as well as all arcs which are at
distance at most $q$ \emph{from} $v$, or at distance at most $p$ \emph{to} $u$.

First, we give significantly improved FPT algorithms for the two most important
cases of the problem, $(0,1)$-d\eds\ and $(1,1)$-d\eds\ (that correspond to
versions of \textsc{Dominating Set} on line graphs), as well as polynomial
kernels. We also improve the best-known approximation for these cases from
logarithmic to constant. In addition, we show that $(p,q)$-d\eds\ is FPT
parameterized by $p+q+\tw$, but W-hard parameterized by $\tw$ (even if the size
of the optimum is added as a second parameter), where $\tw$ is the treewidth of
the underlying (undirected) graph of the input.

We then go on to focus on the complexity of the problem on tournaments. Here,
we provide a complete classification for every possible fixed value of $p,q$,
which shows that the problem exhibits a surprising behavior, including cases
which are in P; cases which are solvable in quasi-polynomial time but not in P;
and a single case $(p=q=1)$ which is NP-hard (under randomized reductions) and
cannot be solved in sub-exponential time, under standard assumptions.

\end{abstract}
\section{Introduction}\label{sec_intro}

\textsc{Edge Dominating Set} (\eds) is a classical graph problem, equivalent to
\textsc{Minimum Dominating Set} on line graphs.  Despite the problem's
prominence, \eds\ has until recently received very little attention in the
context of directed graphs.  In this paper we investigate the complexity of a
family of natural generalizations of this problem to digraphs,
building upon the recent work of \cite{HanakaNO17}.

One of the reasons that \eds\ has not been well-studied so far in digraphs is
that there are several natural ways in which the undirected version can be
generalized. For example, seeing as \eds\ is exactly \textsc{Dominating Set} in
line graphs, one could define \textsc{Directed} \eds\ as \textsc{(Directed)
Dominating Set} in line digraphs, similarly to \cite{Harary60}. In this formulation, an arc
$(u,v)$ dominates all arcs $(v,w)$; however $(v,w)$ does not dominate $(u,v)$.
Another natural way to define the problem would be to consider
\textsc{Dominating Set} on the underlying graph of the line digraph, so as to
maximize the symmetry of the problem, while still taking into account the arcs'
directions. In this formulation, $(u,v)$ dominates arcs coming out of
$v$ and arcs coming into $u$, but not any other arcs incident on $u,v$.

A unifying framework for studying such formulations was recently given by
\cite{HanakaNO17}, that defined $(p,q)$-d\eds\ for any two non-negative
integers $p,q$. In this setting, an arc $(u,v)$ dominates every other arc which
lies on a directed path of length at most $q$ that begins at $v$, or lies on a
directed path of length at most $p$ that ends at $u$. In other words, $(u,v)$
dominates arcs in the forward direction up to distance $q$, and in the backward
direction up to distance $p$.  The interest in defining the problem in such a general
manner is that it allows us to capture at the same time \textsc{Directed
Dominating Set} on line digraphs ($(0,1)$-d\eds), \textsc{Dominating Set} on the
underlying graph of the line digraph ($(1,1)$-d\eds), as well as versions
corresponding to $r$-\textsc{Dominating Set} in the line digraph.  We thus
obtain a family of optimization problems on digraphs, with varying degrees of
symmetry, all of which crucially depend on the directions of arcs in the input
digraph.

\begin{table}[t]
\centering 
\begin{tabular}{c | c | c | c | c} 
\hline\hline 
\small{Param.}  & $p,q$ & FPT / W-hard &  Kernel & Approximability \\ [0.5ex] 
\hline 
\multirow{3}{*}{$k$} 	&	$p+q\leq 1$	
					& 	$2^{O(k)}$ {\tiny \cite{HanakaNO17}} $\rightarrow$ $2^k$ \tiny{[Thm.\ref{thm_2branching}]}
					&     $O(k)$ vertices \tiny{[Thm.\ref{prop:kernel01}]}
					& 	3-approx \tiny{[Thm.\ref{prop:01approx}]} \\ 
		&	$p=q=1$ 	
					& 	$2^{O(k)}$ {\tiny \cite{HanakaNO17}} $\rightarrow$ $9^k$ \tiny{[Thm.\ref{thm_9branching}]}
					&     $O(k^2)$ vertices \tiny{[Thm.\ref{prop:kernel11}]}
					& 	8-approx \tiny{[Thm.\ref{thm_approx}]} \\ 
		&	$\max\{p,q\}\geq 2$
					& 	W[2]-hard~\tiny{\cite{HanakaNO17}}
					&    -
					& 	no $o(\ln k)$-approx~\tiny{\cite{HanakaNO17}}  \\ 
\hline
{\sf tw}		 	&  any $p,q$ & W[1]-hard {\tiny [Thm.\ref{thm:tw_whard}]}		&  - 		&  - \\	
\hline
	{\sf tw}+$p+q$	 	&  any $p,q$ & FPT {\tiny [Thm.\ref{thm_tw_DP}]}		&   unknown 		&  - \\	
\hline 
\end{tabular}
\caption{Complexity status for various values of $p$ and $q$: on general digraphs.} 
\label{table:general} 
\end{table}

\paragraph{Our contribution:} In this paper we advance the state-of-the-art on the complexity of \textsc{Directed $(p,q)$-Edge Dominating Set} on two fronts.\footnote{We
note that in the remainder we always assume that $p\le q$, as in the case where
$p>q$ we can reverse the direction of all arcs and solve $(q,p)$-d\eds.}

First, we study the complexity and approximability of the problem in general (see
Table~\ref{table:general}).
The problem is shown NP-hard for all values of $p,q$ (except $p=q=0$), even for
planar bounded-degree DAGs by \cite{HanakaNO17}, so it makes sense to study its
parameterized complexity and approximability. We show that its two most natural
cases, $(1,1)$-d\eds\ and $(0,1)$-d\eds, admit FPT algorithms with running times
$9^k$ and $2^k$, respectively, where $k$ is the size of the optimal solution.
These algorithms significantly improve upon the FPT algorithms given by
\cite{HanakaNO17}, that use the fact that the treewidth (of the underlying
graph of the input) is at most $2k$ and runs dynamic programming over a
tree decomposition of width at most $10k$, obtained by the algorithm
of~\cite{BodlaenderDDFLP16}. The resulting running-time estimate for the
algorithm of \cite{HanakaNO17} is thus around $25^{10k}$.  Though both of our
algorithms rely on standard branching techniques, we make use of several
non-trivial ideas to obtain reasonable bases in their running times.
We also
show that both of these problems admit polynomial kernels. These are the only
cases of the problem which may admit such kernels, since the problem is shown W-hard
for all other values of $p,q$ by \cite{HanakaNO17}.
Furthermore, we give an $8$-approximation for $(1,1)$-d\eds\ and a
$3$-approximation for $(0,1)$-d\eds.  We recall that \cite{HanakaNO17} showed
an $O(\log n)$-approximation for general values of $p,q$, and a matching
logarithmic lower bound for the case $\max\{p,q\}\ge 2$.  Therefore our result
completes the picture on the approximability of the problem by showing that the
only two currently unclassified cases belong to~APX. 
Finally, we consider the problem's complexity parameterized by the treewidth of
the underlying graph. We show that, even though the problem is FPT when all of
$p,q,\tw$ are parameters, it is in fact W[1]-hard if parameterized only by
$\tw$.  More strongly, we show that the problem is W[1]-hard when
parameterized by the pathwidth and the size of the optimum.

Our second contribution in this paper is an analysis of the complexity of the
problem on tournaments, which are one of the most well-studied classes of
digraphs (see Table~\ref{table:tournament}). One of the reasons for focusing on
this class is that the complexity of \textsc{Dominating Set} has a peculiar
status on tournaments, as it is solvable in quasi-polynomial time, W[2]-hard,
but neither in P nor NP-complete (under standard assumptions).  Here, we
provide a \emph{complete classification} of the problem which paints an even
more surprising picture. We show that $(p,q)$-d\eds\ goes from being in P for
$p+q\le 1$; to being APX-hard and unsolvable in $2^{n^{1-\epsilon}}$ under the
(randomized) ETH for $p=q=1$; to being equivalent to \textsc{Dominating Set} on
tournaments, hence NP-intermediate, quasi-polynomial-time solvable, and
W[2]-hard, when one of $p$ and $q$ equals $2$; and finally to being
polynomial-time solvable again if $\max\{p,q\}\ge 3$ and neither $p$ nor~$q$
equals~2. We find these results surprising, because few problems demonstrate
such erratic complexity behavior when manipulating their parameters and
because, even though in many cases the problem does seem to behave like
\textsc{Dominating Set}, the fact that $(1,1)$-d\eds\ becomes significantly
harder shows that the problem has interesting complexity aspects of its own.
The most technical part of this classification is a reduction that
establishes the hardness of $(1,1)$-d\eds, making use of several
\emph{randomized} tournament constructions, that we show satisfy certain
desirable properties with high probability; as a result our reduction itself is
randomized.  

\begin{table}[ht]

\centering 
\begin{tabular}{c | c} 
\hline\hline 
Range of $p,q$  & Complexity \\ [0.5ex] 
\hline 
$p=q=1$ &  NP-hard {\tiny [Thm.~\ref{thm_11_NP}]}, FPT {\tiny [Thm.~\ref{thm_9branching}]}, polynomial kernel {\tiny [Thm.~\ref{prop:kernel11}]}\\
$p=2$ or $q= 2$ & Quasi-P-time {\tiny [Thm.~\ref{thm_02_12_22_domsetALG}]}, W[2]-hard {\tiny [Thm~\ref{thm_02_12_22_domsetCom}]}   \\
remaining cases  & P-time {\tiny [Thm.~\ref{thm_01_P} and~\ref{thm_pq3_P}]}  \\
\hline 
\end{tabular}
\caption{Complexity status for various values of $p$ and $q$: on tournaments.}
\label{table:tournament} 
\end{table}

\paragraph{Related Work:} On undirected graphs \textsc{Edge Dominating
Set}, also known as \textsc{Maximum Minimal Matching}, is NP-complete even on
bipartite, planar, bounded degree graphs as well as other special cases, see
\cite{Yannakakis80,Horton1993}. It can be approximated within a factor of 2 as shown by
\cite{FujitoN02} (or better in some special cases as shown by
\cite{CardinalLL09,SchmiedV12,Baker94}), but not a factor better than $7/6$
according to \cite{Chlebik2006} unless P=NP. The problem has been the subject of intense
study in the parameterized and exact algorithms community (\cite{XiaoN14}),
producing a series of improved FPT algorithms by
\cite{Fernau06,Binkele-RaibleF10,FominGSS09,XiaoKP13}; the current best is
given by \cite{IwaideN16}. A kernel with $O(k^2)$ vertices and $O(k^3)$ edges
is also shown by \cite{Hagerup12}.

For $(p,q)$-d\eds, \cite{HanakaNO17} show the problem to be NP-complete on
planar DAGs, in P on trees, and W[2]-hard and $c\ln k$-inapproximable on DAGs
if $\max\{p,q\} >1$. The same paper gives FPT algorithms for $\max\{p,q\}\le
1$.  Their algorithm performs DP on a tree decomposition of width $w$ in
$O(25^w)$, using the fact that $w\le 2k$ and the algorithm of
~\cite{BodlaenderDDFLP16} to obtain a decomposition of width $10k$.  

\textsc{Dominating Set} is shown to not admit an $o(\log n)$-approximation by
\cite{DinurS14,Moshkovitz15}, and to be W[2]-hard and unsolvable in time
$n^{o(k)}$ under the ETH by \cite{Downey1995,CyganFKLMPPS15}. The problem is
significantly easier on tournaments, as the size of the optimum is always at most $\log n$,
hence there is a trivial $n^{O(\log n)}$ (quasi-polynomial)-time algorithm. It
remains, however, W[2]-hard as shown by \cite{Downey95}.  The problem thus finds itself in
an intermediate space between P and NP, as it cannot have a polynomial-time
algorithm unless FPT=W[2] and it cannot be NP-complete under the ETH (as it
admits a quasi-polynomial-time algorithm). The generalization of
\textsc{Dominating Set} where vertices dominate their $r$-neighborhood has also
been well-studied in general, e.g.\ by
\cite{BorradaileL16,DemaineFHT05,EisenstatKM14,KatsikarelisLP17,Kreutzer2012}. It is noted by \cite{BiswasJRS22} that this problem is much easier on tournaments for $r\ge 2$, as the size of the solution is always a constant.

\section{Definitions and Preliminaries}\label{sec_defs}

\paragraph{Graphs and domination:} We use standard graph-theoretic
notation. If $G=(V,E)$ is a graph, $S \subseteq V$ a subset of vertices and $A\subseteq E$ a subset of edges,
then $G[S]$ denotes the subgraph of $G$ induced by $S$, while $G[A]$ denotes the subgraph of $G$ that includes $A$ and all its endpoints. We let $V=A\dot{\cup} B$ denote the disjoint set union of $A$ and $B$. For a vertex $v \in V$, the
set of neighbors of $v$ in $G$ is denoted by $N_G(v)$, or simply $N(v)$, and
$N_G(S):=(\bigcup_{v \in S}N(v)) \setminus S$ will be written as
$N(S)$. We define $N[v]:=N(v) \cup \{v\}$ and $N[S]:=N(S) \cup S$.

Depending on the context, we use $(u,v)$ for $u,v \in V$ to denote either an
undirected edge connecting two vertices $u,v$, or an \emph{arc} (a directed
edge) with \emph{tail} $u$ and \emph{head} $v$. 
An \emph{incoming} (resp.\ \emph{outgoing}) arc for vertex $v$ is an arc whose head (resp.\ tail) is $v$. 
In a directed graph $G=(V,E)$, the set of \emph{out-neighbors} (resp.\ \emph{in-neighbors}) of a vertex $v$ is defined as $\{u\in V: (v,u)\in E\}$ (resp. $\{u\in V: (u,v)\in E\}$) and denoted as $N_G^+(v)$ 
(resp. $N_G^-(v)$). Similarly to the case of undirected graphs, $N^+(S)$ and $N^-(S)$ respectively stand for the sets $(\bigcup_{v\in S} N^+(v))\setminus S$ and  $(\bigcup_{v\in S} N^-(v))\setminus S$.
For a subdigraph $H$ of  $G$ and subsets $S,T\subseteq V$, we  let $\delta_H(S,T)$ denote the set of arcs in $H$ whose tails are in $S$ and heads are in $T$.

We use $\delta_H^-(S)$ (resp.\ $\delta_H^+(S)$) to denote the set $\delta_H(V\setminus S,S)$ (resp.\ the set $\delta_H(S,V\setminus S)$). 
If $S$ is a singleton consisting of a vertex $v$, we write $\delta_H^+(v)$ (resp.\  $\delta_H^-(v)$) instead of $\delta_H^+(\{v\})$ (resp.\ $\delta_H^-(\{v\})$). 
The union $\delta_H^+(v)\cup \delta_H^-(v)$ is denoted as $\delta_H(v).$
The \emph{in-degree} $d_H^-(v)$ (respectively \emph{out-degree} $d_H^+(v)$) of a vertex $v$ is defined as $|\delta_H^-(v)|$ (resp.\ $|\delta_H^+(v)|)$), and we write $d_H(v)$ to denote $d^+_H(v)+d^-_H(v)$. We omit $H$ if it is clear from the context. If $H$ is $G[A]$ for some vertex or arc set of $G$, then we write $A$ in place of $G[A]$. 

A \emph{source} (resp.\ \emph{sink}) is a vertex that has no incoming (resp.\ outgoing) arcs.
A vertex $v$ is said to {\em in-cover} every incoming arc $(u, v)$ and {\em out-cover} every outgoing arc $(v, u)$ for some $u$. Here, for a path $v_1, v_2, \ldots, v_l$,  the {\em length} of the path is defined as the number of arcs, that is, $l-1$.

A directed graph is \emph{strongly connected} if there is a path in each
direction between each pair of vertices.  A \emph{strongly connected component}
of a directed graph $G$ is a maximal strongly connected subgraph. The
collection of strongly connected components forms a partition of the set of
vertices of $G$, while it also has a \emph{topological ordering}, i.e., a
linear ordering of its components such that for every arc $(u,v)$, $u$ comes
before $v$ in the ordering. If each strongly connected component of $G$ is
contracted to a single vertex, the resulting graph is a directed acyclic graph
(DAG). The topic of this paper is \textsc{Directed $(p, q)$-Edge Dominating
Set}  ($(p , q)$-d\eds): given a directed graph $G=(V, E)$, a positive integer
$k$ and two non-negative integers $p, q$, we are asked to determine whether an
arc subset $K\subseteq E$ of size at most $k$ exists, such that every arc is
$(p, q)$-dominated by $K$.  Such a $K$ is called a {\em $(p, q)$-edge
dominating set} of $G$.

The \textsc{Dominating Set} problem is defined as follows: given an undirected graph $G=(V,E)$, we are asked to find a subset of vertices $D\subseteq V$, such that every vertex not in $D$ has at least one neighbor in $D$: $\forall v\notin D:N(v)\cap D\not=\emptyset$. For a directed graph $G=(V,E)$, every vertex not in $D$ is required to have at least one \emph{incoming} arc from at least one vertex of $D$: $\forall v\notin D:N^-(v)\cap D\not=\emptyset$.

We also use the \textsc{$k$-Multicolored
Clique} problem, which is defined as follows: given a graph $G=(V,E)$, with $V$
partitioned into $k$ independent sets $V=V_1\dot{\cup} \dots \dot{\cup} V_k$, where
$\forall i\in[1,k]$ it is  $|V_i|=n$,\footnote{We implicitly assume each set in the partition contains $n$ elements (rather than potentially fewer), as the numbering of vertices in each set will be used to encode algorithmic choices in our hardness proofs, whose descriptions will thus be more succint.} we are asked to find a subset $S\subseteq V$, such that
$G[S]$ forms a clique with $|S\cap V_i|=1,\forall i\in[1,k]$. The problem
\textsc{$k$-Multicolored Clique} is well-known to be
W[1]-complete (see \cite{FellowsHRV09}).

\paragraph{Complexity background:} We assume that the reader is familiar
with the basic definitions of parameterized complexity, such as the classes FPT
and W[1], as well as the Exponential Time Hypothesis (ETH), see~\cite{CyganFKLMPPS15}. 
For a problem $P$, we let $OPT_P$ denote the value of its optimal solution. We also make use of standard graph width measures,
such as the \emph{vertex cover number} $\vc$, \emph{treewidth} $\tw$ and \emph{pathwidth} $\pw$, whose definitions can also be found in \cite{CyganFKLMPPS15}. Formal definitions of notions related to approximation can be found in \cite{V01,Williamson2011} (also in appendices therein).

\paragraph{Treewidth and pathwidth:} A \emph{tree decomposition} of a graph $G=(V,E)$ is a pair $(\mathcal{X},T)$ with $T=(I,F)$ a tree and $\mathcal{X}=\{X_i|i\in I\}$ a family of subsets of $V$ (called \emph{bags}), one for each node of $T$, with the following properties:
 \begin{enumerate}[1)]
  \item $\bigcup_{i\in I}X_i=V$;
  \item for all edges $(v,w)\in E$, there exists an $i\in I$ with $v,w\in X_i$;
  \item for all $i,j,k\in I$, if $j$ is on the path from $i$ to $k$ in $T$, then $X_i\cap X_k\subseteq X_j$.
 \end{enumerate}
 The \emph{width} of a tree decomposition $((I,F),\{X_i|i\in I\})$ is
$\max_{i\in I}|X_i|-1$. The \emph{treewidth} of a graph $G$ is the minimum
width over all tree decompositions of $G$, denoted by $\textrm{tw}(G)$.  The
tree decomposition and width of a directed graph $G=(V,E)$ is defined as those
of the underlying graph of $G$, namely the undirected graph obtained from $G$
by forgetting the direction of arcs of $G$.

Moreover, for rooted $T$, let $G_i=(V_i,E_i)$ denote the \emph{terminal subgraph} defined by node $i\in I$, i.e.\ the induced subgraph of $G$ on all vertices in bag $i$ and its descendants in $T$. Also let $N_{i}(v)$ denote the neighborhood of vertex $v$ in $G_i$ and $dist_i(u,v)$ denote the distance between vertices $u$ and $v$ in $G_i$, while $dist(u,v)$ (absence of subscript) is the distance in $G$.

In addition, a tree decomposition can be converted to a \emph{nice} tree decomposition of the same width (in $O(\textrm{tw}^2\cdot n)$ time and with $O(\textrm{tw}\cdot n)$ nodes). The tree here is rooted and binary, while each node is one of the four types: 
\begin{enumerate}[a)]
 \item Leaf nodes $i$ are leaves of $T$ and have $|X_i|=1$;
 \item Introduce nodes $i$ have one child $j$ with $X_i=X_j\cup\{v\}$ for some vertex $v\in V$ and are said to \emph{introduce} $v$;
 \item Forget nodes $i$ have one child $j$ with $X_i=X_j\setminus\{v\}$ for some vertex $v\in V$ and are said to \emph{forget} $v$;
 \item Join nodes $i$ have two children denoted by $i-1$ and $i-2$, with $X_i=X_{i-1}=X_{i-2}$.
\end{enumerate}
Nice tree decompositions were introduced by \cite{Kloks94} and using them does not in general give any additional algorithmic possibilities, yet algorithm design becomes considerably easier.

Replacing ``tree'' by ``path'' in the above, we get the definition of \emph{pathwidth} $\pw$. We recall the following well-known relation:

\begin{lemma}\label{lem:relations} For any
graph $G$ we have $\tw(G)\le \pw(G)$.  
\end{lemma}

\paragraph{Tournaments:} A \emph{tournament} is a directed graph in which every pair of distinct vertices is connected by a single arc. Given a tournament $T$, 
we denote by $T^{rev}$ the tournament obtained from $T$ by reversing the direction of every arc. Every tournament has a \emph{king} (sometimes also called a $2$-king), being a vertex from which every other vertex can be reached by a path of length at most 2. One such king is the vertex of maximum out-degree (see \cite{BiswasJRS22}).  It is folklore that any tournament contains a \emph{Hamiltonian path}, being a directed path that uses every vertex. The \textsc{Dominating Set} problem can be solved by brute force in time $n^{O(\log n)}$ on tournaments, by the following lemma:

\begin{lemma}[\cite{CyganFKLMPPS15}]\label{lem_ds_tour}
 Every tournament on $n$ vertices has a dominating set of size $\le\log n+1$.
\end{lemma}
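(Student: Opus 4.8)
The plan is to prove this by induction on $n$, using an averaging argument to repeatedly find a single vertex that dominates at least half of the remaining tournament. The key observation is that in any tournament the sum of all out-degrees equals the number of arcs, which is $\binom{n}{2} = n(n-1)/2$; hence by averaging there exists a vertex $v$ whose out-degree satisfies $d^+(v) \ge (n-1)/2$. In the directed notion of domination used here, $v$ dominates itself together with all of its out-neighbors $N^+(v)$, so $v$ alone dominates at least $1 + (n-1)/2 = (n+1)/2$ vertices. Consequently the set $U$ of vertices left undominated by $v$ — precisely the in-neighbors $N^-(v)$ — has size $|U| \le n - 1 - (n-1)/2 = (n-1)/2 < n/2$.

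The crucial structural point is that $U$, being a subset of a tournament, induces a sub-tournament $T[U]$. I would therefore recurse on $T[U]$: by the induction hypothesis it admits a dominating set $D'$ (internal to $T[U]$) of the claimed size, and I would output $D := \{v\} \cup D'$. To see that $D$ dominates all of $T$, note that every vertex of $T$ is either $v$ itself, an out-neighbor of $v$ (dominated by $v$), or a vertex of $U$; and every vertex of $U$ is either in $D'$ or is dominated within $T[U]$ by some vertex of $D' \subseteq D$. Thus $D$ is a valid dominating set of the whole tournament.

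For the size bound I would set up the induction to show $f(n) \le \lfloor \log_2 n\rfloor + 1$, where $f(n)$ denotes the worst-case minimum dominating-set size over all $n$-vertex tournaments. The base case $n=1$ is immediate, since a single vertex dominates itself and $\lfloor \log_2 1\rfloor + 1 = 1$. For the inductive step, if $U$ is empty then $\{v\}$ already dominates $T$ and we are done; otherwise $1 \le |U| \le (n-1)/2$, and the recursion gives
\[
f(n) \;\le\; 1 + f(|U|) \;\le\; 1 + \bigl\lfloor \log_2 \tfrac{n-1}{2}\bigr\rfloor + 1 \;=\; \bigl\lfloor \log_2 (n-1)\bigr\rfloor + 1 \;\le\; \lfloor \log_2 n\rfloor + 1,
\]
which yields $f(n) \le \log_2 n + 1$ and proves the lemma.

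I do not expect any genuine obstacle here: the entire content is the averaging step that guarantees a vertex out-dominating at least half the vertices, which is exactly what makes the remaining instance shrink geometrically and forces the logarithmic bound. The only points requiring a little care are the bookkeeping with floors in the logarithm (handled by phrasing the induction in terms of $\lfloor \log_2 n\rfloor$) and the verification that a dominating set computed inside the induced sub-tournament $T[U]$ genuinely extends to a dominating set of $T$ once $v$ is added, which follows directly from the definition of domination restricted to the in-neighbors of $v$.
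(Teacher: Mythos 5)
Your proof is correct: the averaging argument (a maximum out-degree vertex dominates at least half the tournament, then recurse on its in-neighbors, which induce a sub-tournament) with the floor-based induction is exactly the standard proof of this fact. The paper itself gives no proof but cites it from \cite{CyganFKLMPPS15}, and the argument there is the same greedy/halving induction you describe, so your proposal matches the intended proof.
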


%
%

\section{Tractability}\label{sec_branch}

\subsection{FPT algorithms}\label{subsec:fpt}
In this section we present FPT branching algorithms for $(0,1)$-d\eds\ and $(1,1)$-d\eds. Both algorithms operate along similar lines, considering the particular ways available for domination of each arc.

\begin{theorem}\label{thm_9branching}
 The $(1,1)$-d\eds\ problem parameterized by solution size $k$ can be solved in time $O^*(9^k)$.
\end{theorem}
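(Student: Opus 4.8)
The plan is to turn $(1,1)$-d\eds\ into a branching procedure of depth $2k$ with branching factor $3$, whose leaves are solved in polynomial time by bipartite matching. The first step is to rewrite the domination rule in a convenient form. Since $p=q=1$, a solution arc $(w,x)$ dominates every out-arc of $x$ (paths of length $\le 1$ from its head $x$) and every in-arc of $w$ (paths of length $\le 1$ to its tail $w$). Writing $H(K)$ for the set of heads and $T(K)$ for the set of tails of the arcs in $K$, it follows that an arc $(x,y)$ is $(1,1)$-dominated by $K$ if and only if $(x,y)\in K$, or $x\in H(K)$, or $y\in T(K)$. Thus we seek a minimum arc set $K$ such that every arc $(x,y)\notin K$ has its tail in $H(K)$ or its head in $T(K)$.

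The branching maintains a forced arc set $F$ (committed to $K$) together with vertex sets $I$ and $O$, recording which vertices we have decided must lie in $H(K)$ and $T(K)$; call an arc $(x,y)$ \emph{settled} if $(x,y)\in F$ or $x\in I$ or $y\in O$. While some arc $(u,v)$ is unsettled we branch three ways, matching the three ways $K$ can dominate it: put $(u,v)$ into $F$ (which also places $u$ in $O$ and $v$ in $I$); or add $u$ to $I$; or add $v$ to $O$. When every arc of $G$ is settled we stop. At such a leaf any further arc we add is automatically dominated, so it remains only to \emph{realize} the commitments: choose $K\supseteq F$ of minimum size so that every vertex of $I$ has an in-arc in $K$ and every vertex of $O$ has an out-arc in $K$. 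The arcs of $F$ already cover their own heads and tails, and a single arc $(s,t)$ can cover both a requirement $s\in O$ and a requirement $t\in I$; hence the minimum number of extra arcs is $|I'|+|O'|-M$, where $I',O'$ are the still-uncovered requirements and $M$ is a maximum bipartite matching between $O'$ and $I'$ using arcs of $G$. This is polynomial, and we accept the leaf iff the total is at most $k$ (rejecting if some required vertex lacks an arc of the needed direction). Correctness is then immediate in both directions: guiding the branch by any solution $K$ (always choosing a reason that $K$ itself uses) reaches a settled leaf with $F\subseteq K$, $I\subseteq H(K)$, $O\subseteq T(K)$, so $K$ is a feasible realization of value $\le|K|$; conversely the $K$ produced at an accepted leaf enters every vertex of $I$, exits every vertex of $O$, and contains $F$, which by the reformulation settles every arc.

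For the running time, the branching factor is $3$, so it suffices to bound the recursion depth by $2k$. The natural potential is $\Phi=2|F|+a+b$, where $a,b$ count the in- and out-cover commitments made so far; it increases by exactly $2,1,1$ in the three branches, and the depth is at most $\Phi$, so it is enough to guarantee $\Phi\le 2k$ at every node that can still reach an accepted leaf. The intended argument charges each commitment to the \emph{slots} of the target solution $K$ — the heads and tails of its at most $k$ arcs, hence at most $2k$ slots — charging a forced arc to both of its slots and each in/out-cover commitment to one slot; this yields $2|F|+a+b\le 2|K|\le 2k$.

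The hard part, and the reason the theorem needs ``non-trivial ideas'' rather than a plain three-way branching, is making this charging \emph{injective}: a single arc of $K$ must never be paid for twice. The dangerous configuration is deferring the in-cover of a vertex $v$ and later being forced to take an in-arc $(u,v)$ of $v$; then $v$ contributes both to $a$ and, as a head of $F$, to $|F|$, and the slot-charge collides. A naive branching therefore only gives depth $\le 3k$, i.e. $27^k$. To recover $9^k$ one must add reduction/branching rules ensuring that forced arcs are always taken with fresh heads and tails, disjoint from the deferred commitments — intuitively, never deferring the in-cover of a vertex that will later receive a forced in-arc, and never forcing an arc whose endpoints are already committed, instead dominating such arcs through the cheaper cover option. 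I expect this reconciliation of the deferred in/out-cover commitments with the forced arcs, so that no solution arc is double-counted against the budget, to be the principal obstacle; the branching skeleton and the matching-based realization are otherwise routine, and once the injective charging is in place the depth is at most $2k$ and the whole algorithm runs in $O^*(9^k)$.
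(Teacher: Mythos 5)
Your reformulation of $(1,1)$-domination, the branching skeleton over forced arcs plus in/out commitments, and the matching-based resolution at the leaves (your $|I'|+|O'|-M$ formula is exactly a minimum edge cover computation, which is also what the paper uses) are all sound and match the paper's overall architecture. But the proposal has a genuine gap, and you have located it yourself: the potential $\Phi=2|F|+a+b$ does not admit an injective charging against the $2k$ degree slots of a target solution, because a vertex placed in $I$ may later become the head of a forced arc, and then the head slot of that arc is paid for twice. Your write-up stops at ``one must add reduction/branching rules'' to prevent this, without giving such rules or proving them correct. Worse, the specific rule you gesture at (never force an arc whose endpoints are already committed; dominate it ``through the cheaper cover option'' instead) would break completeness: if $v\in I$ and the arc $(u,v)$ is unsettled, a solution may genuinely need to contain $(u,v)$ --- e.g.\ when $u$ is a source and $v$ is a sink --- and since $(u,v)\in K$ implies neither $u\in H(K)$ nor $v\in T(K)$, dropping the forced branch loses that solution, with no obvious exchange argument to restore it.

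The paper's resolution is a different bookkeeping device that eliminates the collision entirely: instead of one-sided commitments (``$v$ has an in-arc in $K$''), each branch \emph{completely} guesses the degree signature of one endpoint, placing it into one of $V^+$ (positive out-degree, zero in-degree), $V^-$ (positive in-degree, zero out-degree), or $V^{+-}$ (both positive). Because the guess is complete, a vertex is never branched on again, and the measure $|V^+|+|V^-|+2|V^{+-}|$ is a sum of per-vertex lower bounds on degrees in any solution consistent with the marks, hence at most $2k$; there is nothing to double-count, since forced arcs (exactly $\delta(V^+,V^-)$, added at the leaf) are never charged separately --- their cost is already absorbed by the marks on their two endpoints. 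Concretely, on an arc with both endpoints unmarked the paper branches five ways ($v\in V^+$; $v\in V^{+-}$; $u\in V^-$; $u\in V^{+-}$; or $u\in V^+$ and $v\in V^-$, the last encoding ``take the arc itself''), with measure decreases $1,2,1,2,2$, giving $T(\ell)\le 2T(\ell-1)+3T(\ell-2)\le 3^\ell$; on an arc whose single marked endpoint does not cover it, it branches three ways on the other endpoint, again within $3^\ell$. With $\ell=2k$ this yields $O^*(9^k)$. So the ``principal obstacle'' you name is real, and the complete-guess trick --- which your one-sided $I$/$O$ commitments cannot simulate --- is precisely the missing idea.
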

 \begin{proof}
We present an algorithm that works in two phases. In the first phase we perform
a branching procedure which aims to locate vertices with positive out-degree or
in-degree in the solution. The general approach of this procedure is standard
(as long as there is an uncovered arc, we consider all ways in which it may be
covered), and uses the fact that at most $2k$ vertices have positive in- or
out-degree in the solution. In order to speed up the algorithm, however, we use
a more sophisticated branching procedure that picks an endpoint of the current
arc $(u,v)$ and \emph{completely guesses} its behavior in the solution. This
ensures that this vertex will never be branched on again in the future. Once
all arcs of the graph are covered, we perform a second phase, which runs in
polynomial time, and by using a maximum matching algorithm finds the best
solution corresponding to the current branch.

 
Let us now describe the branching phase of our algorithm. We construct three
sets of vertices $V^+,V^-,V^{+-}$. The meaning of these sets is that when we
place a vertex $u$ in $V^+, V^-,$ or $V^{+-}$ we guess that $u$ has (i)
positive out-degree and zero in-degree in the optimal solution; (ii) positive
in-degree and zero out-degree in the optimal solution; (iii) positive in-degree
and positive out-degree in the optimal solution, respectively. Initially all
three sets are empty. When the algorithm places a vertex in one of these sets
we say that the vertex has been \emph{marked}.

Our algorithm now proceeds as follows: given a graph $G=(V,E)$ and three
disjoint sets $V^+, V^-, V^{+-}$, we do the following:

\begin{enumerate}

\item\label{it:br1} If $|V^+|+|V^-|+2|V^{+-}| > 2k$, reject. 

\item\label{it:br2} While there exists an arc $(u,v)$ with both endpoints
unmarked, do the following and return the best solution:
	
	\begin{enumerate}

	\item\label{it:br2a} Call the algorithm with $V^+ := V^+\cup\{v\}$ and the
other sets unchanged.  

	\item\label{it:br2b} Call the algorithm with $V^{+-} :=
V^{+-}\cup\{v\}$ and the other sets unchanged.

	\item\label{it:br2c} Call the algorithm with $V^- := V^-\cup\{u\}$ and the
other sets unchanged.  

	\item\label{it:br2d} Call the algorithm with $V^{+-} :=
V^{+-}\cup\{u\}$ and the other sets unchanged.

	\item\label{it:br2e} Call the algorithm with $V^+:= V^+\cup\{u\}$,
$V^-:= V^-\cup\{v\}$, and $V^{+-}$ unchanged.

	\end{enumerate}

\end{enumerate}

It is not hard to see that Step \ref{it:br1} is correct as
$|V^+|+|V^-|+2|V^{+-}|$ is a lower bound on the sum of the degrees of all
vertices in the optimal solution and therefore cannot surpass $2k$.

Branching Step \ref{it:br2} is also correct: in order to cover $(u,v)$, the
optimal solution must either take an arc coming out of $v$
(\ref{it:br2a},\ref{it:br2b}), or an arc coming into $u$
(\ref{it:br2c},\ref{it:br2d}), or, if none of the previous cases apply, it must
take the arc itself (\ref{it:br2e}).

Once we have applied the above procedure exhaustively, all arcs of the graph
have at least one marked endpoint. We say that an arc $(u,v)$ with $u\in
V^-\cup V^{+-}$, or with $v\in V^+\cup V^{+-}$ is covered. We now check if the
graph contains an uncovered arc $(u,v)$ with exactly one marked endpoint. We
then branch by considering all possibilities for its other endpoint. More
precisely, if $u\in V^+$ and $v$ is unmarked, we branch into three cases, where
$v$ is placed in $V^+$, or $V^-$, or $V^{+-}$ (and similarly if $v$ is the
marked endpoint). This branching step is also correct, since the degree
specification for the currently marked endpoint does not dominate the arc
$(u,v)$, hence any feasible solution must take an arc incident on the other
endpoint.

Once the above procedure is also applied exhaustively we have a graph where all
arcs either have both endpoints marked, or have one endpoint marked but in a
way that if we respect the degree specifications the arc is guaranteed to be
covered. What remains is to find the best solution that agrees with the
specifications of the sets $V^+, V^-, V^{+-}$.  

We first add to our solution $S$ all arcs $\delta(V^+,V^-)$, i.e., all arcs
$(u,v)$ such that $u\in V^+$ and $v\in V^-$, since there is no other way to
dominate these arcs. We then define a bipartite graph $H=(V^+\cup
V^{+-},V^-\cup V^{+-},\delta(V^+\cup V^{+-},V^-\cup V^{+-}))$.  That is, $H$
contains all vertices in $V^+$ along with a copy of $V^{+-}$ on one side, all
vertices of $V^-$ and a copy of $V^{+-}$ on the other side and all arcs in $E$
with tails in $V^+\cup V^{+-}$ and heads in $V^-\cup V^{+-}$. We now compute a
minimum edge cover of this graph, that is, a minimum set of edges that touches
every vertex. This can be done in polynomial time by finding a maximum matching
and then adding an arbitrary incident edge for each unmatched vertex. It is not
hard to see that a minimum edge cover of this graph corresponds exactly to the
smallest $(1,1)$-edge dominating set that satisfies the specifications of the
sets $V^+, V^-, V^{+-}$.

To see that the running time of our algorithm is $O^*(9^k)$, observe that
there are two branching steps: either we have an arc $(u,v)$ with both
endpoints unmarked; or we have an arc with exactly one unmarked endpoint. In
both cases we measure the decrease of the quantity $\ell:= 2k -
(|V^+|+|V^-|+2|V^{+-}|)$. The first case produces two instances with $\ell' :=
\ell-1$ (\ref{it:br2a},\ref{it:br2c}), and three instances with $\ell' := \ell
-2$. We therefore have a recurrence satisfying $T(\ell)\le 2T(\ell-1)+3T(\ell-2)$, which
gives $T(\ell)\le 3^\ell$. For the second case, we have three branches, all of
which decrease $\ell$ and we therefore also have $T(\ell)\le 3^\ell$ in this case.
Taking into account that, initially $\ell = 2k$ we get a running time of at
most $O^*(9^k)$. \end{proof}

\begin{theorem}\label{thm_2branching}
 The $(0,1)$-d\eds\ problem parameterized by solution size $k$ can be solved in time $O^*(2^k)$.
\end{theorem}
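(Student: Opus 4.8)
The plan is to design a branching algorithm that mirrors the structure of the $(1,1)$-d\eds\ proof but exploits the asymmetry of the $(0,1)$ case to achieve a much better branching factor. Recall that in $(0,1)$-d\eds\ an arc $(u,v)$ dominates itself and every arc on a directed path of length at most $1$ starting at $v$, i.e. every arc leaving $v$; it dominates nothing in the backward direction since $p=0$. This means that to dominate an arc $(u,v)$ a solution must either take an arc emanating from $v$ (i.e. make $v$ have positive out-degree in the solution), or take $(u,v)$ itself. The key simplification compared to the $(1,1)$ case is that the \emph{in-degree} of a vertex is irrelevant to its domination power: a vertex only dominates arcs in the forward direction. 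Consequently I would maintain a single marking set, say $V^+$, recording vertices guessed to have positive out-degree in the optimal solution, rather than the three sets $V^+,V^-,V^{+-}$ needed before.

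The branching phase proceeds as follows. While there is an uncovered arc $(u,v)$, observe that the only ways to dominate it are (a) to give $v$ positive out-degree, or (b) to take the arc $(u,v)$ itself, which amounts to giving $u$ positive out-degree (and this is the only role $u$ can play, since $u$'s in-arcs are useless for domination here). So I would branch into exactly two cases: place $v$ in $V^+$, or place $u$ in $V^+$. Each branch marks one new vertex that provides positive out-degree in the solution, and since at most $k$ arcs are chosen, at most $k$ vertices can have positive out-degree. Measuring the decrease of the quantity $\ell := k - |V^+|$, each of the two branches decreases $\ell$ by $1$, yielding the recurrence $T(\ell)\le 2T(\ell-1)$ and hence $T(\ell)\le 2^\ell = 2^k$. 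An arc $(u,v)$ is declared covered once $v\in V^+$, exactly as in the previous theorem but with the simpler covering condition.

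Once all arcs are covered, I would run the polynomial-time second phase to find the best solution consistent with the guess $V^+$. Here every marked vertex $v\in V^+$ must be given at least one outgoing arc, and an arc $(v,w)\in E$ with $v\in V^+$ covers $v$ as well as any arc $(w,\cdot)$ it dominates in the forward direction; the constraint is essentially that we must choose, for each $v\in V^+$, an out-arc, while simultaneously ensuring every originally uncovered arc is dominated. I expect this to reduce to a covering/matching computation analogous to the minimum edge cover used in Theorem~\ref{thm_9branching}: one selects a minimum-size set of out-arcs so that every vertex flagged as needing positive out-degree is touched and every uncovered arc $(u,v)$ has $v$ supplied with an out-arc. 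Because the domination here is one-directional, I anticipate this sub-problem is even simpler than the bipartite edge-cover of the $(1,1)$ case and can be handled directly in polynomial time.

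The main obstacle, and the point requiring the most care, is the correctness of the two-way branching: I must argue rigorously that in the $(0,1)$ setting there genuinely are only two structurally distinct ways to dominate an arc $(u,v)$, so that no third branch is needed, and that placing $u$ or $v$ into $V^+$ faithfully captures ``take $(u,v)$'' versus ``dominate $(u,v)$ from its head.'' The subtlety is ensuring that committing $u$ to $V^+$ does not lose the specific solution that takes exactly the arc $(u,v)$; I would justify this by noting that the second phase is free to realize $u$'s positive out-degree precisely by selecting $(u,v)$, so the guess is without loss of generality. Getting this reduction and the feasibility of the matching phase exactly right is where the work lies; the branching-factor analysis itself is then immediate.
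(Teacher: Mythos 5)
Your proposal has two problems, one repairable and one fatal. The repairable one: you have the direction of domination backwards. In $(0,1)$-d\eds\ the dominators of an arc $(u,v)$ are $(u,v)$ itself and the arcs \emph{into} $u$ (an arc $(x,u)$ dominates every arc leaving $u$); an arc leaving $v$ dominates arcs leaving its own head, and does \emph{not} dominate $(u,v)$. So the relevant quantity is the \emph{in}-degree of vertices in the solution, which is what the paper's algorithm guesses. Your condition is the correct one for $(1,0)$-d\eds, so in principle this is fixed by reversing all arcs, but everything downstream must then be mirrored consistently.

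The fatal problem is the claim that ``each branch marks one new vertex,'' on which your recurrence $T(\ell)\le 2T(\ell-1)$ rests. The branch ``take $(u,v)$ itself'' is \emph{not} equivalent to giving its endpoint positive degree: positive degree can be realized by a different arc, which does not dominate $(u,v)$. Under your own covering rule ($(u,v)$ is covered once the appropriate endpoint lies in $V^+$), this branch therefore leaves $(u,v)$ uncovered, the loop selects it again, and the algorithm makes no progress at all. If instead you record ``take $(u,v)$'' by marking the arc as forced, then the vertex that thereby acquires positive degree may \emph{already} be marked from an earlier branching step, so $|V^+|$ does not grow in that branch; the natural two-parameter measure (marked vertices and forced arcs, each at most $k$) then only yields roughly $\binom{2k}{k}\approx 4^k$ leaves, not $2^k$. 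This is precisely the difficulty the paper's proof is organized around: it maintains three sets $V_0$, $V^+_F$, $V^+_?$ together with forced-arc marks, and its key ingredient is an exchange argument (Rule 6) showing that if $v$ already has a forced incoming arc, then one may assume without loss of generality that $u$ has positive in-degree, so no branch is needed there at all; the second branching rule then decreases the measure in \emph{both} branches only because Rule 6 fires immediately afterwards. Your proposal contains no analogue of this mechanism. Finally, your polynomial-time completion phase is only asserted (``I expect,'' ``I anticipate''); in the paper that phase relies on structural invariants enforced by the reduction rules (e.g.\ that $V_0\cup V^+$ is a vertex cover and that every vertex of $V^+_?$ has at most one in-neighbor outside $V_0\cup V^+$), so it cannot be taken for granted.
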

\begin{proof}
We give a branching algorithm that marks vertices of $V$. During the branching
process we construct three disjoint sets: $V_0$ contains vertices that will
have in-degree $0$ in the optimal solution; $V^+_F$ contains vertices that have
positive in-degree in the optimal solution and for which the algorithm has
already identified at least one selected incoming arc; and $V^+_?$ contains
vertices that have positive in-degree in the optimal solution for which we have not yet
identified an incoming arc. The algorithm will additionally mark some arcs as
``forced'', meaning that these arcs have been identified as part of the
solution.

Initially, the algorithm sets $V_0=V^+_F=V^+_?=\emptyset$. These sets will remain disjoint during
the branching. We denote $V^+=V^+_F\cup V^+_?$
and $V_r=V\setminus(V_0\cup V^+)$.

Before performing any branching steps we exhaustively apply the following
rules:

\begin{enumerate}

\item If $|V^+|>k$, we reject. This is correct since no solution can have more
than $k$ vertices with positive in-degree.

\item\label{it:2} If there exists an arc $(u,v)$ with $u,v\in V_0$, we reject.
Such an arc cannot be covered without violating the constraint that the
in-degrees of $u,v$ remain $0$.

\item\label{it:3} If there exists a source  $v\in V_r$, we set
$V_0:=V_0\cup\{v\}$. This is correct since a source will obviously have
in-degree $0$ in the optimal solution.

\item\label{it:4} If there exists an arc $(u,v)$ with $u\in V_0$ and $v\not\in
V^+_F$, we set $V^+_F:=V^+_F\cup\{v\}$ and $V^+_?:=V^+_?\setminus\{v\}$. This is
correct since the only way to cover $(u,v)$ is to take it. We mark all arcs
with tail $u$ as forced. 

\item\label{it:5} If there exists an arc $(u,v)$ with $v\in V_0$ and $u\not\in
V^+$, we set $V^+_?:=V^+_?\cup\{u\}$. This is correct, since we cannot cover
$(u,v)$ by selecting it (this would give $v$ positive in-degree).

\item\label{it:6} If there exists an arc $(u,v)$ with $v\in V^+_F$ and $u\in
V_r$ which is not marked as forced, then we set $V^+_?:=V^+_?\cup\{u\}$. We explain the correctness of this
rule below.
\end{enumerate}

The above rules take polynomial time and can only increase $|V^+|$.  We observe
that $V_r$ contains no sources (Rule \ref{it:3}). To see that Rule \ref{it:6}
is correct, suppose that there is a solution in which the in-degree of $u$ is
$0$, therefore the arc $(u,v)$ is taken. However, since $v\in V^+_F$, we have
already marked another arc that will be taken, so the in-degree of $v$ will end
up being at least $2$. Since $u$ is not a source (Rule \ref{it:3}), we replace
$(u,v)$ with an arbitrary incoming arc to $u$.  This is still a valid solution.

The first branching step is the following: suppose that there exists an arc
$(u,v)$ with $u,v\in V_r$. In one branch we set $V^+_?:= V^+_?\cup \{u\}$, and
in the other branch we set $V_0:=V_0\cup\{u\}$ and $V^+_F := V^+_F\cup\{v\}$ and
mark $(u,v)$ as forced. This branching is correct as any feasible solution will
either take an arc incoming to $u$ to cover $(u,v)$, or if it does not, will take
$(u,v)$ itself. In both branches the size of $V^+$ increases by~1.

Suppose now that we have applied all the above rules exhaustively, and that we
cannot apply the above branching step. This means that $(V_0\cup V^+)$ is a
vertex cover (in the underlying undirected graph). If there is a vertex $u\in V^+_?$ that has at least two in-neighbors
$v_1,v_2\in V_r$ we branch as follows: we either set $V^+_?:=
V^+_?\cup\{v_1\}$; or we set $V_0:=V_0\cup\{v_1\}$, $V^+_F:=V^+_F\cup\{u\}$,
and $V^+_?:=V^+_?\setminus\{u\}$ and mark the arc $(v_1,u)$ as forced. This is
correct, since a solution will either take an incoming arc to $v_1$, or the arc
$(v_1,u)$. The first branch clearly increases $|V^+|$. The key observation is
that $|V^+|$ also increases in the second branch, as Rule \ref{it:6} will
immediately apply, and place $v_2$ in $V^+_?$.

Suppose now that none of the above applies.  Because of Rule \ref{it:6} there
are no arcs from $V_r$ to $V^+_F$. Because the second branching Rule does not
apply, and because of Rule \ref{it:4}, each vertex $v\in V^+_?$ only has
in-neighbors in $V^+$ and at most one in-neighbor in $V_r$. For each $v\in V^+_?$
that has an in-neighbor $u\in V_r$ we select $(u,v)$ in the solution; for every
other $v\in V^+_?$ we select an arbitrary incoming arc in the solution; for
each $u\in V^+_F$ we select the incoming arcs that the branching algorithm has
identified. We claim that this is a valid solution. Because of Rule \ref{it:4}
all arcs coming out of $V_0$ are covered, because of Rule \ref{it:2} no arcs
are induced by $V_0$, and because of Rule \ref{it:5} all arcs going into $V_0$
have a tail with positive in-degree in the solution. We have selected in the solution every arc
from $V_r$ to $V^+_?$, and there are no arcs induced by $V_r$, otherwise we would
have applied the first branching rule. All arcs from $V_r$ to $V^+_F$ are marked as forced and we have selected them in the solution. 
Finally, all arcs with tail in $V^+$ are
covered. 

Because of the correctness of the branching rules, if there is a solution, one
of the branching choices will produce it. All rules can be applied in
polynomial time, or produce two branches with larger values of $|V^+|$. Since
this value never goes above $k$, we obtain an $O^*(2^k)$ algorithm. 
\end{proof}

 \subsection{Approximation algorithms}\label{subsec:approx}

We present here constant-factor approximation algorithms for $(0,1)$-d\eds, and $(1,1)$-d\eds. Both algorithms appropriately utilize a \emph{maximal} matching.

\begin{theorem}\label{prop:01approx}
 There are polynomial-time $3$-approximation algorithms for $(0,1)$-d\eds.
\end{theorem}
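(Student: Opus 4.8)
The plan is to exploit the very simple structure of domination when $p=0,q=1$: an arc $(u,v)$ dominates only itself and the arcs leaving $v$. Hence a set $K$ is a valid $(0,1)$-edge dominating set if and only if every arc $(x,y)$ either belongs to $K$ or its tail $x$ is \emph{in-covered}, i.e.\ some arc of $K$ enters $x$. In particular, for every source $s$ (which can never be in-covered) all of its outgoing arcs are \emph{forced}: they must lie in every feasible solution. I denote by $F$ the set of these forced arcs and by $NS$ the set of non-sink vertices.

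The algorithm I would use produces the following feasible solution $A$. First, put all forced arcs $F$ into $A$. Second, for every non-sink vertex $x$ that is not a source (hence has at least one incoming arc), add one arbitrary incoming arc of $x$ to $A$; this in-covers $x$ and therefore dominates every arc leaving $x$. Feasibility is immediate: an arc $(x,y)$ with $x$ a source lies in $F$, while an arc whose tail is a non-source non-sink is dominated because that tail has been in-covered (sinks have no outgoing arcs to dominate). Since we add one arc per non-source non-sink vertex, we get $|A|\le |F|+|NS|$.

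For the lower bound I would prove two inequalities and add them, which is exactly where a maximal-matching/vertex-cover style double counting enters. The first is trivial: since $F$ is contained in every feasible solution, $\mathrm{OPT}\ge |F|$. For the second, take any feasible solution $K$ and charge each non-sink vertex $x$ to a single arc of $K$: if $x$ is in-covered, charge it to an incoming arc of $x$ that lies in $K$; otherwise pick an outgoing arc $(x,y)$, which must itself lie in $K$ (its tail is not in-covered), and charge $x$ to it. An arc $e=(p,q)\in K$ can be charged only by $p$ (as an outgoing arc of $p$) and by $q$ (as an incoming arc of $q$), so at most two non-sinks are charged to it; viewing $K$ as an edge set, each of its arcs covers at most two of the vertices that have to be handled, precisely as matched edges cover two endpoints. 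Hence $|NS|\le 2|K|$, giving $\mathrm{OPT}\ge |NS|/2$. Combining, $|A|\le |F|+|NS|\le \mathrm{OPT}+2\,\mathrm{OPT}=3\,\mathrm{OPT}$.

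The main obstacle, and the reason the ratio is $3$ rather than $2$, is that the forced arcs out of sources cannot be shared and are counted with weight one in the construction while the in-covering arcs effectively cost a factor of two; the two bounds $\mathrm{OPT}\ge|F|$ and $\mathrm{OPT}\ge|NS|/2$ must therefore be kept separate and summed rather than merged into a single matching bound. The delicate point in the charging is an arc of $K$ that simultaneously in-covers its head and serves as the private outgoing arc of its tail, which is exactly what forces the constant $2$ in $|NS|\le 2|K|$. One must also double-check that $A$ dominates \emph{all} arcs, in particular the forced arcs themselves, which are self-dominated as members of $A$, and that adding only one incoming arc per non-source non-sink vertex genuinely suffices.
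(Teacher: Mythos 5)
Your proof is correct, and it takes a genuinely different and substantially simpler route than the paper's. The paper's algorithm runs in three phases: it takes all arcs out of sources ($K_1$), then one incoming arc for each not-yet-covered vertex that points into a sink ($K_2$), and finally computes a \emph{maximal matching} $M$ in the residual digraph of still-undominated arcs, adding $M$ plus one incoming arc for each tail of $M$ and for each unmatched non-sink of the residual graph (the set $I^+$); its lower bound is correspondingly split into $|K_1|+|K_2|+|I^+|\le \mathrm{OPT}$, proved by exhibiting \emph{disjoint} witness arcs inside one fixed optimal solution (which in turn needs a claim that $I^+$ is independent and has no arcs from sources or into sinks), and, separately, $|M|\le \mathrm{OPT}$, yielding $|K|\le |K_1|+|K_2|+2|M|+|I^+|\le 3\,\mathrm{OPT}$. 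You dispense with the matching and the residual graph altogether: your solution is the forced arcs plus one incoming arc per non-source non-sink vertex, and your lower bound pairs the trivial $\mathrm{OPT}\ge|F|$ with the charging inequality $|NS|\le 2\,\mathrm{OPT}$, where each arc of a feasible solution absorbs at most two charges (its head, which it in-covers, and its tail, as that vertex's private outgoing arc when that tail is not in-covered). Your feasibility characterization for $p=0,q=1$ (an arc is dominated iff it is selected or its tail is in-covered) is exactly right, and summing the two independent bounds $|F|\le\mathrm{OPT}$ and $|NS|\le 2\,\mathrm{OPT}$ is legitimate, so the factor $3$ follows. What the paper's heavier machinery buys is not a better worst-case ratio but a lazier algorithm---it only pays for vertices whose outgoing arcs are genuinely still undominated---and a construction template that it reuses for the $8$-approximation of $(1,1)$-d\eds; your argument, by contrast, is tied to the one-sidedness of $(0,1)$-domination (for $p=q=1$ source arcs are no longer forced and a single arc can satisfy more than two vertices, so both of your lower bounds would break), but what it buys here is brevity and the clean standalone inequality $|NS|\le 2\,\mathrm{OPT}$, which is in the same spirit as the paper's Lemma~\ref{vc:01EDS} yet is obtained without any appeal to matchings or vertex covers.
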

\begin{proof}
Let $G=(V,E)$ be an input directed graph. 
We partition $V$ into $(S,R,T)$ so that $S$ and $T$ are the sets of sources and sinks, respectively, and $R=V\setminus (S\cup T)$.
A $(0,1)$-edge dominating set $K$ is constructed as follows. 
\begin{enumerate}
\item Add the arc set $\delta^+(S)$ to $K$. 
\item For each vertex of $v\in (R\cap N^-(T))\setminus N^+(S)$, choose precisely one arc from $\delta^-(v)$ and add it to $K$. In other words, as long as there exists a vertex $v$ for which we have not yet selected any of its incoming arcs and which has an outgoing arc to a sink, we select arbitrarily an arc coming into $v$.

\item Let $G'=(R,E')$ be the subdigraph of $G$ whose arc set consists of arcs
not $(0,1)$-dominated by $K$ thus far constructed. Let $M$ be a set of arcs in $G'$, each corresponding to an edge of a maximal matching in the underlying undirected graph of $G'$ (using either direction) and $V(M)$ be the set of vertices
touched by $M$.
Let $M^-$ be the tails of the arcs in $M$ and let $I^+$ be the set of unmatched vertices $v$ which are not sinks in $G'$, that is, $v\in R\setminus
V(M)$ and $\delta^+_{G'}(v)\neq \emptyset$.  To $K$, we add all arcs of $M$, an
arbitrary incoming arc of $v$ for every $v\in M^-$, and an arbitrary incoming
arc of $v$ for every $v\in I^+$. 
\end{enumerate}

The above construction can be carried out in polynomial time. Furthermore, in
all steps where we add an arbitrary arc to a vertex $u$, we have $u\not\in S$,
therefore such an arc exists. Let us first observe that the constructed
solution is feasible.  Let $K_1$, $K_2$ and $K_3$ be the set of arcs added to
$K$ at step 1, 2 and 3, respectively.  $K_1$ contains all arcs incident on $S$,
so all these arcs are covered. For each arc $(u,v)$ with $v\in T$ we have
selected an arc going into $u$ to be put into $K_2$, so $(u,v)$ is covered.  Finally, for
each arc $(u,v)$ with $u,v\in R$ we consider the following cases: If $u\in
V(M)$ and $u$ is the head of an arc of $M$, then $(u,v)$ is covered since we
selected all arcs of $M$; if $u\in V(M)$ and $u$ is a tail of an arc in $M$
then $K_3$ contains an arc going into $u$, so $(u,v)$ is covered;  if $u\not\in
V(M)$ then $u\in I^+$, so we have selected an arc going into $u$. In all cases
$(u,v)$ is covered.

Let us now argue about the approximation ratio. Fix an optimal solution
$OPT_{(0,1)dEDS}$. First, note for $K_1=\delta^+(S)$ we must have $K_1\subseteq
OPT_{(0,1)dEDS}$, because the only arc that can $(0,1)$-dominate an arc of
$\delta^+(S)$ is itself. Let $OPT_2=OPT_{(0,1)dEDS}\setminus K_1$. 

Consider the set $R'=(R\cap N^-(T))\setminus N^+(S)$. We claim that for each
$v\in R'$ the set $OPT_2$ contains either at least one arc of $\delta^-(v)$ or
all arcs with tail $v$ and head in $T$. Let $OPT_2'$ be a set of arcs
constructed by selecting for each $v\in R'$ a distinct element of $OPT_2\cap
\delta^-(v)$, or if no such element exists all the arcs $(v,t)\in OPT_2$ with
$t\in T$. We have $|OPT_2'|\ge |K_2|$ because all vertices of $R'$ have an
out-neighbor in $T$. Let $OPT_3=OPT_2\setminus OPT_2'$. 

We will now argue that $|OPT_3|\ge|I^+|$. We first observe that any (optimal)
solution must contain at least one arc of $\delta_G^-(v)\cup \delta_G^+(v)$ for
every $v\in I^+$.  In order to justify step 3, the following claim provides a
key observation.

\begin{claim}
It holds that $\delta(S,I^+)=\delta(I^+,T)=\emptyset$. Furthermore $I^+$ is an independent set in the underlying undirected graph of $G$. 
\end{claim}
\begin{proof}
If there is an arc from $s\in S$ to $v\in I^+$ then $(s,v)\in K_1$, which
implies that all arcs coming out of $v$ are dominated by $K_1$. This means that
$v$ is a sink in $G'$, which is a contradiction. If there is an arc from $v\in
I^+$ to $t\in T$ then there is an arc going into $v$ that belongs to $K_2$,
which again makes $v$ a sink in $G'$, contradiction. Therefore,
$\delta(S,I^+)=\delta(I^+,T)=\emptyset$.

Suppose that $I^+$ is not an independent set in $G$ and let $(u,v)$ be an arc
with $u,v\in I^+$. However, $M$ is maximal and $u,v$ are unmatched, which
implies that the arc $(u,v)$ does not appear in $G'$. This means that either
$(u,v)\in K_2$, which makes $v$ a sink in $G'$, or an arc going into $u$
belongs in $K_1\cup K_2$, which makes $u$ a sink in $G'$. In both cases we have
a contradiction.  \end{proof}

Let us now use the above claim to show that $|OPT_3|\ge |I^+|$. First, observe
that $I^+\cap R'=\emptyset$, as all vertices of $R'$ are sinks in $G'$.
Furthermore, all arcs of $OPT_2'$ have their heads in $R'\cup T$, hence none of
them have their heads in $I^+$. Similarly, no arc of $K_1$ has its head in
$I^+$, because this would make its head a sink in $G'$. Therefore, all arcs
with tail in $I^+$ that exist in $G'$ are dominated by $OPT_3$. We now observe
that since $I^+$ is an independent set, no arc of $OPT_3$ can dominate two arcs
with tails in $I^+$. Therefore, $|OPT_3|\ge |I^+|$.
 We now have 

\[|K_1|+|K_2|+|I^+|\leq |K_1|+|OPT_2'|+|OPT_3| \leq |OPT_{(0,1)dEDS}|.\] 

In order to $(0,1)$-dominate the entire arc set $M$, one needs to take at least
$|M|$ arcs, because $M$ corresponds to a matching in the underlying undirected graph and we thus have $|OPT_{(0,1)dEDS}|\ge |M|$.
We also recall the definition of $K_3$: it contains all arcs of $M$, one
arbitrary incoming arc of each $v\in M^-$, and an arbitrary incoming
arc of each $v\in I^+$.
We therefore deduce \[|K|\leq |K_1|+|K_2|+2|M|+|I^+| \leq
3|OPT_{(0,1)dEDS}|.\]\end{proof}

\begin{theorem}\label{thm_approx}
 There is a polynomial-time $8$-approximation algorithm for $(1,1)$-d\eds.
\end{theorem}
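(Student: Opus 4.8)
The plan is to reuse the two lower bounds that drove the $(0,1)$ case and then to cover the graph by activating the endpoints of a maximal matching. The first lower bound is structural: in any feasible solution $K$, every arc $(a,b)$ is $(1,1)$-dominated, which means $b$ is the tail of some arc of $K$, or $a$ is the head of some arc of $K$, or $(a,b)\in K$; in every case at least one endpoint of $(a,b)$ is an endpoint of an arc of $K$. Hence the (at most $2|K|$) endpoints of $K$ form a vertex cover of the underlying undirected graph, so every matching, and in particular a maximal matching $M$ that the algorithm computes, satisfies $|M|\le 2\,OPT_{(1,1)dEDS}$. The second lower bound is that the source-to-sink arcs $\delta(S,T)$ are forced: such an arc can only be $(1,1)$-dominated by itself, so $|\delta(S,T)|\le OPT_{(1,1)dEDS}$.

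The key covering observation I would exploit is that if a vertex $w$ is given both positive in-degree and positive out-degree in our solution (we add one incoming and one outgoing arc, which exist precisely when $w$ is neither a source nor a sink), then \emph{every} arc incident to $w$ is dominated: an incoming arc $(a,w)$ because $w$ then has positive out-degree, and an outgoing arc $(w,b)$ because $w$ then has positive in-degree. I would therefore partition $V$ into sources $S$, sinks $T$, and the remainder $R$, and proceed as follows. First add $\delta(S,T)$. Next, for every $b\in R$ receiving an arc from a source, out-activate $b$ (add an outgoing arc), and for every $a\in R$ sending an arc to a sink, in-activate $a$ (add an incoming arc); this dominates all of $\delta(S,R)$ and $\delta(R,T)$. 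Finally, on the internal arcs of $R$ that remain undominated, compute a maximal matching $M$, fully activate both endpoints of every matching edge, and for each unmatched vertex still carrying an undominated incident arc add the single activation (incoming or outgoing) that covers it.

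Correctness would then follow by a case analysis over the arc classes: arcs leaving $S$ or entering $T$ are handled by the forced set and the two activation passes, while an internal arc of $R$ has, by maximality of $M$, at least one endpoint that is matched (hence fully activated) or that we explicitly activated, using that the unmatched vertices of the residual graph form an independent set. Every arc we add genuinely exists because we only ever out-activate non-sinks and in-activate non-sources.

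The main obstacle is the accounting. The backbone of the solution is the full activation of the matched vertices, costing at most $4|M|\le 8\,OPT_{(1,1)dEDS}$, but I must show that the forced arcs together with all the boundary and unmatched-vertex activations do not push the total past $8\,OPT_{(1,1)dEDS}$. As in the $(0,1)$ proof, I would do this by charging these extra arcs to $OPT$ \emph{injectively}: to each vertex we in-activate (resp.\ out-activate) outside the matching I assign a distinct witness arc of $OPT$ incident to it with the correct orientation, the assignment being injective precisely because distinct unmatched vertices are non-adjacent, so no single arc of $OPT$ can witness two of them. Making these charges disjoint across the four categories (forced, source-side, sink-side, and matching), so that the constants sum to exactly $8$ rather than more, is the delicate part; the saving grace I expect to use is that matching edges incident to a source or a sink cannot be fully activated, which frees exactly the budget needed to absorb the forced and boundary arcs.
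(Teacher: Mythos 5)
Your algorithm is essentially the paper's: add the forced arcs $\delta(S,T)$, out-activate $R\cap N^+(S)$, in-activate $R\cap N^-(T)$, then take a maximal matching $M$ on the residual arcs and fully activate its endpoints. Your feasibility argument is sound; in fact your final step for unmatched vertices is vacuous, since by maximality every residual arc has a matched endpoint, and full activation of the matched endpoints already dominates it. The genuine gap is in the accounting, exactly where you flagged it, and the repair you propose does not work. The residual digraph $G'$ lives entirely on $R$ (after the first three steps every arc incident to $S\cup T$ is dominated), so no matched vertex is ever a source or a sink; your ``saving grace'' --- that matching edges incident to sources or sinks cannot be fully activated --- never occurs and frees no budget. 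Moreover, injective charging across categories genuinely fails: a single arc $(r,r')\in OPT$ with $r\in R\cap N^+(S)$ and $r'\in R\cap N^-(T)$ can simultaneously witness the out-activation of $r$, the in-activation of $r'$, and (if $r,r'$ are matched) up to two matching edges. At a cost of $4$ arcs per matching edge, a single optimal arc can thus absorb a charge of $4\cdot 2+1+1=10$, so your analysis as stated yields only a $10$-approximation.

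The missing idea is that fully activating both endpoints of a matching arc $(u,v)$ costs $3$ arcs, not $4$: the arc $(u,v)$ itself serves at once as the out-arc of $u$ and the in-arc of $v$, so it suffices to add one arc of $\delta_G^-(u)$ and one of $\delta_G^+(v)$. This is what the paper does, and it makes any disjoint-charging argument unnecessary. Writing $f=|\delta(S,T)|$, the paper uses three independent lower bounds: $|R\cap N^+(S)|\le OPT-f$ and $|R\cap N^-(T)|\le OPT-f$ (for each such $r$ the witness set $\{(s,r):s\in S\}\cup\delta^+(r)$ is disjoint over distinct $r$ and disjoint from $\delta(S,T)$), and $|M|/2\le OPT-f$ (each optimal arc dominates at most two arcs of $M$, while forced arcs dominate nothing besides themselves). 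Then one simply adds:
\begin{equation*}
|K|\;\le\; f+|R\cap N^+(S)|+|R\cap N^-(T)|+3|M|\;\le\; f+(OPT-f)+(OPT-f)+6(OPT-f)\;=\;8\,OPT-7f\;\le\;8\,OPT,
\end{equation*}
with no requirement that the three charges avoid one another. So your construction and lower bounds are the right ones, but the ratio of $8$ is reachable only after trimming the per-edge cost from $4$ to $3$.
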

 \begin{proof}
 Let $G=(V,E)$ be an input directed graph. We partition $V$ into $(S,R,T)$ so
that $S$ and $T$ are the sets of sources and sinks, respectively, and
$R=V\setminus (S\cup T)$.

 We construct an $(1,1)$-edge dominating set $K$ as follows. 
 \begin{enumerate}
 \item Add the arc set $\delta(S,T)$ to $K$. 
 \item For each vertex of $v\in R\cap N^+(S)$, choose precisely one arc from $\delta^+(v)$ and add it to $K$.
 \item For each vertex of $v\in R\cap N^-(T)$, choose precisely one arc from $\delta^-(v)$ and add it to $K$.
 \item Let $G'=(R,E')$ be the subdigraph of $G$ whose arc set consists of those arcs not $(1,1)$-dominated by $K$ thus far constructed. Let $M$ be a set of arcs in $G'$, each corresponding to an edge (any direction) of a maximal matching in the underlying 
 graph of $G'$. Let $M^-$ and $M^+$ be the tails and heads of the arcs in $M$, respectively. 
 To $K$, we add all arcs of $M$, an arc of $\delta_G^-(v)$ for every $v\in M^-$, and also an arc of $\delta_G^+(v)$  for every $v\in M^+$. 
 \end{enumerate} 
 Clearly, the algorithm runs in polynomial time. In particular, for any vertex $v$ considered in Steps 2-4, both $\delta^+(v)$ and $\delta^-(v)$ are non-empty and choosing an arc from a designated set is always possible.  We show that $K$ is indeed an $(1,1)$-edge dominating set.
 Suppose that an arc $(u,v)$ is not $(1,1)$-dominated by $K$. As the first, second and third step of the construction ensures 
 that any arc incident with $S\cup T$ is $(1,1)$-dominated, we know that $(u,v)$ is contained in the subdigraph $G'$ constructed at step 4. 
 For $(u,v)\notin M$ and $M$ corresponding to a maximal matching, one of the vertices $u,v$ must be incident with $M$. 
 Without loss of generality, we assume $v$ is incident with $M$ (and the other cases are symmetric). If $v\in M^-$, then clearly the arc $e\in M$ 
 whose tail coincides with $v$ would $(1,0)$-dominate $(u,v)$, a contradiction. 
 If $v\in M^+$, then the outgoing arc of $v$ added to $K$ at step 4 would $(1,0)$-dominate $(u,v)$, again reaching a contradiction. 
 Therefore, the constructed set $K$ is a solution to $(1,1)$-d\eds.
 
 To prove the claimed approximation ratio, we first note that $\delta(S,T)$ is contained in any (optimal) solution because any arc of $\delta(S,T)$ can be $(1,1)$-dominated only by itself. Note that 
 these arcs do not $(1,1)$-dominate any other arcs of $G$.
 Further, we have  $|R\cap N^+(S)|\leq OPT_{(1,1)dEDS}-|\delta(S,T)|$ because in order to $(1,1)$-dominate any arc of the form $(s,r)$ with $s\in S$ and $r\in R$, 
 one must take at least one arc from $\{(s,r)\}\cup \delta^+(r)$. Since the sets $\{(s,r):s\in S\}\cup \delta^+(r)$ are disjoint over all $r\in R\cap N^+(S)$, 
 the inequality holds.
 Likewise, it holds that $|R\cap N^-(T)|\leq OPT_{(1,1)dEDS}-|\delta(S,T)|$.
 In order to $(1,1)$-dominate the entire arc set $M$, one needs to take at least $|M|/2$ arcs. This is because an arc $e$ can $(1,1)$-dominate at most two arcs of $M$. 
 That is, we have $|M|/2 \leq  OPT_{(1,1)dEDS}-|\delta(S,T)|$.
 Therefore, it is $|K|\leq |\delta(S,T)|+|R\cap N^+(S)|+|R\cap N^-(T)|+3|M|\leq 8\, OPT_{(1,1)dEDS}$.
 \end{proof}

\subsection{Polynomial kernels}\label{sec_kernel}

We give polynomial kernels for $(1,1)$-d\eds\ and $(0,1)$-d\eds.
We first introduce a relation between the vertex cover number and the size of a minimum $(1,1)$-edge dominating set, shown by \cite{HanakaNO17} (as a corollary to their Lemma~22) and then proceed to show a quadratic-vertex/cubic-edge kernel for $(1,1)$-d\textsc{EDS}.

\begin{lemma}[\cite{HanakaNO17}]\label{vc:11EDS}
Given a directed graph $G$, let $G^*$ be the undirected underlying graph of $G$,  $\vc(G^*)$ be the vertex cover number of $G^*$, and $K$ be a minimum $(1,1)$-edge dominating set in $G$.
Then  $\vc(G^*)\le 2|K|$.
\end{lemma}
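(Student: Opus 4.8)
The plan is to exhibit an explicit vertex cover of $G^*$ whose size is at most twice $|K|$, namely the set of all endpoints of the arcs chosen in $K$. Write $W := V(K)$ for the set of vertices that occur as the tail or head of some arc of $K$; since each arc contributes at most two such vertices, we get $|W| \le 2|K|$ immediately. It then suffices to prove that $W$ is a vertex cover of the underlying undirected graph $G^*$, for then $\vc(G^*) \le |W| \le 2|K|$.

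First I would unfold what $(1,1)$-domination means at the level of endpoints. An arc $e=(a,b)$ $(1,1)$-dominates itself, every arc on a directed path of length at most $1$ ending at $a$, and every arc on a directed path of length at most $1$ starting at $b$. For $p=q=1$ a length-$1$ path ending at $a$ is a single arc with head $a$, and a length-$1$ path starting at $b$ is a single arc with tail $b$; the length-$0$ paths carry no arcs. Hence the arcs $(1,1)$-dominated by $(a,b)$ are exactly $(a,b)$ itself, the incoming arcs of $a$, and the outgoing arcs of $b$. The key structural observation is that in each of these three types the dominated arc is incident to $a$ or to $b$, i.e.\ to a vertex of $W$.

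Now I would take an arbitrary edge $\{x,y\}$ of $G^*$; it arises from an arc of $G$, and up to renaming we may assume this arc is $(x,y)$, the reverse orientation being symmetric. Since $K$ is a $(1,1)$-edge dominating set, $(x,y)$ is $(1,1)$-dominated by some $(a,b)\in K$, and I split into the three cases above. If $(x,y)=(a,b)$ then $x,y\in W$; if $(x,y)$ is an incoming arc of $a$ then $y=a\in W$; if $(x,y)$ is an outgoing arc of $b$ then $x=b\in W$. In every case at least one endpoint of $\{x,y\}$ lies in $W$, so $W$ covers this edge. As $\{x,y\}$ was arbitrary, $W$ is a vertex cover of $G^*$, which completes the argument.

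I do not expect a genuine obstacle here: essentially all the content lies in the second step, namely translating the $(1,1)$-domination relation into the assertion that a dominating arc is always incident to the arc it dominates. The only points needing care are checking that the case analysis is exhaustive and that the chosen orientation of the undirected edge $\{x,y\}$ is immaterial, both of which follow from the symmetry of the three domination types under the exchange of tails and heads.
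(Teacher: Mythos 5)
Your proof is correct. A point of context: the paper does not actually prove this lemma itself --- it imports it from \cite{HanakaNO17} (as a corollary to their Lemma~22) --- so there is no in-paper proof to compare against directly. However, your argument is the natural direct one, and it exactly mirrors the proof the paper \emph{does} give for the companion Lemma~\ref{vc:01EDS} (the $(0,1)$ case): there, only the \emph{heads} of the arcs in $K$ are needed, because a $(0,1)$-dominating arc $(u,v)$ meets every arc it dominates at $v$ (the dominated arc is either $(u,v)$ itself or an outgoing arc of $v$), which yields the stronger bound $\vc(G^*)\le |K|$. In your $(1,1)$ setting, domination can instead occur at either endpoint --- incoming arcs of the tail $a$, outgoing arcs of the head $b$, or the arc itself --- so both endpoints must be placed in the cover, and the bound correspondingly doubles to $2|K|$. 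Your case analysis is exhaustive, your handling of the orientation of the undirected edge is sound, and the key structural observation (a $(1,1)$-dominating arc is incident, in the undirected sense, to every arc it dominates, since length-$1$ directed paths consist of a single arc touching $a$ or $b$) is precisely the content of the lemma.
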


\begin{theorem}\label{prop:kernel11}
There exists an $O(k^2)$-vertex/$O(k^3)$-edge kernel for $(1,1)$-d\eds.
\end{theorem}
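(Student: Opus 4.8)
The plan is to adapt the undirected \textsc{Edge Dominating Set} kernel of \cite{Hagerup12} to the directed $(1,1)$ setting, using Lemma~\ref{vc:11EDS} as the source of a small vertex cover. I would first record the structural fact underlying that lemma: in any solution $K$ the set $W$ of vertices with positive degree in $K$ is a vertex cover of $G^*$, since every arc $(a,b)$ is $(1,1)$-dominated only if $a$ receives a selected arc, $b$ emits a selected arc, or $(a,b)\in K$, and in each case $a$ or $b$ lies in $W$. Hence a yes-instance satisfies $\vc(G^*)\le 2k$. The kernelization therefore begins by computing a maximal matching of $G^*$: if it has more than $2k$ edges we reject (as this forces $\vc(G^*)>2k$), and otherwise its endpoints form a vertex cover $C$ with $|C|\le 4k$. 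Writing $I:=V\setminus C$, the set $I$ is independent in $G^*$, so every arc of $G$ is incident to $C$ and the $O(k^2)$ arcs induced by $C$ are kept verbatim. It then remains to bound $|I|$ by $O(k^2)$; since each vertex of $I$ has at most $|C|=O(k)$ neighbours, the $O(k^3)$ edge bound follows from the vertex bound.

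The reduction rules are driven by the local domination condition relative to $C$: for $v\in I$ and $c\in C$, the arc $(c,v)$ is dominated iff $c$ is in-covered, $v$ is out-covered, or $(c,v)\in K$, and symmetrically for $(v,c)$. I would first handle pendant vertices of $I$ (those with a single neighbour in $C$). For a vertex $c$ with at least two out-pendants $v$ (whose only incident arc is $(c,v)$), an exchange argument shows that some optimum either in-covers $c$ --- in which case all these arcs are dominated simultaneously --- or is forced to select each such arc because $c$ has no incoming arc in $G$ at all. In the first case all but a constant number of out-pendants of $c$ are redundant and can be deleted, while in the second case the number of out-pendants is a lower bound on $OPT_{(1,1)dEDS}$, so it is either at most $k$ (bounding the contribution of $c$) or licenses rejection. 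The symmetric rule applies to in-pendants, and summing over the $O(k)$ vertices of $C$ bounds the surviving pendants by $O(k^2)$. For non-pendant vertices of $I$, I would group them by the pair (or small set) of $C$-vertices they are jointly adjacent to, together with the $O(1)$ possible orientations of the incident arcs; for each such configuration an analogous exchange argument retains only a constant number of representatives, leaving $O(1)$ vertices per pair of $C$-vertices, i.e.\ $O(\binom{|C|}{2})=O(k^2)$ in total.

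The main obstacle is the correctness of these rules, namely proving that the reduced instance admits a $(1,1)$-edge dominating set of size $\le k$ iff the original does (up to a bookkeeping offset recording forced arcs). The delicate points are all contained in the \emph{directed} exchange arguments: when claiming that surplus neighbours of $c\in C$ are redundant, one must preserve not only the in-/out-covering of $c$ but also the domination of every \emph{other} arc routed through $c$, and one must treat separately the degenerate cases in which $c$ is a source or a sink of $G$ and hence can never be in-covered (resp.\ out-covered), so that the relevant incident arcs are genuinely forced into every solution and must be charged against the budget $k$. Once these exchange lemmas are established, each rule runs in polynomial time and strictly decreases $|V|$, so exhaustive application terminates and yields an equivalent instance with $O(k^2)$ vertices and therefore $O(k^3)$ arcs.
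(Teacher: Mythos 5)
Your setup coincides with the paper's: Lemma~\ref{vc:11EDS} plus a maximal matching either rejects ($|M|>2k$) or yields a vertex cover $C$ of size at most $4k$, and the $O(k^3)$ edge bound is immediate once $|V\setminus C|=O(k^2)$. The divergence, and the gap, is in how you bound the independent set. Your rule for non-pendant vertices of $I$ does not work as stated: a vertex of $I$ can have up to $|C|=4k$ neighbours in $C$, so its ``configuration'' is not a pair or small set --- the number of distinct neighbourhood-plus-orientation patterns is exponential, and the count $O\bigl(\binom{|C|}{2}\bigr)$ only accounts for vertices of degree exactly~$2$. It is also unclear which pair a higher-degree vertex would be assigned to, and no exchange argument is given (or apparent) showing that keeping $O(1)$ representatives per pair is sound: whether the arcs of $u\in I$ are dominated depends on the in-/out-covering status of \emph{every} neighbour of $u$ in $C$, not of one chosen pair. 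Your pendant rule can be made to work, but it only treats a special case and is subsumed by the correct general rule.

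The missing idea is a \emph{per-cover-vertex} counting argument, which is what the paper uses and which makes all case distinctions unnecessary. If $v\in C$ has more than $k+1$ incoming arcs, then any feasible solution of size at most $k$ must select an arc with tail $v$: absent such an arc, each arc $(u,v)$ must be dominated by being selected or by a selected arc with head $u$, so each selected arc accounts for at most one incoming arc of $v$ --- impossible within budget $k$. Symmetrically for outgoing arcs. Hence one marks, for each $v\in C$, up to $k+1$ tails of incoming arcs and up to $k+1$ heads of outgoing arcs, and deletes every unmarked vertex of $I$. Correctness in one direction is exactly the forcing above (each arc of a deleted vertex has a cover endpoint that is forced to be out-covered, resp.\ in-covered, hence the arc is automatically dominated); in the other direction, a solution of $G$ that uses an arc incident to a deleted vertex $u$, say $(u,v)$, is repaired by replacing it with an arbitrary arc with head $v$, which dominates all arcs out of $v$ --- the only arcs dominated by $(u,v)$ that survive the deletion of $u$. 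This yields at most $4k\cdot 2(k+1)+4k=O(k^2)$ vertices directly, with no pendant analysis and no pair classes.
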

 \begin{proof}
 Given a directed graph $G$, we denote the underlying undirected graph of $G$ by $G^*$. Let $K$ be a minimum $(1, 1)$-edge dominating set and $\vc(G^*)$ be the size of a minimum vertex cover in $G^*$. 
 First, we find a \emph{maximal} matching $M$ in $G^*$.
 If $|M|>2k$, we conclude this is a no-instance by Lemma~\ref{vc:11EDS} and the well-known fact that $|M|\le \vc(G^*)$, see~\cite{GJ76}. 
 Otherwise, let $S$ be the set of endpoints of edges in $M$. 
 Then $S$ is a vertex cover of size at most $4k$ for the underlying undirected graph of $G$ and $V\setminus S$ is an independent set.
  
 We next explain the reduction step.
 For each $v\in S$, we arbitrarily mark the first $k+1$ tail vertices of incoming arcs of $v$  with ``in'' (or all, if the in-degree of $v$ is $\le k$) and also arbitrarily the first $k+1$ head vertices of outgoing arcs of $v$ with ``out'' (or all, if the out-degree of $v$ is $\le k$).
 After this marking, if there exists a vertex $u\in V\setminus S$ without marks ``in'', ``out'', we can delete it.
 
We next show the correctness of the above.  First, we can observe that  if some
$v\in S$ has more than $k+1$ incoming arcs, then any feasible solution of size
at most $k$ must select an arc with tail $v$.  Similarly, if $v\in S$ has more
than $k+1$ outgoing arcs, any feasible solution of size at most $k$ must select
an arc with head $v$. Consider now an unmarked vertex $u$ and suppose that it
is the tail of an arc $(u,v)$ with $v\in S$ (the case where $u$ is the head is
symmetric). The vertex $v$ has $k+1$ other incoming arcs, besides $(u,v)$,
otherwise $u$ would have been marked. Therefore, in any solution of size at
most $k$ in the graph where $u$ has been deleted we must select an arc coming
out of $v$. This arc dominates $(u,v)$. Therefore, any feasible solution of the
new graph remains feasible in the original graph. For the other direction,
suppose a solution for the graph $G$ selects the arc $(u,v)$. We consider the
same solution without $(u,v)$ in the graph where $u$ is deleted. If this is
already feasible, we are done. If not, any non-dominated arc must have $v$ as
its tail (every other arc dominated by $(u,v)$ has been deleted). All these
arcs can be dominated by adding to the solution an arc going into $v$. Note also that any deleted vertex $u\in V\setminus S$ is only connected to vertices in $S$, since $S$ is a vertex cover and the above thus accounts for all possibly deleted arcs.
 
After exhaustively applying the above rule every vertex of the independent set
will be marked. We mark at most $2(k+1)$ vertices of the independent set for
each of the at most $4k$ vertices of $S$, so we have a total of at most
$8k^2+12k$ vertices.  Moreover, there exist at most $4k\cdot
(8k^2+8k)=32k^3+32k^2$ arcs between the sets of the vertex cover and the
independent set.  Therefore, the number of arcs in the reduced graph is at most
${{4k}\choose{2}}+32k^3+32k^2=32k^3 + 32k^2 + 2k(4k-1)= O(k^3)$.  \end{proof}

 Next, we  note that
 the size of a minimum $(0,1)$-edge dominating set is equal to, or greater than the size of a minimum $(1,1)$-edge dominating set.
 Thus, we have $|M|\le \vc(G^*)\le  2|K|$ where $K$ is a $(0,1)$-edge dominating set and $M$ is a maximal matching.
 We give a more strict relation between $\vc$ and the size of a minimum $(0,1)$-edge dominating set, however, that is then used to obtain Theorem~\ref{prop:kernel01}.

\begin{lemma}\label{vc:01EDS}
Given a directed graph $G$, let $G^*$ be the undirected underlying graph of $G$,  $\vc(G^*)$ be the vertex cover number of $G^*$, and $K$ be a minimum $(0,1)$-edge dominating set in $G$.
Then  $\vc(G^*)\le |K|$.
\end{lemma}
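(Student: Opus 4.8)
The plan is to show that for a minimum $(0,1)$-edge dominating set $K$, the set of vertices incident to arcs of $K$ forms a vertex cover of $G^*$, which would immediately give $\vc(G^*)\le 2|K|$; but since we want the sharper bound $\vc(G^*)\le|K|$, I would instead exhibit a vertex cover consisting of a \emph{single well-chosen endpoint} from each arc of $K$. The natural candidate is to take, for each arc $(u,v)\in K$, only its head $v$. First I would argue that $V(K^{\mathrm{heads}}):=\{v:(u,v)\in K\}$ covers every arc of $G$. Take any arc $(x,y)\in E$. Since $K$ is a $(0,1)$-edge dominating set, $(x,y)$ is $(0,1)$-dominated by some arc of $K$. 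By the definition of $(0,1)$-domination, an arc $e=(u,v)$ dominates only itself and arcs on a directed path of length at most $1$ \emph{from} $v$ (the backward reach is $0$ since $p=0$).

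So the dominating arc $(u,v)\in K$ either equals $(x,y)$, in which case $y=v$ is a head in $K$ and hence $y\in V(K^{\mathrm{heads}})$ covers $(x,y)$; or $(x,y)$ lies on a length-$\le 1$ path starting at $v$, meaning $x=v$, so again $x=v\in V(K^{\mathrm{heads}})$ is an endpoint of $(x,y)$ lying in the set. In both cases one endpoint of $(x,y)$ belongs to $\{v:(u,v)\in K\}$, so this vertex set is a vertex cover of $G^*$. Since its cardinality is at most $|K|$ (one head per arc, possibly with repetitions that only help), we conclude $\vc(G^*)\le|K|$.

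The one subtlety I would be careful about is that distinct arcs of $K$ may share a head, so the map $e\mapsto(\text{head of }e)$ need not be injective; but this only makes $|\{v:(u,v)\in K\}|\le|K|$, which is exactly the direction we need. I do not expect a genuine obstacle here: the whole content is the observation that with $p=0$ the dominating arc $(u,v)$ forces the dominated arc to be incident to its head $v$ (either as the same arc or as an outgoing arc of $v$), so heads alone suffice as a cover. This is strictly stronger than the $(1,1)$ case, where the backward reach $p=1$ would force us to also use tails, explaining why there we only obtained $\vc(G^*)\le 2|K|$.
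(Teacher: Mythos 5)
Your proof is correct and follows exactly the paper's approach: both take the set of heads of arcs in $K$ and observe that, since with $p=0$ a dominating arc $(u,v)$ can only dominate itself or arcs leaving $v$, every arc of $G$ has an endpoint equal to such a head, so the heads form a vertex cover of size at most $|K|$. Your write-up just spells out the case analysis that the paper's terse proof leaves implicit.
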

\begin{proof}
For an arc $(u,v)$, the head vertex $v$ covers all arcs (i.e., edges) dominated by $(u,v)$ in $G^*$. 
Since $K$ dominates all edges in $G$, the set of head vertices of $K$ is a vertex cover in $G^*$.
Thus, $\vc(G^*)\le |K|$.
\end{proof}


\begin{theorem}\label{prop:kernel01}
There exists an $O(k)$-vertex/$O(k^2)$-edge kernel for $(0,1)$-d\eds.
\end{theorem}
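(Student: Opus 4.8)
The plan is to follow the structure of the $(1,1)$-d\eds\ kernel from Theorem~\ref{prop:kernel11}, but to exploit the sharper bound $\vc(G^*)\le|K|$ from Lemma~\ref{vc:01EDS} and the directional asymmetry of $(0,1)$-domination to get a linear vertex count. First I would compute a maximal matching $M$ in the underlying undirected graph $G^*$. Since $|M|\le\vc(G^*)\le|K|\le k$ by Lemma~\ref{vc:01EDS}, if $|M|>k$ we reject; otherwise the set $S$ of endpoints of $M$ is a vertex cover of size at most $2k$, and $V\setminus S$ is an independent set. This already halves the vertex-cover size compared to the $(1,1)$ case, which is what lets us aim for $O(k)$ rather than $O(k^2)$ vertices.

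The key difference to exploit is that in $(0,1)$-domination an arc $(u,v)$ only dominates arcs reachable \emph{forward} from $v$ (distance $\le 1$) and the arc itself; there is no backward ($p$) coverage. The marking rule should therefore be one-sided. For each $v\in S$ I would mark the tails of the first $k+1$ incoming arcs of $v$ (or all of them if the in-degree is at most $k$). The justification is the analogue of the $(1,1)$ argument: if $v$ has more than $k+1$ incoming arcs, then any solution of size $\le k$ cannot cover all of them by selecting each individually (that would already exceed $k$ arcs), so it must instead select an arc \emph{coming out of} $v$, which $(0,1)$-dominates every incoming arc of $v$ simultaneously. Consequently, an unmarked independent-set vertex $u$ that is a tail of some arc $(u,v)$ with $v\in S$ is safe to delete, since $(u,v)$ will be dominated by whatever outgoing arc of $v$ the solution is forced to take.

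The correctness argument for deletion mirrors the earlier proof and runs in both directions. Forward: a feasible solution in the reduced graph stays feasible in $G$, because each deleted vertex's single relevant arc is covered by a forced outgoing arc of its $S$-neighbor. Backward: given a solution for $G$ that selects some $(u,v)$ with $u$ deleted, dropping $(u,v)$ leaves only arcs with tail $v$ undominated (every other arc $(0,1)$-dominated by $(u,v)$ had endpoint $u$ and is gone), and these are all recovered by adding one arc out of $v$. Since $S$ is a vertex cover, every arc incident to a deleted vertex has its other endpoint in $S$, so no arcs are missed. The main obstacle, and the point requiring care, is the direction of the marking: because $(0,1)$-domination is forward-only, marking must track \emph{incoming} arcs at $S$-vertices (so that a single outgoing arc of $v$ handles all of them), and one must verify that a deleted independent-set vertex has no outgoing arc requiring separate coverage — here the source-elimination intuition of Theorem~\ref{thm_2branching} is relevant, since an arc out of a deleted vertex would need to be dominated too. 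One must confirm the deletion rule is only applied to vertices whose incident arcs are all thereby accounted for.

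Finally I would count. Each of the at most $2k$ vertices of $S$ contributes at most $k+1$ marked independent-set vertices, giving $O(k)$ vertices in $V\setminus S$ and hence $O(k)$ vertices overall. For the edge bound, there are $\binom{2k}{2}=O(k^2)$ arcs inside $S$ and at most $2k\cdot O(k)=O(k^2)$ arcs between $S$ and the surviving independent set, for a total of $O(k^2)$ edges, as claimed.
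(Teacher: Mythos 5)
There is a genuine gap, in fact two. First, your marking rule rests on a false domination claim: you assert that when $v\in S$ has more than $k+1$ incoming arcs, a small solution ``must instead select an arc coming out of $v$, which $(0,1)$-dominates every incoming arc of $v$ simultaneously.'' But with $p=0$ there is \emph{no} backward coverage at all: an arc $(v,w)$ $(0,1)$-dominates only itself and arcs of the form $(w,x)$; it dominates no incoming arc of $v$. An incoming arc $(u,v)$ can only be dominated by itself or by an arc whose head is $u$. (The corresponding rule in the $(1,1)$ kernel of Theorem~\ref{prop:kernel11} works precisely because $p=1$ gives arcs with tail $v$ backward reach; that is what you lose here.) Consequently the deletion of unmarked tails is not justified by any ``forced outgoing arc of $v$,'' and the whole reduction rule collapses. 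Second, even if the marking were repaired (the direction-correct analogue marks heads of outgoing arcs of $v$, which are dominated en masse by any single arc into $v$), your counting is wrong: at most $2k$ vertices of $S$, each retaining up to $k+1$ marked neighbors, gives $O(k^2)$ vertices, not $O(k)$. A marking scheme of this type is structurally incapable of producing a linear-vertex kernel; it reproduces exactly the quadratic bound of the $(1,1)$ case.

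The paper's proof uses different ideas that your proposal is missing. After the same matching step, it (i) preprocesses away every arc from a source to a sink, decrementing $k$, since such an arc dominates nothing and is dominated only by itself; (ii) rejects if more than $k$ vertices of the independent set $V\setminus S$ have positive out-degree, because their outgoing arcs have distinct tails in an independent set and no single arc can $(0,1)$-dominate two of them — this bounds the non-sinks of $V\setminus S$ by $k$ outright, with no per-vertex marking; and (iii) handles the (possibly numerous) sinks of $V\setminus S$ not by deletion but by \emph{merging} them all into one new vertex $u$, adding an arc $(v,u)$ for every $v\in S$ that had an arc to a deleted sink. This merging preserves the constraint that each arc $(v,t)$ into a sink forces either that arc or an arc into $v$, which deletion alone would destroy. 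The result is at most $2k+k+1=3k+1$ vertices and $O(k^2)$ arcs. Your proposal contains neither the rejection rule (ii) nor the sink-contraction (iii), and these are exactly the steps that make a linear vertex bound possible.
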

\begin{proof}
Our first reduction rule states that if there exists an arc $(s,t)$ where $s$
is a source ($d^-(s)=0$) and $t$ is a sink ($d^+(t)=0$) then we delete this arc
and set $k:=k-1$. This rule is correct because the only arc that dominates
$(s,t)$ is the arc itself, and $(s,t)$ does not dominate any other arc. In the
remainder we assume that this rule has been applied exhaustively.

We then find a \emph{maximal} matching $M$ in the underlying undirected graph. If
$|M|>k$, then by Lemma~\ref{vc:01EDS} we conclude that we can reject.
Otherwise, the set of vertices incident on $M$, denoted by $S$ is a vertex
cover of size at most $2k$ and $V\setminus S$ is an independent set.

Now, suppose that there exist $k+1$ vertices in $V\setminus S$ with positive
out-degree. This means that there exist $k+1$ arcs with distinct tails in
$V\setminus S$, and heads in $S$. No arc of the graph dominates two of
these arcs (since $V\setminus S$ is independent), therefore any feasible
solution has size at least $k+1$ and we can reject. 

We can therefore assume that the number of non-sinks in $V\setminus S$ is at
most $k$.  We will now bound the number of sinks.  Let $T$ be the set of sinks,
that is, $T$ contains all vertices $v$ for which $d^+(v)=0$. We edit the graph
as follows: delete all vertices of $T\setminus S$; add a new vertex $u$
which is initially not connected to any vertex; and then for each vertex $v\in
S$ such that there is an arc $(v,t)$ with $t\in T\setminus S$ in $G$ we
add the arc $(v,u)$.  We claim that this is an equivalent instance.

Before arguing for correctness, we observe that the new instance has at most $3k+1$
vertices: $S$ has at most $2k$ vertices, $V\setminus S$ has at most $k$
non-sinks, and all sinks of $V\setminus S$ have been replaced by $u$.  This
graph clearly has $O(k^2)$ edges.

Let $G$ be the original graph and $G'$ the graph obtained after replacing all
sinks in the independent set with the new vertex $u$.  Consider an optimal
solution in $G$.  If the solution contains an edge $(v,t)$ where $t\in
V\setminus S$ is a sink, then we know that $v$ is not a source (otherwise we
would have simplified the instance by deleting $(v,t)$). We edit the solution
by replacing $(v,t)$ with an arbitrary arc incoming to $v$. Repeating this
gives a solution which does not include any arc whose head is a sink of
$V\setminus S$, but for each such arc $(v,t)$ contains an arc going into
$v$.  This is therefore a valid solution of $G'$, as it dominates all arcs
going into $u$.  For the converse direction we similarly edit a solution to
$G'$ by replacing any arc $(v,u)$ with an arbitrary arc going into $v$ (again,
we can safely assume that such an arc exists). The result is a valid solution
for $G$ with the same size.  \end{proof}

\section{Treewidth}\label{sec_whard_tw}

In this section we characterize the complexity of $(p,q)$-d\eds\ parameterized
by the treewidth of the underlying graph of the input. Our main result is that, even though the problem is FPT when
parameterized by $p+q+\tw$, it becomes W[1]-hard if parameterized only by $\tw$
(in fact, also by $\pw$), even if we add the size of the optimal solution as a
second parameter.  The algorithm is based on standard dynamic programming
techniques, while for hardness we reduce from the well-known W[1]-complete
\textsc{$k$-Multicolored Clique} problem (\cite{FellowsHRV09}).

\subsection{Hardness for Treewidth}
 
\paragraph{Construction:} Before we proceed, let us define a more general
version of $(p,q)$-d\eds\ which will be useful in our reduction. Suppose that
in addition to a digraph $G=(V,E)$ we are also given as input a subset
$I\subseteq E$ of ``optional'' arcs. In \textsc{Optional} $(p,q)$-d\eds\ we are
asked to select a minimum set of arcs that dominate all arcs of $E\setminus I$,
meaning it is not mandatory to dominate the optional arcs. We will describe a
reduction from $k$-\textsc{Multicolored Clique} to a special instance of
\textsc{Optional} $(p,q)$-d\eds, and then show how to reduce this to the
original problem without significantly modifying the treewidth or the size of
the optimum.

Given an instance $[G=(V,E),k]$ of \textsc{$k$-Multicolored Clique}, with
$V=\bigcup_{i\in[1,k]}V_i$ and $V_i=\{v^i_0,\dots,v^i_{n-1}\}$, where we assume that $n$ is even (without loss of generality) we will construct an instance
$G'=(V',E')$ of \textsc{Optional} $(p,q)$-d\eds.  We set $p=q=3n$. We
begin by adding to $V'$ all vertices of $V$ and connecting each set $V_i$ into
a directed cycle of length $n$.  Concretely, we add the arcs $(v^i_j,
v^i_{j+1})$ for all $i\in[1,k]$ and $j\in[0,n-1]$ where addition is performed
modulo $n$.

Intuitively, the idea up to this point is that selecting the vertex $v^i_j$ in
the clique is represented in the new instance by selecting the arc of the cycle
induced by $V_i$ whose head is $v^i_j$.  In order to make it easier to prove
that the optimal solution will be forced to select one arc from each directed
cycle we add to our instance the following: for each $i\in[1,k]$ we construct a
directed cycle of length $5n+1$ and identify one of its vertices with
$v^i_{n/2}$.  We call these $k$ cycles the ``guard'' cycles.

Finally, we need to add some gadgets to ensure that the arcs selected really
represent a clique. For each pair of vertices of $G$, $v^i_a, v^j_b$ which are
not connected by an edge in $G$ we do the following (depending on the values of $a,b$): we first construct two new
vertices $e_{i,j,a,b}, f_{i,j,a,b}$ and an arc $(e_{i,j,a,b},f_{i,j,a,b})$
connecting them. Then for the ``forward'' paths, if $a>0$ we construct a directed path of length $a+2n$ from
$v^i_0$ to $e_{i,j,a,b}$; if $b>0$ we construct a directed path of length
$b+2n$ from $v^j_0$ to $e_{i,j,a,b}$.
For the ``backward'' paths, if $a>0$ we construct a path of length
$3n-a+1$ from $f_{i,j,a,b}$ to $v^i_0$, otherwise we make a path of length
$2n+1$ from $f_{i,j,a,b}$ to $v^i_0$; if $b>0$ we construct a path of length
$3n-b+1$ from $f_{i,j,a,b}$ to $v^j_0$, otherwise we make a path of length
$2n+1$ from $f_{i,j,a,b}$ to $v^j_0$.

To complete the instance we define all arcs of the cycles induced by the sets
$V_i$, all arcs of the guard cycles, and all arcs of the form $(e_{i,j,a,b},
f_{i,j,a,b})$ as mandatory, and all other arcs (that is, internal arcs of the
paths constructed in the last part of our reduction) as optional. See Figure~\ref{fig:DeDS-Wtw-construction} for an example.

\begin{figure}[htbp]
\centerline{\includegraphics[width=120mm]{./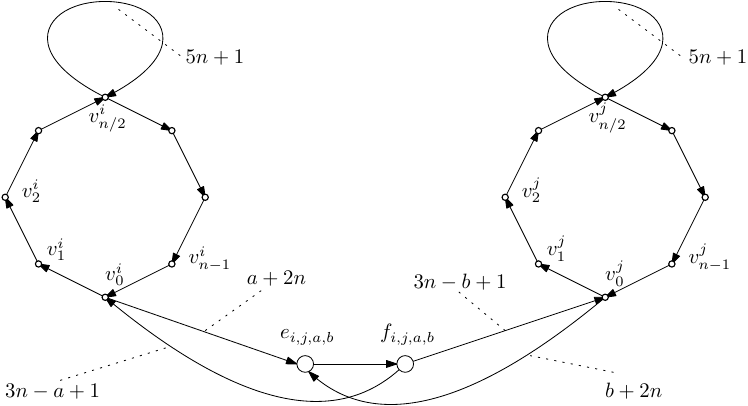}}
\caption{An example of our construction, where dotted lines show the length of each path.}
\label{fig:DeDS-Wtw-construction}
\end{figure}

\begin{lemma}\label{lem:twhard1} If $G$ has a multi-colored clique of size $k$,
then $G'$ has a partial $(3n,3n)$-d\eds\ of size~$k$. \end{lemma}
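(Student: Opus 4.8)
The plan is to produce an explicit feasible solution of size exactly $k$. From a multicolored clique $\{v^1_{c_1},\dots,v^k_{c_k}\}$ of $G$, following the intended encoding I select, for each color $i$, the cycle arc $(v^i_{c_i-1},v^i_{c_i})$ whose head is the clique vertex $v^i_{c_i}$, and let $K$ be the resulting set of $k$ arcs. Because the internal path arcs are optional, it suffices to check that $K$ $(3n,3n)$-dominates the three families of mandatory arcs: the arcs of each color cycle $V_i$, the arcs of the $k$ guard cycles, and the gadget arcs $(e_{i,j,a,b},f_{i,j,a,b})$.

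The color-cycle arcs are immediate: each $V_i$-cycle has length $n\le 3n=q$, so a single selected arc forward-dominates all $n$ arcs of its own cycle. For the guard cycle attached at $v^i_{n/2}$ I will combine the forward reach of the selected arc (out of its head $v^i_{c_i}$) with its backward reach (into its tail $v^i_{c_i-1}$). Writing $d_f$ for the forward distance from $v^i_{c_i}$ to $v^i_{n/2}$ and $d_b$ for the backward distance from $v^i_{c_i-1}$ to $v^i_{n/2}$ along the color cycle, a one-line modular computation gives $d_f+d_b=n-1$ (both lie in $\{0,\dots,n-1\}$, so their sum is at most $2n-2$, while it is congruent to $-1$ modulo $n$; hence it equals $n-1$). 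After reaching $v^i_{n/2}$, the forward domination then extends $3n-d_f$ further arcs and the backward domination $3n-d_b$ further arcs into the guard cycle in opposite directions, and since $(3n-d_f)+(3n-d_b)=5n+1$ these exactly tile the guard cycle with no gap. This is precisely why the guard cycles are given length $5n+1$, and I expect the clean part here to be verifying the exact tiling once the identity $d_f+d_b=n-1$ is in hand.

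The crux, and the step I expect to be the main obstacle, is the gadget arcs. Here I only need the positive direction: for a non-edge $\{v^i_a,v^j_b\}$ I must exhibit one dominating path for $(e_{i,j,a,b},f_{i,j,a,b})$. Since $\{v^i_a,v^j_b\}$ is a non-edge and the selected vertices form a clique, we cannot have both $c_i=a$ and $c_j=b$; assume without loss of generality $c_i\ne a$. I then show that the selected arc of color $i$ dominates $(e,f)$ through the paths attached at $v^i_0$. Concretely, for $c_i\ge 1$ the forward path of length $a+2n$ from $v^i_0$ to $e$, prefixed by the $n-c_i$ cycle arcs from $v^i_{c_i}$ to $v^i_0$, places $e$ at distance $3n+a-c_i$ from the head $v^i_{c_i}$, which is at most $3n-1$ exactly when $c_i>a$; symmetrically, the backward path of length $3n-a+1$ from $f$ to $v^i_0$, followed by the $c_i-1$ cycle arcs to the tail $v^i_{c_i-1}$, places $f$ at distance $3n-a+c_i$, at most $3n-1$ exactly when $c_i<a$. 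Thus for $c_i\in\{1,\dots,n-1\}$ with $c_i\ne a$ one of the two directions reaches $(e,f)$ within the budget $q=p=3n$.

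What remains, and is where care is needed, are two kinds of boundary: the cyclic index $c_i=0$, and the gadget case $a=0$ (symmetrically $b=0$). For $c_i=0$ with $a>0$ the head is $v^i_0$ itself, so $e$ sits at distance $a+2n\le 3n-1$ and forward domination succeeds, consistent with $c_i\ne a$. For $a=0$ the forward path is absent and the backward path is given the shortened length $2n+1$; re-running the count shows the arc is dominated through color $i$ exactly when $c_i\ne 0$, the value $2n+1$ being chosen precisely so that $c_i=0$ yields distance exactly $3n$ (failure) while every $c_i\ge 1$ yields distance at most $3n-1$ (success). Assembling the three families, $K$ $(3n,3n)$-dominates every mandatory arc, so $G'$ has a partial $(3n,3n)$-d\eds\ of size $k$.
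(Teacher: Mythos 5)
Your proof is correct and follows essentially the same approach as the paper: you select the same $k$ arcs (the cycle arcs whose heads are the clique vertices) and verify domination family-by-family with the same distance computations for the gadget arcs $(e_{i,j,a,b},f_{i,j,a,b})$, including the same boundary cases $a=0$ and $c_i=0$. Your guard-cycle argument via the modular identity $d_f+d_b=n-1$ and the exact tiling $(3n-d_f)+(3n-d_b)=5n+1$ is a slightly cleaner unification of the paper's case split ($1\le j\le n/2$, $j>n/2$, $j=0$), but the underlying counting is identical.
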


\begin{proof}
Suppose there is a multi-colored clique in $G$ of size $k$ that selects the vertex $v^i_{f(i)}$ for each
$i\in[1,k]$. We select in $G'$ the $k$ arcs
$(v^i_{f(i)-1}, v^i_{f(i)})$, where $f(i)-1$ is computed modulo $n$, that is,
the $k$ arcs of the cycles induced by $\bigcup_{i\in[1,k]}V_i $ whose heads
coincide with the vertices of the clique.

Let us see why this set of $k$ arcs $(3n,3n)$-dominates all non-optional arcs.
It is not hard to see that these arcs dominate the $k$ cycles induced by
$\bigcup_{i\in[1,k]}V_i$. For the guard cycles, for any $j\in[0,n-1]$ consider the
arc $(v^i_{j-1},v^i_j)$, where again $j-1$ is computed modulo $n$. We claim
that this arc dominates all the arcs of the guard cycle. To see this, suppose
first that $1\le j\le n/2$. Then, there are $5n/2+j$ arcs of the guard cycle
that lie on a path of length at most $3n$ from $v^i_j$ (because the distance
from the head of the selected arc to $v^i_{n/2}$ is $n/2-j$), and $5n/2-(j-1)$
arcs of the guard cycle that lie in a path of length at most $3n$ to
$v^i_{j-1}$ (because the distance from $v^i_{n/2}$ to the tail of the selected
arc is $n/2+j-1$).  These two sets are disjoint, so the total number of
dominated arcs in the cycle is $5n+1$.  The reasoning is similar if $j>n/2$ or
$j=0$.

Finally, let us see why this set dominates all arcs of the form
$(e_{i,j,a,b},f_{i,j,a,b})$, where $v^i_a, v^j_b$ are not connected in $G$.
Since these two vertices are not connected, we have either $f(i)\neq a$ or
$f(j)\neq b$.  Suppose without loss of generality that $f(i)= a'\neq a$ (the
other case is symmetric). We now consider the following cases:

\begin{enumerate}

\item If $a=0$, then since $a'\neq a$ we have $0< a'\le n-1$. Recall that if
$a=0$ we have a path of length $2n+1$ from $f_{i,j,a,b}$ to $v^i_0$, while the
path from $v^i_0$ to $v^i_{a'-1}$ has length at most $n-2$. Therefore, the
length of the path from $f_{i,j,a,b}$ to the tail $v^i_{a'-1}$ of the selected
arc is at most $3n-1$ and the arc $(e_{i,j,a,b},f_{i,j,a,b})$ is dominated.

\item If $a'=0$, then since $a'\neq a$ we have $0<a\le n-1$. In this case there
is a path of length $a+2n\le 3n-1$ from $v^i_0$ to $e_{i,j,a,b}$. Since $v^i_0$ is
the head of a selected arc, the arc $(e_{i,j,a,b},f_{i,j,a,b})$ is dominated.

\item  If $n-1\ge a'>a>0$, then we observe that there is a path from $v^i_a$ to
$e_{i,j,a,b}$ of length exactly $3n$: the distance from $v^i_a$ to $v^i_0$ is
$n-a$ and we have added a path of length $a+2n$ from $v^i_0$ to $e_{i,j,a,b}$. If
$a'>a$ then the path from $v^i_{a'}$ to $e_{i,j,a,b}$ is shorter than $3n$, so the
arc $(e_{i,j,a,b}, f_{i,j,a,b})$ is dominated. 

\item Finally, if $n-1\ge a>a'>0$, then we recall that there is a path from
$f_{i,j,a,b}$ to $v^i_0$ of length $3n-a+1$, and there is a path from $v^i_0$
to $v^i_{a'-1}$ of length $a'-1$, so the path from $f_{i,j,a,b}$ to the tail of
the selected arc is at most $3n-a+1+a'-1<3n$ and the arc
$(e_{i,j,a,b},f_{i,j,a,b})$ is dominated.
\end{enumerate}
\end{proof}

\begin{lemma}\label{lem:twhard2} If $G'$ has a partial $(3n,3n)$-d\eds\ of size
$k$, then $G$ has a multi-colored clique of size~$k$. \end{lemma}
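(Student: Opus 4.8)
The plan is to prove the converse direction of the equivalence: given a partial $(3n,3n)$-d\eds\ $K$ in $G'$ of size $k$, I must extract a multicolored clique of size $k$ in $G$. First I would argue that $K$ is forced to have a very rigid structure, namely exactly one arc inside each of the $k$ directed cycles induced by the sets $V_i$. The mandatory arcs that must be dominated are: the cycle arcs, the guard-cycle arcs, and the arcs $(e_{i,j,a,b},f_{i,j,a,b})$. I would use the guard cycles (of length $5n+1$) as the main counting device: since the only way to dominate an arc of a guard cycle within distance $3n$ forward-or-backward is to select an arc on the ``spine'' corresponding to that color (the optional path arcs hanging off the $e,f$ gadgets are too far, and each guard cycle is long enough that a single selected arc can cover at most $5n+1$ of its arcs but no single arc outside the cycle-plus-spine can reach it), each color class must receive at least one arc of $K$. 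Combined with $|K|=k$ and the $k$ colors, this forces exactly one arc per color, and moreover forces that arc to lie on the $V_i$-cycle (guarding the guard cycle). This is the step I expect to be the main obstacle: making precise, via the distance bounds baked into the path lengths, that no budget-saving ``shared'' domination across colors or via the gadget paths is possible, so that the $k$ arcs partition perfectly one-per-color.

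Next I would decode the selected arcs into a candidate vertex set. In each color $i$ the single selected cycle arc has the form $(v^i_{f(i)-1}, v^i_{f(i)})$ for some index $f(i)$ (indices mod $n$); I define the candidate clique vertex in $V_i$ to be $v^i_{f(i)}$, exactly mirroring the encoding used in Lemma~\ref{lem:twhard1}. This gives one vertex per color, so it remains only to verify that the set $\{v^i_{f(i)}\}_{i\in[1,k]}$ is a clique, i.e.\ that for every non-adjacent pair $v^i_a, v^j_b$ we cannot have simultaneously $f(i)=a$ and $f(j)=b$.

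The heart of the argument is then a contrapositive/case analysis driven by the gadget arc $(e_{i,j,a,b},f_{i,j,a,b})$, which is mandatory and must be dominated by $K$. Suppose for contradiction that $f(i)=a$ and $f(j)=b$ for some non-adjacent pair. Since $K$ contains only the cycle arcs $(v^i_{a-1},v^i_a)$ and $(v^j_{b-1},v^j_b)$ as the arcs capable of reaching this gadget (the path arcs are optional and not in $K$, having just shown $K$ consists solely of one cycle arc per color), I would compute the forward distance from $v^i_a$ (or $v^j_b$) to $e_{i,j,a,b}$ and the backward distance from $f_{i,j,a,b}$ to $v^i_{a-1}$ (or $v^j_{b-1}$), and show that in the ``diagonal'' case $f(i)=a,f(j)=b$ every such distance strictly exceeds $3n$, so the gadget arc is undominated --- a contradiction. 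This is essentially the reverse of the four-case calculation in Lemma~\ref{lem:twhard1}: there the path lengths were chosen so that a \emph{mismatch} ($f(i)\neq a$) yields distance $\le 3n$, hence here the exact \emph{match} on both coordinates must yield distance $>3n$. I would carry out the short case check ($a=0$ vs.\ $a>0$, and symmetrically for $b$) confirming that the path lengths $a+2n$, $3n-a+1$, $2n+1$ were calibrated precisely so that hitting $e$ from $v^i_a$ costs exactly $3n$ only when the mismatch condition holds and strictly more than $3n$ under an exact diagonal match. Concluding, no non-edge can be ``selected'' on both endpoints, so $\{v^i_{f(i)}\}$ induces a clique with one vertex per color, completing the reduction.
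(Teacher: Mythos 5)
Your overall route coincides with the paper's: use the guard cycles to pin down one selected arc per color, decode the head of each $V_i$-cycle arc as the chosen vertex $v^i_{f(i)}$, and then derive a contradiction from the calibrated path lengths if some gadget arc $(e_{i,j,a,b},f_{i,j,a,b})$ had both coordinates matched. However, there is a genuine gap at exactly the step you flagged as the main obstacle. You claim that dominating the guard cycles \emph{forces} the per-color arc to lie on the $V_i$-cycle. This is false: an arc of the guard cycle itself $(3n,3n)$-dominates the entire guard cycle (it covers the $3n$ arcs following its head and the $3n$ arcs preceding its tail, and $6n+1\geq 5n+1$), and, for instance, the guard-cycle arc whose head is $v^i_{n/2}$ additionally dominates the whole $V_i$-cycle (length $n\le 3n$) and even $(0,3n)$-dominates every gadget arc $(e_{i,j,a,b},f_{i,j,a,b})$ with $0<a\le n/2-1$, via the path of length $n/2+(a+2n)+1$ through $v^i_0$. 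So a size-$k$ solution may legitimately contain guard-cycle arcs; the counting argument cannot exclude them, and your decoding step (reading $f(i)$ off a $V_i$-cycle arc) is simply undefined for such solutions.

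The missing ingredient, which the paper supplies, is an \emph{exchange} argument rather than a forcing argument: if the solution contains a guard-cycle arc $(u,v)$ for color $i$, then, since the guard cycle has $5n+1$ arcs, either $dist(v^i_{n/2},u)\ge 5n/2$ or $dist(v,v^i_{n/2})\ge 5n/2$. In the first case $(u,v)$ cannot $(3n,0)$-dominate any arc outside the two cycles of color $i$, because any path into $u$ from outside must pass through $v^i_0$, which is already at distance at least $3n$ from $u$; hence $(u,v)$ can be replaced by $(v^i_{n-1},v^i_0)$, which dominates both cycles and at least as much outside. The other case is symmetric, replacing by $(v^i_0,v^i_1)$. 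Only after this normalization do your decoding and final case analysis go through. Two smaller points: first, when arguing that only the arcs with heads $v^i_a,v^j_b$ can dominate the gadget arc, you must also rule out the selected arcs of the other colors $\ell\notin\{i,j\}$ (the paper notes the gadget's endpoints are at distance at least $4n$ from every $v^\ell_0$); second, in the diagonal case the relevant distances come out to \emph{exactly} $3n$, not strictly more, and domination fails because the dominated arc itself contributes one additional unit to the path length, so your calibration sentence should be stated in those terms.
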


\begin{proof}
We first argue that any valid solution must contain for each $i\in[1,k]$ an arc
of the form $(v^i_j,v^i_{j+1})$, where addition is done modulo $n$, or some arc
from the guard cycle. Suppose that this is not the case for some $i$. We then
argue that there is an arc of the guard cycle that is not dominated. In
particular, consider the arc $(u,v)$ of the guard cycle such that $u$ is at
distance exactly $5n/2$ from $v^i_{n/2}$. Observe that the path from $v$ to
$v^i_{n/2}$ also has length $5n/2$.
We argue that this arc is not dominated.
Indeed, for any selected arc $(u',v')$, the path from $v'$ to $u$ goes through
$v^i_0$, since we have not selected any arcs from inside the two cycles.
The distance from $v^i_0$ to $u$ is already exactly $3n$, however, so $(u',v')$
does not $(0,3n)$ dominate $(u,v)$. Similarly, $(u',v')$ does not $(3n,0)$
dominate $(u,v)$ because the distance from $v$ to $v^i_0$ (which lies on a
shortest path from $v$ to $u'$) is $3n$.

Because of the above, we know that a solution that selects exactly $k$ arcs
must select exactly one arc from each cycle induced by a $V_i$ or its attached
guard cycle. Let us also argue that we may assume that the solution does not
select any arcs from the guard cycles. Suppose for contradiction that a
solution selects $(u,v)$ from a guard cycle. We have either
$dist(v^i_{n/2},u)\ge 5n/2$ or $dist(v,v^i_{n/2})\ge 5n/2$. In the former case
the arc $(u,v)$ does not $(3n,0)$ dominate any arc with endpoints outside $V_i$
and its guard cycle, because to do so, the dominated arc would have to lie in a
path of length at most $3n$ going into $u$. Such a path must go through
$v^i_0$, and the distance from $v^i_0$ to $u$ is already at least $3n$. Since
$(u,v)$ may only $(0,3n)$ dominate arcs outside $V_i$, we replace $(u,v)$ with
$(v^i_{n-1}, v^i_0)$, which dominates all arcs inside the two cycles and
$(0,3n)$ dominates more arcs than $(u,v)$ outside the cycles. Similarly, in the
other case we replace the selected arc with $(v^i_0,v^i_1)$, which $(3n,0)$
dominates more arcs outside the cycles.

We therefore assume that the solution selects exactly one arc from each cycle
induced by a $V_i$. Let $f(i)$, for $i\in[1,k]$ be the head of the selected arc
in the cycle induced by $V_i$. We claim that the set $\{ v^i_{f(i)}\ |\
i\in[1,k]\ \}$ is a multi-colored clique in $G$.

Suppose that $f(i)=a, f(j)=b$ and $v^i_a, v^j_b$ are not connected. We argue
that the arc $(e_{i,j,a,b},f_{i,j,a,b})$ (which, by construction, exists in
$G'$) is not dominated by our supposed solution, which will give a
contradiction. Observe that the endpoints of the arc
$(e_{i,j,a,b},f_{i,j,a,b})$ are at distance at least $4n$ from each
$v^{\ell}_0$, for any $\ell\not\in\{i,j\}$. As a result, the only selected arcs
that could dominate $(e_{i,j,a,b},f_{i,j,a,b})$ are the selected arcs with
heads $v^i_a, v^j_b$. However, $(v^i_{a-1}, v^i_a)$ does not $(0,3n)$ dominate
the arc in question: the distance from $v^i_a$ to $e_{i,j,a,b}$ is $3n$
(distance $n-a$ from $v^i_a$ to $v^i_0$ and $2n+a$ from $v^i_0$ to
$e_{i,j,a,b}$); $(v^i_{a-1}, v^i_a)$ does not $(3n,0)$ dominate the arc in
question: the distance from $f_{i,j,a,b}$ to $v^i_{a-1}$ is $3n$ (if $a>0$ we
have distance $a-1$ from $v^i_0$ to $v^i_{a-1}$  and $3n-a+1$ from
$f_{i,j,a,b}$ to $v^i_0$, while if $a=0$ we have distance $n-1$ from $v^i_0$ to
the tail of the selected arc and distance $2n+1$ from $f_{i,j,a,b}$ to
$v^i_0$). By identical arguments, $(v^j_{b-1}, v^j_b)$ does not dominate the
arc $(e_{i,j,a,b},f_{i,j,a,b})$, so we have a contradiction.
\end{proof}

\begin{lemma}\label{lem:twhard3} The pathwidth of (the underlying graph of)
$G'$ is at most $2k+3$.  Furthermore, there exists a set of vertices $S$ of
$G'$ that contains no sources or sinks such that (i) all optional arcs are
incident to a vertex of $S$ and (ii) for each $u\in S$ all arcs incindent on
$u$ are optional.  \end{lemma}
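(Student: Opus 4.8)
The plan is to verify the two assertions essentially independently, beginning with the (easier) structural property of $S$ and then exhibiting a path decomposition of small width.

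For the set $S$, I would take $S$ to consist of all \emph{internal} vertices of the forward and backward connecting paths --- that is, every vertex lying strictly between an endpoint $v^i_0$ (or $v^j_0$) and the corresponding $e_{i,j,a,b}$, and every vertex lying strictly between some $f_{i,j,a,b}$ and $v^i_0$ (or $v^j_0$). Each such vertex is created for exactly one path and has in-degree and out-degree exactly $1$ in $G'$; hence it is neither a source nor a sink, and both arcs incident to it are path-internal, so optional. This yields property (ii) and the source/sink-freeness at once. For property (i), recall that the optional arcs are precisely the arcs on these connecting paths, and that every connecting path has length at least $2n+1$ (forward paths have length $a+2n$ with $a>0$, backward paths length $3n-a+1$ or $2n+1$); thus each such path has at least one internal vertex and every one of its arcs is incident to one, so every optional arc touches $S$.

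For the pathwidth bound, the key observation is that every non-edge gadget attaches to the rest of $G'$ only through the two ``hub'' vertices $v^i_0,v^j_0$, and that the $k$ large structures (each $V_i$-cycle together with its guard cycle) interact with the gadgets only through these same hubs. I would therefore let $X:=\{v^i_0 : i\in[1,k]\}$, with $|X|=k$, and argue that $\pw(G'-X)\le 2$; augmenting every bag of a width-$2$ path decomposition of $G'-X$ by the whole of $X$ then produces a path decomposition of $G'$ of width at most $2+|X|=k+2\le 2k+3$. This augmentation is valid by the standard argument: each vertex of $X$ occurs in every bag, so its contiguity is trivial and every arc incident to $X$ is covered, while arcs and contiguity inside $G'-X$ are inherited from the original decomposition. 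To bound $\pw(G'-X)$, note that deleting the hubs splits $G'$ into pairwise-disjoint pieces: for each $i$ the $V_i$-cycle loses $v^i_0$ and becomes a path, to which the guard cycle is still attached at $v^i_{n/2}$, yielding a ``cycle with a pendant path'' of pathwidth $2$; and each non-edge gadget becomes a tree whose only branch vertices are $e_{i,j,a,b}$ and $f_{i,j,a,b}$ (two in-paths merging at $e_{i,j,a,b}$, the arc $(e_{i,j,a,b},f_{i,j,a,b})$, and two out-paths leaving $f_{i,j,a,b}$), a double spider of pathwidth $2$. Since every component of $G'-X$ has pathwidth at most $2$, so does $G'-X$.

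The step demanding the most care is this component analysis, and in particular the guard cycles: since a guard cycle hangs off $v^i_{n/2}$ rather than off a hub, deleting $X$ does not break it, so one must check that traversing it costs nothing beyond width $2$ --- i.e.\ that keeping $v^i_{n/2}$ in the bag while sweeping the guard cycle suffices, which is exactly a width-$2$ decomposition of the tadpole component. A robust alternative that sidesteps the tadpole entirely is to place every $v^i_{n/2}$ into $X$ as well (so $|X|=2k$); then $G'-X$ contains no cycle at all and is a disjoint union of paths and the gadget trees, still of pathwidth at most $2$, and the same bag-augmentation yields width at most $2+2k=2k+2\le 2k+3$, matching the claimed bound with room to spare.
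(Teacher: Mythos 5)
Your proof is correct, and the core of it coincides with the paper's: the set $S$ is chosen identically (internal vertices of the connecting paths, which have in- and out-degree exactly $1$ and only optional incident arcs), and the pathwidth argument uses the same technique of deleting a small vertex set $X$, bounding the pathwidth of $G'-X$ component by component, and then adding $X$ to every bag. The difference is in how the technique is instantiated. The paper deletes the $2k$ vertices $\{v^i_0, v^i_{n/2} : i\in[1,k]\}$, so that $G'-X$ becomes a forest whose trees (paths, or trees with the two branch vertices $e_{i,j,a,b},f_{i,j,a,b}$) it bounds crudely by pathwidth $3$, yielding $2k+3$. Your primary argument deletes only the $k$ hubs $v^i_0$ and pays instead with a finer component analysis: an explicit width-$2$ decomposition of each tadpole component (the guard cycle with the broken $V_i$-cycle passing through $v^i_{n/2}$, swept while keeping $v^i_{n/2}$ in the bag) and of each double-spider gadget tree. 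Both of these width-$2$ claims are sound, so your main line gives the sharper bound $k+2$, which trivially implies the stated $2k+3$; your fallback (additionally deleting the $v^i_{n/2}$'s) is precisely the paper's proof, with the cosmetic improvement that the resulting forest is bounded by pathwidth $2$ rather than $3$. In short: same method and same $S$, but your instantiation trades a slightly more careful case analysis for a quantitatively better pathwidth bound, while the paper's choice avoids having to decompose any component containing a cycle.
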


\begin{proof}
For the pathwidth bound, it is a well-known fact that deleting a vertex from a
graph decreases the pathwidth by at most one (since this vertex may be added to
all bags in a decomposition of the new graph). Hence, we begin by deleting from
the graph the $2k$ vertices $\{ v^i_0, v^i_{n/2}\ |\ i\in[1,k]\ \}$. The graph
becomes a forest, and its pathwidth is upper-bounded by the maximum pathwidth
of any of its component trees. These trees are either paths or trees with two
vertices of degree higher than $2$ (these are the vertices $e_{i,j,a,b},
f_{i,j,a,b}$), but such trees are easily seen to have pathwidth at most $3$.

For the second claim we observe that the optional arcs are exactly the arcs
that were added in directed paths connecting $v^i_0$ to $e_{i,j,a,b},
f_{i,j,a,b}$, for some $i,j,a,b$. We therefore define $S$ to be the set of
internal vertices of such paths.
\end{proof}

\begin{theorem}\label{thm:tw_whard} $(p,q)$-d\eds\ is W[1]-hard parameterized
by the pathwidth $\pw$ of the underlying graph and the size $k$ of the optimum.
Furthermore, if there is an algorithm solving $(p,q)$-d\eds\ in time
$n^{o(\pw+k)}$, then the ETH is false. \end{theorem}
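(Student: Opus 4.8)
The plan is to assemble the three preceding lemmas into the theorem by first handling the \textsc{Optional} version of the problem and then eliminating the optional arcs. Lemmas~\ref{lem:twhard1} and~\ref{lem:twhard2} together establish that the instance $G'$ of \textsc{Optional} $(3n,3n)$-d\eds\ has a partial edge dominating set of size exactly $k$ if and only if the original \textsc{$k$-Multicolored Clique} instance $G$ has a multicolored clique of size $k$. Since \textsc{$k$-Multicolored Clique} is W[1]-complete, and Lemma~\ref{lem:twhard3} guarantees that the pathwidth of $G'$ is bounded by $2k+3$ (hence also by a function of $k$ alone), this already gives W[1]-hardness of the \emph{optional} variant parameterized by $\pw+k$. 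The remaining work is therefore a reduction from \textsc{Optional} $(p,q)$-d\eds\ back to the original (non-optional) $(p,q)$-d\eds\ that blows up neither the pathwidth nor the optimum by more than a constant factor.

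First I would use the structural handle provided by the set $S$ in Lemma~\ref{lem:twhard3}: every optional arc is incident to a vertex of $S$, every arc incident to a vertex of $S$ is optional, and $S$ contains no sources or sinks. The idea is to make each optional arc \emph{self-dominating} in a cheap way, so that it no longer constrains the solution but its forced domination can always be satisfied at zero extra cost. Concretely, for each vertex $u\in S$ I would attach a short gadget (for instance a new pendant structure creating a source--to--sink arc through $u$, or a local construction forcing an arc incident to $u$ into any solution) that dominates all the formerly-optional arcs incident to $u$. Because $u$ is an internal vertex of one of the connecting paths, $u$ has both an incoming and an outgoing arc, so a single selected arc through $u$ can $(p,q)$-dominate all of $u$'s incident arcs; and since such a gadget can be forced independently of the clique-encoding part, its contribution to the optimum is a fixed additive amount $|S|$ that we fold into the budget $k':=k+|S|$.

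The key verifications in the order I would carry them out are: (i) the gadgets add only $O(1)$ to the pathwidth, which follows because each gadget is attached to a single vertex of $S$ and introducing one vertex raises pathwidth by at most one, exactly the argument already used in Lemma~\ref{lem:twhard3}; (ii) the new optimum is exactly $k+|S|$ iff the optional instance has optimum $k$, so feasibility is preserved; and (iii) the total instance size is polynomial in $n$ and $k$, with $p=q=3n$ remaining unchanged. Combining the W[1]-completeness of \textsc{$k$-Multicolored Clique} with these size and width bounds yields the W[1]-hardness parameterized by $\pw+k$. For the ETH lower bound, I would invoke the standard fact (see~\cite{CyganFKLMPPS15}) that, under the ETH, \textsc{$k$-Multicolored Clique} on $N$ vertices cannot be solved in time $N^{o(k)}$; since our reduction produces an instance of size polynomial in $N$ with both $\pw$ and optimum bounded by $O(k)$, an $n^{o(\pw+k)}$ algorithm would refute this.

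The main obstacle I anticipate is step~(ii): ensuring that forcing the gadget arcs neither accidentally dominates any of the \emph{mandatory} clique-encoding arcs (which would let the solver cheat and break Lemma~\ref{lem:twhard2}) nor fails to dominate some formerly-optional arc. This is where the precise length bookkeeping of the paths matters, and where the two properties of $S$ in Lemma~\ref{lem:twhard3}---that arcs incident on $S$ are \emph{exactly} the optional arcs, and that $S$ is disjoint from the cycle vertices---do the real work: they guarantee that a gadget local to $u\in S$ reaches only optional arcs within distance $p+q$ and cannot interfere with the $v^i_j$ cycles or the $(e_{i,j,a,b},f_{i,j,a,b})$ arcs. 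I would verify this by tracking that every mandatory arc lies at distance strictly greater than $\max\{p,q\}=3n$ from any gadget-forced arc, so that the soundness argument of Lemma~\ref{lem:twhard2} transfers verbatim to the non-optional instance.
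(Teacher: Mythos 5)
Your proposal fails at its central step, and in a way that defeats the theorem's statement. You fold the gadget cost into a new budget $k' := k + |S|$, but $|S|$ is the set of \emph{internal vertices of the connecting paths}: there is one path of length $\Theta(n)$ per non-adjacent pair $v^i_a, v^j_b$, so $|S|$ is polynomial in $n$ and is not bounded by any function of the clique parameter $k$. A parameterized reduction for the combined parameter $\pw + k$ (where $k$ is the size of the optimum, as the theorem requires) must keep the optimum of the produced instance bounded by a function of the original parameter; with optimum $k+|S| = \mathrm{poly}(n)$ you would at best get hardness for pathwidth alone, and the ETH claim collapses as well, since an $n^{o(\pw+k')}$ algorithm with $k' = \mathrm{poly}(n)$ is far too generous a time bound to contradict anything. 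The paper's proof avoids exactly this trap by using a \emph{single, shared} forced arc: two new vertices $u_1,u_2$ with the arc $(u_1,u_2)$ forced into any small solution (via $k+2$ pendant paths of length $3n$ whose sinks are identified with $u_1$), together with paths of length exactly $3n-1$ from $u_2$ to every $u\in S$ and from every $u\in S$ to $u_1$. That one arc dominates all optional arcs simultaneously, so the optimum becomes $k+1$, not $k+|S|$.

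Your soundness argument is also incorrect as stated. You claim the two properties of $S$ from Lemma~\ref{lem:twhard3} guarantee that a gadget local to $u\in S$ ``reaches only optional arcs within distance $p+q$.'' They do not: $S$ contains the internal vertices of the paths ending at $e_{i,j,a,b}$ and starting at $f_{i,j,a,b}$, so the last internal vertex of a forward path is at distance $1$ from $e_{i,j,a,b}$, and the \emph{mandatory} arc $(e_{i,j,a,b},f_{i,j,a,b})$ lies on a directed path of length $2 \ll 3n$ out of that vertex. A forced arc whose head sits at or near such a $u$ would $(0,3n)$-dominate this mandatory arc for free, so the solver could satisfy the non-edge checkers without selecting cycle arcs that encode a clique, and the argument of Lemma~\ref{lem:twhard2} no longer transfers. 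The distance bookkeeping is the entire point: the paper's forced arc sits at distance exactly $3n-1$ from each $u\in S$, which suffices to cover the arcs incident on $S$ (all optional) but nothing one step further, since every mandatory arc has both endpoints outside $S$ and hence at distance at least $3n$ from $u_2$ and to $u_1$. One could repair your local gadgets by placing each forced arc at distance $3n-1$ from its vertex $u$, but then each vertex of $S$ still costs one unit of budget and you are back to the fatal blow-up of the first paragraph; sharing a single forced arc among all of $S$ is the key idea your proposal is missing.
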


\begin{proof}
We start with an instance of \textsc{Multi-Colored Clique} and use Lemmas
\ref{lem:twhard1}, \ref{lem:twhard2}, \ref{lem:twhard3} to construct an
equivalent instance of \textsc{Optional} $(3n,3n)$-d\eds, with pathwidth $O(k)$
and optimal solution target $k$. What remains is to show how to transform this
into an equivalent instance of the standard version of d\eds, without affecting
the pathwidth or the size of the optimal solution too much. The theorem will then follow from standard facts about \textsc{Multi-Colored Clique}, namely that
the problem is W[1]-hard and not solvable in $n^{o(k)}$ under the ETH.

Given the instance $G'$ of \textsc{Optional} $(3n,3n)$-d\eds, we add to the
graph two new vertices $u_1,u_2$ and an arc $(u_1,u_2)$. We construct $k+2$
directed paths of $3n$ arcs (using new vertices). For each such path, we
identify its last vertex (sink) with $u_1$. Recall that there is a set of
vertices $S$ that is incident on all optional arcs. For each $u\in S$ we do the
following: we construct a new directed path of length $3n-1$ from $u_2$ to $u$;
and we construct a new directed path of length $3n-1$ from $u$ to $u_1$. We
claim that the new instance has a $(3n,3n)$-dominating set of size $k+1$ if and
only if the \textsc{Optional} d\eds\ instance has a solution of size $k$.

Suppose there is a solution of size $k$ that dominates all mandatory arcs of
$G'$. In the new instance we select the same arcs, as well as $(u_1,u_2)$. We
claim that $(u_1,u_2)$ dominates all the new arcs we added, since all such arcs
belong in a path of length at most $3n$ going into $u_1$ or coming out of
$u_2$. Furthermore, $(u_1,u_2)$ dominates all optional arcs of $G'$, since for
each such arc there exists $u\in S$ such that the arc is incident on $u$, and
$u$ is at distance at most $3n-1$ from $u_2$ and to $u_1$.

Suppose that there is a solution of size $k+1$ for the new instance. We first
claim that this solution must contain $(u_1,u_2)$. Indeed, consider the $k+2$
arcs incident on the sources of the paths whose sinks we identified with $u_1$.
No other arc of the instance dominates more than one of the arcs incident on
these sources. Hence, if we do not select $(u_1,u_2)$, we must have a solution
of size at least $k+2$. Now, assume that $(u_1,u_2)$ has been selected and note
that, as argued above, this arc dominates all new arcs as well as all optional
arcs. Furthermore, observe that $(u_1,u_2)$ does not dominate any non-optional
arc of $G'$, since its distance to any vertex of $V\setminus S$ is at least
$3n$ in both directions, and all arcs incident on $S$ are optional.

Suppose that the solution also contains another arc that does not appear in
$G'$. We claim that we can always replace this with another arc that appears in
$G'$. Indeed, an arc from the $k+2$ paths going into $u_1$ is redundant (the
arc $(u_1,u_2)$ dominates more arcs); an arc from a path from $u_2$ to $u\in S$
can be replaced by any arc of $G'$ going into $u$ (such an arc exists since $u$
is not a sink); and an arc from a path from $u\in S$ to $u_1$ may be replaced
by another arc coming out of $u$ in $G'$. The latter two replacements are
correct because the new arcs dominate more arcs of $G'$, while all arcs which
do not appear in $G'$ have already been dominated by $(u_1,u_2)$. We therefore
arrive at a set of at most $k$ arcs of $G'$. As argued above $(u_1,u_2)$ does
not dominate any of the mandatory arcs of $G'$. Furthermore, for any two
vertices $u,v$ of $G'$ such that $dist(u,v)\ge 3n$ in $G'$ we still have
$dist(u,v)\ge 3n$ in the new instance, as all paths we have added have length
at least $3n-1$.  This means that if the $k$ arcs of $G'$ we have selected in
the new instance dominate all mandatory arcs, they also dominate them in $G'$.

Finally, it is not hard to see that the pathwidth of the new graph remains
$O(k)$. We delete $u_1,u_2$ from the graph and now the resulting graph is $G'$
with the addition of some path components and also some pendant paths attached
to the vertices of $S$. We can construct a path decomposition of the new graph
by taking a path decomposition of $G'$ and, for each $u\in S$, inserting
immediately after a bag $B$ that contains $u$ a path decomposition of the paths
attached to $u$ where we have added $B$ to every bag.  \end{proof}

\subsection{Algorithm for Treewidth}

\begin{theorem}\label{thm_tw_DP}
 The $(p,q)$-d\eds\ problem can be solved in time $4^{2\tw^2}(4 (q+1)(p+1))^{2\tw} \cdot n^{O(1)}$ on graphs of treewidth at most~$\tw$.
\end{theorem}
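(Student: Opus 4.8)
The plan is to run a bottom-up dynamic program over a nice tree decomposition of the underlying graph of $G$ of width $\tw$ (obtained as in Section~\ref{sec_defs}). The starting observation is that $(p,q)$-domination is governed entirely by distances to the endpoints of the selected set $K$: writing $H$ and $T$ for the sets of heads and tails of the arcs in $K$, an arc $(x,y)$ is $(p,q)$-dominated if and only if it is itself in $K$, or $\mathrm{dist}(w,x)\le q-1$ for some $w\in H$ (forward domination), or $\mathrm{dist}(y,z)\le p-1$ for some $z\in T$ (backward domination). Since only distances below $q$ (resp.\ $p$) can ever certify domination, it suffices to track for each vertex $v$ the two \emph{capped} quantities $\alpha(v):=\min\{q,\min_{w\in H}\mathrm{dist}(w,v)\}\in\{0,\dots,q\}$ and $\beta(v):=\min\{p,\min_{z\in T}\mathrm{dist}(v,z)\}\in\{0,\dots,p\}$; note that $\alpha(v)=0$ exactly when $v\in H$ and $\beta(v)=0$ exactly when $v\in T$. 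An arc $(x,y)$ is then dominated iff it is selected, or $\alpha(x)\le q-1$, or $\beta(y)\le p-1$.

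The DP state attached to a bag $X_i$ assigns to every $v\in X_i$ a \emph{guess} of its final (global) value $\alpha(v)$ and $\beta(v)$, together with two Boolean flags $\mathrm{rf}(v),\mathrm{rb}(v)$ recording whether the guess $\alpha(v)$ (resp.\ $\beta(v)$) has already been \emph{realized} inside the terminal subgraph $G_i$ — that is, whether $G_i$ already contains a selected arc into $v$ when $\alpha(v)=0$, or an in-arc $(u,v)$ with $\alpha(u)=\alpha(v)-1$ when $\alpha(v)\ge 1$ (symmetrically, via out-arcs, for $\beta$ and $\mathrm{rb}$). This gives at most $4(p+1)(q+1)$ states per vertex, hence at most $(4(p+1)(q+1))^{\tw+1}$ states per bag; with each state we store the minimum number of arcs selected so far that is consistent with it. The transitions are then routine. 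At an \emph{introduce-vertex} node for $v$ we allow any guessed $\alpha(v),\beta(v)$ and set $\mathrm{rf}(v)=\mathrm{rb}(v)=0$. At an \emph{introduce-edge} node for $(u,v)$ we branch on selecting the arc or not; in both branches we enforce the distance-consistency constraints $\alpha(v)\le\alpha(u)+1$ and $\beta(u)\le\beta(v)+1$, set $\mathrm{rf}(v):=1$ if $\alpha(u)=\alpha(v)-1$ and $\mathrm{rb}(u):=1$ if $\beta(v)=\beta(u)-1$, and reject unless $(u,v)$ is dominated (selected, or $\alpha(u)\le q-1$, or $\beta(v)\le p-1$); if the arc is selected we additionally add $1$ to the cost, require $\alpha(v)=0$ and $\beta(u)=0$, and force $\mathrm{rf}(v):=\mathrm{rb}(u):=1$. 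At a \emph{forget} node for $v$ we keep only states with $\mathrm{rf}(v)=\mathrm{rb}(v)=1$ (its guesses are now fully grounded, as $v$ has no further incident arcs) and drop $v$. At a \emph{join} node we combine the two children on states that \emph{agree on all labels} $\alpha,\beta$, take the disjunction of the realization flags vertex-by-vertex, and add the two stored costs (each arc is introduced in exactly one subtree, so nothing is double-counted). The optimum is read off at the (empty) root bag.

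For correctness one proves the invariant that, along any branch of finite cost, a guessed label equals the length of an actual path realizing it once its flag is set, so a grounded label is always an \emph{upper bound} on the true distance; consequently the domination test at introduce-edge nodes is conservative (passing $\alpha(u)\le q-1$ forces the true distance to be $\le q-1$), giving soundness, while \emph{completeness} follows by feeding the DP the exact capped distances of an optimal $K$, which form a consistent, grounded, feasible state of cost $|K|$. The separator property of the decomposition is what makes the local tests sufficient: any shortest path from a vertex of $H\cup T$ to a bag vertex must meet $X_i$, so realizations are never ``missed'' between the two sides of a join.

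The step requiring the most care — and the main obstacle — is exactly this grounding of the distance guesses in the presence of \emph{deferred} realization: a label $\alpha(v)=d$ may legitimately be realized by a selected head lying in the not-yet-processed ancestor part, so we must permit the flag $\mathrm{rf}(v)$ to be set by a neighbour $u$ whose own label is not yet grounded, while still forbidding ungrounded guesses from ever reaching an accepting leaf. This is what the forget-node test secures (each vertex is forced to be grounded before it disappears), and the danger of circular ``self-justifying'' promises is ruled out by the fact that realization edges strictly decrease the label, so every promise chain is acyclic and must terminate at a label-$0$ vertex pinned by an actually selected arc. Granting these invariants, the running time is dominated by the join nodes and is $(4(p+1)(q+1))^{O(\tw)}\cdot n^{O(1)}$, as claimed.
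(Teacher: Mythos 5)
Your state space and overall scheme are essentially the paper's: capped forward/backward distance labels for each bag vertex plus two ``already realized'' bits, giving $4(p+1)(q+1)$ states per vertex, with the same acyclicity argument (realization edges strictly decrease the label, so promise chains terminate at a selected arc) justifying deferred grounding. The only structural difference --- checking domination and paying for selected arcs at introduce-edge nodes rather than at forget nodes, as the paper does --- is cosmetic.

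However, one concrete step fails as written: your forget rule ``keep only states with $\mathrm{rf}(v)=\mathrm{rb}(v)=1$'' demands grounding of \emph{every} label, including the cap values $\alpha(v)=q$ and $\beta(v)=p$, which are precisely the ``no selected arc nearby'' claims and have nothing to realize. Under your own definition, $\mathrm{rf}(v)=1$ with $\alpha(v)=q$ requires an in-arc $(u,v)$ with $\alpha(u)=q-1$, which cannot occur truthfully for a vertex at distance greater than $q$ from every head of the solution, nor for a source, which has no in-arcs at all. Consequently your completeness claim --- that the exact capped distances of an optimal $K$ form a grounded, accepted state --- is false: on the two-vertex instance consisting of the single arc $(a,b)$ with $p=q=1$, the optimum selects $(a,b)$, yet every state for $a$ dies at its forget node because $\mathrm{rf}(a)$ can never be set, so your DP reports infeasibility. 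The fix is small: declare a cap-valued label vacuously grounded (equivalently, require $\mathrm{rf}(v)=1$ only when $\alpha(v)\le q-1$, and $\mathrm{rb}(v)=1$ only when $\beta(v)\le p-1$). Soundness is unaffected, since cap-valued labels never certify domination, and completeness is restored, since every sub-cap true distance is realized by the last arc of an actual shortest path, which is introduced below $v$'s forget node. (To be fair, the paper's sketch is loose on the same point, but it does not stake its completeness argument on grounding cap-valued labels the way your proof explicitly does.)
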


The rest of this subsection is devoted to the proof of Theorem~\ref{thm_tw_DP}. 
Let $G=(V,E)$ the input graph and we are given a rooted nice
tree decomposition of $G$ with width $\tw$.  For each node $t$ of
the decomposition, let $B_t$ denote the corresponding bag and $V_t$
denote the set of vertices appearing in $B_t$ or one of the descendants of $t$. 
For a vertex set $X\subseteq V,$ we denote by $E(X)$ the set of arcs both of whose endpoints lie in $X$. 
For a set of arcs $D$, let $D^+$ (respectively, $D^-$) be the set of all heads (respectively, tails) of arcs in $D$.
For a function $f:X\rightarrow F$ and the subset $X'\subseteq X$, we denote the restriction of $f$ to $X'$ 
by $f|_{X'}$.

\bigskip

\noindent {\bf Feasible solution.} We construe a solution for the $(p,q)$-d\eds\ problem a triple $(D,f,b)$, 
where $D\subseteq E$, $f:V\rightarrow \{0,\ldots, q-1,\infty\}$ and $b:V\rightarrow \{0,\ldots , p-1,\infty\}$. Informally, the functions $f$ and $b$ keep track of the forward and backward distance from selected arcs.
A triple $(D,f, b)$ is said to be a \emph{feasible solution} (for the input instance $G$) if the following holds:
\begin{enumerate}[(i)]
\item for every arc $(x,y)\in E$, $f(x)<\infty$ or $b(y)<\infty$, 
\item for every $x\in V$ with $f(x)<\infty$, either $x\in D^+$ or there exists $x'\in \delta^-(x)$ with $f(x') < f(x)$,
\item for every $y\in V$ with $b(y)<\infty$, either $y\in D^-$ or there exists $y'\in \delta^+(y)$ with $b(y') <  b(y).$
\end{enumerate}

The \emph{size} of a feasible solution $(D,f,b)$ is defined as the cardinality
of $D$.  Here, the symbol $\infty$ represents a prohibitively large number. If
there is no path from a vertex $w$ to $x$, we set $\dist(w,x)=\infty.$ By
convention $\infty + c=\infty$ and $c<\infty$ hold for any finite number $c$.
If $q=0$ (resp. $p=0$), then the range of $f$ (resp. $b$) is simply
$\{\infty\}$. 

\begin{lemma}\label{claim:global}
$G$ allows a $(p,q)$-edge dominating set of size at most $d$ 
if and only if there is a feasible solution $(D,f,b)$ of size at most $d.$ 
\end{lemma}
\begin{proof}
Suppose that $D$ is a $(p,q)$-edge dominating set, and for every $x\in V$ let
$$f(x)=
\begin{cases}
\dist(D^+,x) \qquad &\text{if } \dist(D^+,x)<q\\
\infty \qquad &\text {otherwise.}
\end{cases}
$$ 
and 
$$b(x)=
\begin{cases}
\dist(x,D^-) \qquad &\text{if } \dist(x,D^-)<q\\
\infty \qquad &\text {otherwise.}
\end{cases}
$$ 
To see that $(D,f,b)$ is a feasible solution, first note that for every $(x,y)\in E$ 
either $\dist(D^+,x)<q$ or $\dist(y,D^-)<p$, hence $f(x)<\infty$ or $b(y)<\infty$ holds, i.e. the feasibility condition (i) holds. 
Without loss of generality, we assume the former. 
It remains to observe that either $\dist(D^+,x)=0$ or $x$ has an in-neighbor $x'$ on the shortest path from $D^+$ and $x$, thus (ii) holds.

Conversely, suppose that $(D,f,b)$ is a feasible solution. To see that $D$ is a $(p,q)$-dominating set, 
it suffices to show that $\dist(D^+,x)\leq f(x)$ and $\dist(x,D^-)\leq b(x)$ for every $x\in V$ because the feasibility condition (i) 
then implies that every arc $(x,y)$ is $(p,q)$-dominated by $D$. We prove the inequality $\dist(D^+,x)\leq f(x)$ by induction of the value $f(x)$; 
the  inequality  $\dist(x,D^-)\leq b(x)$  can be shown similarly. 
Note that if $f(x)=0$, the feasibility condition (ii) enforces that $x\in D^+$. Hence we have $\dist(D^+,x)\leq f(x)$ in this case. 
Assume that $\dist(D^+,x)\leq f(x)$ for every $x$ with $f(x)\leq i<\infty.$ 
If $i=q-1$, then we are done since the inequality trivially holds for all $z$ with $f(z)=\infty$. Therefore, we assume 
that $i+1<\infty$. Consider an arbitrary $z\in V$ with $f(z)=i+1.$ If $z\in D^+$, 
then clearly we have $0=\dist(D^+,z)\leq f(z).$
Otherwise, $z$ has an in-neighbor $z'$ with $f(z')< f(z)=i+1$ by (ii). By induction hypothesis, we conclude $f(z)\geq f(z')+1 \geq \dist(D^+,z')+1\geq \dist(D^+,z).$
\end{proof}

\bigskip

\noindent {\bf Partial solution, feasibility of a partial solution, witness.} The above formulation of a solution seems convoluted but 
it is useful for defining a partial solution for the dynamic programming algorithm. 
For a node $t$, a \emph{partial solution at $t$} is a triple $(D_t,f_t,b_t)$, where $D_t\subseteq E(V_t)$, 
$f_t:V_t \rightarrow \{0,\ldots , q-1, \infty\}$ and $b_t: V_t \rightarrow \{0,\ldots , p-1, \infty\}$. 
A partial solution $(D_t,f_t,b_t)$  is said to be \emph{feasible} at $t$
if  the following holds:
\begin{enumerate}[(a)]
\item for every arc $(x,y)\in E(V_t)$, $f_t(x)<\infty$ or $b_t(x)<\infty$, 
\item for every $x\in V_t$ with $f_t(x)<\infty$, either $x\in D_t^+$, or there exists $x'\in \delta^-(x)\cap V_t$ with $f_t(x')< f_t(x),$ or $x\in B_t,$
\item for every $y\in V_t$ with $b_t(y)<\infty$, either $y\in D_t^-$, or there exists $y'\in \delta^+(y)\cap V_t$ with $b_t(y')< b_t(y)$ or $y\in B_t.$
\end{enumerate}

We say that a vertex $x\in V_t$ has an \emph{$f$-witness} for a partial solution $(D_t,f_t,b_t)$ if 
$f_t(x)=\infty$, $x\in D_t^+$ or there exists $x'\in \delta^-(x)\cap V_t$ with $f_t(x') < f_t(x).$ 
In each case, $x$ itself, an arc $(u,v)\in D$ 
with $v=x$, and an in-neighbor $x'$ with $f_t(x') < f_t(x)$ is called an $f$-witness of $x$ for $(D_t,f_t,b_t)$. Likewise, 
$x$ itself when $b(x)=\infty$, an arc $(u,v)\in D_t$ with $x=u$, or an out-neighbor $x'$ in $V_t$ with $b_t(y')< b_t(y)$ 
is a \emph{$b$-witness} of $x$ for $(D_t,f_t,b_t)$. From the definition of witness and the feasibility conditions (a)-(c), 
the next observation is immediate.

\begin{fact}\label{fact:feasible}
A partial solution $(D_t,f_t,b_t)$ is feasible at $t$ if and only 
the feasibility condition (a) holds and every vertex $x\in V_t\setminus B_t$ has both an $f$-witness and a $b$-witness 
for $(D_t,f_t,b_t)$.
\end{fact}

\noindent {\bf Signature $\tau$, canonical signature, realizability.} We define a \emph{signature} $\tau$ at a node $t$ as a tuple consisting of the following entries.

\begin{itemize}
\item A set of arcs $A \subseteq E(B_t)$.
\item A non-negative integer $d$.
\item  $f:B_t\rightarrow \{0,\ldots , q\}$. 
\item  $b:B_t\rightarrow \{0,\ldots , p\}$. 
\item  $s_f:B_t \rightarrow \{0,1\}$.
\item  $s_b:B_t\rightarrow \{0,1\}$.
\end{itemize}

\smallskip

Intuitively speaking, a signature $\tau$ captures the projection of a feasible partial solution $(D_t,f_t,b_t)$ on $B_t$ 
and additionally keeps track of whether $x\in B_t$ has seen a witness or not with the indicator functions $s_f$ and $s_b$. 
The integer number $d$ intends to record the number of {\sl forgotten} arcs in the feasible partial solution.

Formally, a signature $\tau=(A,d,f,b,s_f,s_b)$ at $t$ is the \emph{canonical signature} of a feasible partial solution $(D_t,f_t,b_t)$ at $t$ if
\begin{enumerate}[(a)]
\item $A=D_t\cap E(B_t).$
\item $d=|D_t \setminus E(B_t)|$. 
\item $f=f_t|_{B_t}$
\item $b=b_t|_{B_t}$
\item for every $x\in B_t$, $s_f(x)=1$ if and only if $x$ has a $f$-witness for $(D_t,f_t,b_t)$.
\item for every $x\in B_t$, $s_b(x)=1$ if and only if $x$ has a $b$-witness for $(D_t,f_t,b_t)$.
\end{enumerate}
%
Notice that for each feasible partial solution at $t$ there is a unique canonical signature of it. 
A signature $\tau$ at node $t$ is \emph{realizable} if it is the canonical signature of a feasible partial solution at $t$.
We also remark that $s_f(x)=1$ for any $x\in B_t$ with $f(x)=\infty$   as $x$ itself is an $f$-witness of $x.$

\smallskip
The next claim is useful.

\begin{lemma}\label{claim:witness}
Let $\tau=(A,d,f,b,s_f,s_b)$ be a realizable signature at node $t$ and 
$(D_t,f_t,b_t)$ be a partial feasible solution which realizes $\tau$. 
Then, for each arc $(x,y)\in E(V_t)$ which is not $(p,q)$-dominated by $D_t$ in $G[V_t]$ 
there exists a vertex $w_0\in B_t$ such that 
one of the following holds.
\begin{itemize}
\item $f(w_0)+\dist(w_0,x)\leq q-1$ and $s_f(w_0)=0$, or
\item $\dist(y,w_0) +b(w_0)\leq p-1$ and $s_b(w_0)=0$.
\end{itemize}
\end{lemma}
\begin{proof}
If $D_t$ $(p,q)$-dominates every arc of $G[V_t]$, the claim trivially holds, so suppose this is not the case. 
Consider an arc $(x,y)\in E(V_t)$ which is not $(p,q)$-dominated by $D_t$ in $G[V_t]$. 
By the (partial) feasibility condition (a), we have $f_t(x)\leq q-1$ or $b_t(y)\leq p-1$. Without loss of generality, assume $f_t(x)\leq q-1.$ 
By the feasibility condition (b), there exists a sequence of vertices $(x=)x_0, x_1,\ldots , x_\ell$ in $V_t$ such that $q-1\geq f_t(x_0)>f_t(x_1)>\cdots >f_t(x_\ell)$ 
and $x_\ell, \ldots, x_1,x_0$ forms a directed path of $G[V_t]$; 
we choose a maximal such sequence. Because the value of $f_t$ strictly decreases along the sequence, we have $\ell\leq q-1$. This means 
that $x_i \notin D_t^+$ for every $i\in \{0,\ldots, \ell\}$ since otherwise, $D_t$ $(p,q)$-dominates $(x,y)$ in $G[V_t]$, contradicting the choice of $(x,y).$ 
By $x_\ell\notin D_t^+$, $f_t(x_\ell)<\infty$ and the maximality assumption  on the sequence, $x_\ell$ cannot have an $f$-witness for $(D_t,f_t,b_t)$, which implies $x_\ell\in B_t$ 
by Fact~\ref{fact:feasible}. 
In particular, the condition (e) of the canonical signature indicates that $s_f(x_\ell)=0.$ 
Lastly, observe that the construction of the sequence ensures that 
$q-1\geq f_t(x_0)\geq f_t(x_{i+1})+1 \geq \cdots \geq f_t(x_\ell)+\dist(x_\ell,x_0).$ 

The proof is symmetric when $b_t(y)\leq p-1$ holds instead.
\end{proof}

\begin{lemma}\label{claim:root}
There exists a $(p,q)$-edge dominating set of $G$ of size at most $d'$ if and only if there exists a realizable signature 
$\tau=(A_\tau,d_\tau, f^\tau,b^\tau,s_f^\tau,s_b^\tau)$ at the root node 
such that (i) $|A_\tau|+d_\tau \leq d',$ and (ii) $s_f^\tau(w)=s_b^\tau(w)=1$ for every $w\in B_t$.
\end{lemma}
\begin{proof}
Suppose that $D$ is a $(p,q)$-edge dominating set of $G$ of size at most $d'$. 
Let $(D,f^*,b^*)$ be a (global) feasible solution of size at most $d$; the existence of such a solution is guaranteed by Lemma~\ref{claim:global}. 
Thanks to the global feasibility condition (ii)-(iii), every vertex of $V$ has an $f$-witness as well as a $b$-witness. In particular this means that 
in the canonical signature $\tau=(A,d, f,b,s_f,s_b)$ of $(D,f^*,b^*)$, where $(D,f^*,b^*)$ is seen as a partial feasible solution at the root $r$, 
we have $s_f(x)=s_b(x)=1$ for every $x\in B_t.$ Clearly, $|A|+d\leq d'$ by the conditions (a)-(b) of the canonical signature. 
Therefore $\tau$ satisfies (i)-(ii) in the statement.

Conversely, let $\tau=(A_\tau,d_\tau, f^\tau,b^\tau,s_f^\tau,s_b^\tau)$ be a realizable signature at the root $r$ 
which meets the conditions (i)-(ii) of the statement. Let $(D_r,f_r,b_r)$ be a feasible solution whose canonical signature at $r$ is $\tau.$ We want to 
prove that $D_r$ is a $(p,q)$-edge dominating set of $G$ of size at most $d'.$
By the conditions (a)-(b) of the canonical signature, 
we have $|D_r|=|D_r\cap E(B_r)|+|D_r\setminus E(B_r)|=|A_\tau|+d_\tau$, which is at most $d'$ by the condition (i) of the statement. 

It remains to see that $D_r$ $(p,q)$-dominates every arc of $G$. Suppose not, and $(x,y)\in E=E(V_r)$ 
is not $(p,q)$-dominated by $D_r$. By Lemma~\ref{claim:witness}, there exists a vertex $w_0\in B_r$ with $s_f(w_0)=0$ or $s_b(w_0)=0$, 
which is impossible due to the condition (ii) in the statement.
\end{proof}

\smallskip

\noindent {\bf Computing all valid signatures.} For two  signatures $\tau$ and $\tau'$ at node $t$, we say that 
$\tau$ is \emph{superior} to $\tau'$ if all the entries of $\tau$ and $\tau'$ are identical except for the integer entry, in which $\tau$ takes a strictly smaller 
value than $\tau'$ does. A signature $\tau$ at $t$ is \emph{supreme} if there is no other realizable signatures at $t$ which is superior to $\tau.$ 
A signature is \emph{valid} if it is realizable and supreme.
Thanks to Lemma~\ref{claim:root}, it is sufficient to design a bottom-up procedure which produces all valid signatures at each node $t$ (and possibly some invalid ones as well) 
and determines whether a signature is valid or not, 
provided that all valid signatures have been computed for the children of $t$ (and invalid ones have been discarded). 

We provide such a procedure for each type of a tree node $t$: 
{\sl leaf, introduce, join} and {\sl forget} and argue that a signature $\tau$ at node $t$ is generated {\sl and} declared valid 
if and only if $\tau$ is indeed a valid signature. 

\bigskip

\noindent {\sl $\bullet$ Leaf node.} Let $B_t=\{w\}$. 
We generate all signatures $\tau=(A,d, f,b,s_f,s_b)$ with $A=\emptyset$, $d=0$, $f(w)\in \{0,\ldots , q-1,\infty\}$, $b(w)\in \{0,\ldots , p-1,\infty\}$ and 
$s_f(w)\in \{0,1\}, s_b(w)\in \{0,1\}$. 
We discard all unrealizable signatures. Deciding whether a signature is realizable or not is trivial in this case; 
simply check whether $f(w)=\infty$ if and only $s_f(w)=1$, likewise $b(w)=\infty$ if and only if $s_b(w)=1.$

\bigskip

\noindent {\sl $\bullet$ Introduce node.} Let $w$ be a newly introduced vertex and $B_t=B_{t'}\cup \{w\}.$
For a (not necessarily feasible) partial solution $(D_{t'},f_{t'},b_{t'})$ at node $t'$ and a triple $(\bar{A}, r,s ) \in 2^{\delta(w)}\times  \{0,\ldots , q-1,\infty\} \times \{0,\ldots , q-1,\infty\}$, 
the \emph{extension of $(D_{t'},f_{t'},b_{t'})$ by $(\bar{A}, a,b)$} is a partial solution $(D_t,f_t,b_t)$ at $t$ 
such that $D_t=D_{t'}\cup \bar{A}$, $f_t(x)=f_{t'}(x)$ for every $x\in B_{t'}$ and $f_t(w)=r$, and $b_t(x)=b_{t'}(x)$ for every $x\in B_{t'}$ and $b_t(w)=s$. 
We first observe that not only the extension of a partial solution is well-defined but also the extension of a signature by such a triple is well-defined. 

\begin{lemma}\label{claim:extequivalent} 
Let $(D_t,f_t,b_t)$ and $(D_{t'},f_{t'},b_{t'})$ be feasible partial solutions at node $t$ and $t'$ respectively, 
and let $(\bar{A}, a,b) \in  2^{\delta(w)}\times  \{0,\ldots , q-1,\infty\} \times \{0,\ldots , q-1,\infty\}$. 
Suppose $(D_t,f_t,b_t)$ is the extension of $(D_{t'},f_{t'},b_{t'})$ by the triple $(\bar{A}, r,s)$. 
Then 
\begin{enumerate}
\item Any vertex of $V_{t'}$ which has an $f$-witness for $(D_{t'},f_{t'},b_{t'})$ also has  an $f$-witness for $(D_t,f_t,b_t)$. 
\item $x\in B_{t'}$ does not have an $f$-witness for $(D_{t'},f_{t'},b_{t'})$ and has an $f$-witness  $(D_t,f_t,b_t)$ 
if and only if $x\in \delta^+(w)$, and either $x\in \bar{A}^+$ or $f(x)>f(w)$ holds. 
\item Any witness of $w$ for $(D_t,f_t,b_t)$ is in $G[B_t].$
\end{enumerate}
The symmetric statement holds for $b$-witnesses.
\end{lemma}
\begin{proof}
The first two statements are clear from the construction and the definition of $f$-witness. 
To see the third statement, it suffices to observe that any witness of $w$ is either $w$ itself, an arc incident with $w$ or an in-neighbor of $w$ in $V_t.$ 
In the first two cases, it is obvious that the witness is in $G[B_t]$. In the last case, note that $B_{t'}$ is a separator between $w$ and $V_t\setminus B_t$ 
and thus, a witness of $w$ as an in-neighbor of $w$ must be contained in $B_t$. 
\end{proof}

Lemma~\ref{claim:extequivalent} implies that if two feasible partial solutions at $t'$ have the same canonical signature $\tau'$ at $t'$, their extensions 
by a fixed triple $(\bar{A}, r,s)$ have the same canonical signature at $t$. 
This leads us to define the \emph{extension of a signature $\tau'$ at $t'$ by $(\bar{A}, r, s)$}.
For a signature $\tau'=(A',d', f',b',s'_f,s'_b)$ at node $t'$ and a triple $(\bar{A}, a,b) \in 2^{\delta(w)}\times  \{0,\ldots , q-1,\infty\} \times \{0,\ldots , q-1,\infty\}$, 
the \emph{extension of $\tau'$ by $(\bar{A}, r,s)$} is the signature $\tau=(A,d, f,b,s_f,s_b)$ at $t$ defined as
\begin{itemize}
\item $A=A'\cup \bar{A}$, 
\item $d=d'$, 
\item $f(x)=f'(x)$ for every $x\in B_{t'}$ and $f(w)=r$, 
\item $b(x)=b'(x)$ for every $x\in B_{t'}$ and $b(w)=s$, 
\item for every $x\in B_t$, $s_f(x)=1$ if and only if (i) $x\in \bar{A}^+$, or (ii) there exists $x'\in N^-(x)\cap B_t$ with $f(x') < f(x)$, or (iii) $f(x)=\infty$, or (iv) $x\in B_{t'}$ and $s'_f(x)=1,$ 
\item for every $y\in B_t$, $s_b(y)=1$ if and only if (i) $y\in \bar{A}^-$, or (ii) there exists $y'\in N^+(y)\cap B_t$ with $b(y') < b(y)$, or (iii) $b(y)=\infty$, or (iv) $y\in B_{t'}$ and $s'_b(y)=1,$
\end{itemize}

To obtain the set of all valid signatures at $t$, we consider all extensions over all valid signatures at $t'$ 
by all triple $(\bar{A}, a,b) \in 2^{\delta(w)}\times  \{0,\ldots , q-1,\infty\} \times \{0,\ldots , q-1,\infty\}$ such that 
the next two conditions are met.
\begin{itemize}
\item [$(*)$] for every arc $(w,x) \in E(B_t)\cap \delta^+(w)$, either $f(x)<\infty$ or $r<\infty$ holds, and 
\item [$(**)$] for every arc $(x,w) \in E(B_t)\cap \delta^-(w)$, either $b(x)<\infty$ or $s<\infty$ holds.
\end{itemize}
Then 
among the obtained signatures, supreme signatures are marked and the unmarked signatures are discarded. That this procedure 
produces all valid signatures follows from Lemma~\ref{lem:ext2}. Moreover, any generated signature is realizable by Lemma~\ref{lem:ext1}. 
Therefore, those signatures which are marked as supreme are precisely the set of all valid signatures at $t.$

\begin{lemma}\label{lem:ext2}
Let $\tau=(A,d, f,b,s_f,s_b)$ be a realizable signature at node $t$. Then there exists  a triple $(\bar{A}, r,s) \in 2^{\delta(w)}\times  \{0,\ldots , q-1,\infty\} \times \{0,\ldots , q-1,\infty\}$ and 
a realizable signature $\tau'=(A',d', f',b',s'_f,s'_b)$ at node $t'$ such that  $\tau$ is the extension of $\tau'$ by $(\bar{A}, r,s)$, 
and the conditions $(*)$ and $(**)$ hold.
\end{lemma}
\begin{proof}
Let $(D_\tau,f_\tau,b_\tau)$ be a partial feasible solution at $t$ whose canonical signature is $\tau$, 
and let $(D_\tau \setminus \delta(w), f_\tau|_{V_{t'}}, b_\tau|_{V_{t'}})$ be a partial solution at $t'$. It is clear that the latter is feasible at $t'$.  
Consider the triple $(A\cap \delta(w),f(w),b(w))$ and the canonical signature $\tau'=(A',d', f',b',s'_f,s'_b)$ of 
$(D_\tau \setminus \delta(w), f_\tau|_{V_{t'}}, b_\tau|_{V_{t'}})$ at $t'$. 
It is tedious to check that $\tau$ is the extension of $\tau'$ by $(A\cap \delta(w),f(w),b(w))$. 
The conditions $(*)$ and $(**)$ are met because $(D_\tau,f_\tau,b_\tau)$ is feasible, 
and due to the construction of the triple and $\tau'$.
\end{proof}

\begin{lemma}\label{lem:ext1}
Let $\tau'=(A',d', f',b',s'_f,s'_b)$ be a realizable signature at node $t'$. Then for any triple $(\bar{A}, a,b) \in 2^{\delta(w)}\times  \{0,\ldots , q-1,\infty\} \times \{0,\ldots , q-1,\infty\}$, 
the extension $\tau=(A,d, f,b,s_f,s_b)$ of $\tau'$ by $(\bar{A}, a,b)$ is realizable if and only if
the conditions $(*)$ and $(**)$ hold.
\end{lemma}
\begin{proof}
Let us see the `only if' part. If any of $(*)$ and $(**)$ fails to hold, then any partial solution whose canonical signature is $\tau$ fails to meet the feasibility condition (a), and thus 
cannot be a feasible partial solution at $t.$ 

For the `if' direction, let $(D_{\tau'},f_{\tau'},b_{\tau'})$ be a feasible partial solution which realizes $\tau'$ and let $(D_\tau,f_\tau,b_\tau)$ be the extension of it by the triple $(\bar{A}, a,b)$. 
It is tedious to check that $(D_\tau,f_\tau,b_\tau)$ is a feasible partial solution at $t$; the feasibility condition (a) 
is guaranteed by the feasibility of $(D_{\tau'},f_{\tau'},b_{\tau})$ and because  the conditions $(*)$ and $(**)$ 
hold for $(D_{\tau},f_{\tau},b_{\tau})$ and the triple $(\bar{A}, a,b)$. Also the feasibility condition (b) is
 satisfied due to the statement 1 of Lemma~\ref{claim:extequivalent}. 
 It remains to observe that $\tau$ is the canonical signature of $(D_\tau,f_\tau,b_\tau)$, which is again tedious to verify 
 using Lemma~\ref{claim:extequivalent}.
\end{proof}

\bigskip

\noindent {\sl $\bullet$ Join node.} Let $t_1$ and $t_2$ be the two children of $t$ with $B_t=B_{t_1}=B_{t_2}.$
For two signatures $\tau_i=(A^i,d^i, f^i,b^i,s^i_f,s^i_b)$ at node $t_i$ for $i=1,2$ which are \emph{compatible}, i.e. $A^1=A^2$, $f^1=f^2$ and $b^1=b^2$, 
the \emph{join} $\tau=(A, d, f,b,s_f,s_b)$ of $\tau_1$ and $\tau_2$ is defined as follows.  
\begin{itemize}
\item $A=A^1=A^2$.
\item $d=d_1+d_2.$
\item $f=f^1=f^2$, $b=b^1=b^2,$ 
\item for every $x\in B_t$, $s_f(x)=s^1_f(x) \vee s^2_f(x),$ and 
\item for every $x\in B_t$, $s_b(x)=s^1_b(x) \vee s^2_b(x)$.
\end{itemize}

For every compatible pair of valid signatures at $t_1$ and $t_2$, we generate the join. 
After that, we only keep the supreme signatures and discard the rest. That 
the signatures obtained in this way form the set of all valid signatures at node $t$ follows immediately 
from the next lemma.  

\begin{lemma} \label{lem:join}
A signature is valid at $t$ if and only if it is the join of two valid signatures at $t_1$ and $t_2$ which are compatible.
\end{lemma}
\begin{proof}
Let $\tau=(A, d, f,b,s_f,s_b)$ be a realizable signature at $t$ with 
a partial feasible solution $(D_\tau, f_\tau, b_\tau)$ realizing $\tau.$ 
Let $(D_i,f_i,b_i)$  be the partial solution at $t_i$ where $D_i=D_\tau\cap E(V_{t_i}),$ $f_i=f_\tau|_{V_{t_i}}$ and $b_i=b_\tau|_{V_{t_i}}$ for each $i=1,2.$ 
Clearly,  $(D_i,f_i,b_i)$ is feasible at $t_i$ for each $i=1,2.$
Let $\tau_i=(A^i,d^i, f^i,b^i,s^i_f,s^i_b)$ for $i=1,2$ be the canonical signature of $(D_i,f_i,b_i)$.
It is clear that  $\tau_1$ and $\tau_2$ are compatible and $\tau$ is the join of them. 

Conversely, let $\tau_i=(A^i,d^i, f^i,b^i,s^i_f,s^i_b)$ be realizable signatures at node $t_i$ for $i=1,2$ with $A^1=A^2$, $f^1=f^2$ and $b^1=b^2$ 
and let $(D_{\tau_i}, f_{\tau_i},b_{\tau_i})$ be a feasible partial solution realizing $\tau_i$ for $i=1,2.$ 
Let $D=D_{\tau_1}\cup D_{\tau_2}$, $f=f_{\tau_1}\cup f_{\tau_2}$ and $b=b_{\tau_1}\cup b_{\tau_2}$. 
We argue that the join $\tau=(A, d, f,b,s_f,s_b)$ of $\tau_1$ and $\tau_2$ is the canonical signature of $(D,f,b).$ 
The conditions (a)-(d) of a canonical signature is straightforward from $B_{t_1}=B_{t_2}=B_t$. To see (e) and (f), 
notice that $x\in B_t$ has an $f$-witness (resp. $b$-witness) for $(D,f,b)$ if and only if 
$x$ has an $f$-witness (resp. $b$-witness) for at least one of $(D_{\tau_i}, f_{\tau_i},b_{\tau_i})$ for $i=1,2.$ 
The latter holds precisely when $s_f(x)=1$ (resp. $s_b(x)=1$). 

Lastly, if there is a realizable signature superior to $\tau$, then one can obtain 
a realizable signature superior to $\tau_1$ or $\tau_2.$ Moreover, if any of $\tau_1$ and $\tau_2$ allows a realizable signature superior to it, 
one can obtain a realizable signature superior to $\tau.$ This completes the proof.
\end{proof}

\smallskip

\noindent {\sl $\bullet$ Forget node.} Let $w$ be the forgotten vertex and $B_t=B_{t'}\setminus \{w\}.$ 
For each valid signature $\tau'=(A',d', f',b',s'_f,s'_b)$ at node $t'$, let $\tau=(A,d, f,b,s_f,s_b)$ be the \emph{restriction} of $\tau'$ on $B_t$, namely 
\begin{itemize}
\item $A=A'\setminus \delta(w).$
\item $d=d' + A'\cap \delta(w).$
\item $f(x)=f'(x)$, $b(x)=b'(x)$, $s_f(x)=s'_f(x)$ and $s_b(x)=s'_b(x)$ for every $x\in B_t.$
\end{itemize}
The new signature $\tau$ is declared valid if and only if it is a restriction of some $\tau'$ at $t'$ with $s'_f(w)=s'_b(w)=1$. 
We claim that a signature $\tau$ at $t$ is valid 
if and only if it is generated and then declared valid. 

\begin{lemma}\label{lem:forget}
A signature $\tau$ at $t$ is valid if and only if it is the restriction of a valid signature $\tau'$ at $t'$ with $s'_f(w)=1$ and $s'_b(w)=1.$
\end{lemma}
\begin{proof}
Suppose $\tau'$ is a realizable signature at $t'$ with $s'_f(w)=s'_b(w)=1$ and the restriction of $\tau'$ on $B_t$ is $\tau$. 
Let $(D_{\tau'}, f_{\tau'}, b_{\tau'})$ be a feasible partial solution at $t'$ which realizes $\tau'$. We first argue that 
$(D_{\tau'}, f_{\tau'}, b_{\tau'})$ is a feasible partial solution at $t$. 
Clearly, $D_{\tau'}$ is fully contained in $E(V_t)=E(V_{t'})$ and the feasibility condition (a)  holds. 
To see (b), it is sufficient to verify that every vertex $z\in V_t\setminus B_t$ has an $f$-witness. This holds for every $z\neq w$ 
because $(D_{\tau'}, f_{\tau'}, b_{\tau'})$ is a feasible partial solution at $t'$. For $z=w$, that $s'_f(w)=1$ 
implies that $w$ has an $f$-witness for $(D_{\tau'}, f_{\tau'}, b_{\tau'})$ by the condition (e) of the canonical signature. 
The feasibility condition (c) can be similarly verified. 
It remains to observe that $\tau$ is the canonical signature of $(D_{\tau'}, f_{\tau'}, b_{\tau'})$ at $t$.

%

Conversely, suppose that $\tau$ is a realizable signature at $t$ and let $(D_{\tau}, f_{\tau}, b_{\tau})$ 
be a feasible partial solution at $t$ realizing $\tau$. 
Because $(D_{\tau}, f_{\tau}, b_{\tau})$ is feasible at $t$, every vertex $z\in V_t\setminus B_t$ has an $f$-witness (resp. $b$-witness) 
for $(D_{\tau}, f_{\tau}, b_{\tau})$. This implies that $(D_{\tau}, f_{\tau}, b_{\tau})$ is feasible at $t'$. 
Let $\tau'=(A',d', f',b',s'_f,s'_b)$ be the canonical signature of $(D_{\tau}, f_{\tau}, b_{\tau})$ at $t'$ 
Clearly, the restriction of $\tau'$ on $B_t$ equals $\tau$. That $s'_f(w)=s'_b(w)=1$ follows from the fact that $w\notin B_t$ and thus 
it has an $f$-witness for $(D_{\tau}, f_{\tau}, b_{\tau})$ due to the feasibility of $(D_{\tau}, f_{\tau}, b_{\tau})$.

To complete the proof, notice that the above constructions in both directions also establish that if $\tau$ at node $t$ is supreme if and only 
if it is the restriction of some supreme signature $\tau'$ at $t'$ with $s'_f(w)=1$ and $s'_b(w)=1.$
\end{proof}

By Lemmas~\ref{lem:ext2},~\ref{lem:ext1},~\ref{lem:join} and~\ref{lem:forget}, the procedures presented for introduce, join and forget nodes 
generate precisely the set of valid signatures at each node $t$. Finally, we can correctly decide if $G$ has a $(p,q)$-edge dominating set of 
size at most $d$ by examining the signatures at the root node thanks to Lemma~\ref{claim:root}.

\bigskip 

\noindent {\bf Running time.} 
At each node $t$, the number of possible signatures, except for the integer value $d$, generated from the child(ren) is at most 
$4^{\tw^2}\cdot (q+1)^{\tw}\cdot (p+1)^{\tw}\cdot 2^{\tw} \cdot 2^{\tw}$. Note that the signatures which are not supreme 
are generated amongst these options from the children of $t$ and discarded, and all in all at most $4^{\tw^2}(4 (q+1)(p+1))^\tw$ 
signatures are generated and examined. Examining each signature for checking the validity can be executed in $4^{2\tw^2}(4 (q+1)(p+1))^{2\tw}$  time. 
This yields the claimed running time, and completes the proof of Theorem~\ref{thm_tw_DP}.

\section{On Tournaments}\label{sec_tournaments}

A complete complexity classification for the problems $(p,q)$-d\eds\ is
presented in this section. For
$p=q=1$, the problem is NP-hard under a randomized reduction while being
amenable to an FPT algorithm and polynomial kernelization, due to the results of
Sections~\ref{subsec:fpt} and~\ref{sec_kernel}. The hardness reduction is given in
Subsection~\ref{subsec:tournament11}.  When
$p=2$ or $q=2$, the complexity status of $(p,q)$-d\eds\ is equivalent to \textsc{Dominating Set} on tournaments and is discussed in Subsection~\ref{subsec:tournament2}. In the remaining cases, when $p+q\leq 1$, or $\max\{p,q\}\geq 3$, 
while neither of them equals 2, the
problems turn out to be in P (Subsection~\ref{subsec:tournamentpoly}).

\subsection{Hard: when $p=q=1$}\label{subsec:tournament11} We present a
randomized reduction from \textsc{Independent Set} to $(1,1)$-d\eds. Our
reduction preserves the size of the instance up to polylogarithmic factors; as
a result it shows that $(1,1)$-d\eds\ does not admit a $2^{n^{1-\epsilon}}$
algorithm, under the randomized ETH. Furthermore, our reduction preserves the
optimal value, up to a factor of $(1-o(1))$; as a result, it shows that
$(1,1)$-d\eds\ is APX-hard under randomized reductions.

Before moving on, let us give a high-level overview of our reduction. The first
step is to reduce \textsc{Independent Set} on cubic graphs to the following intermediate problem 
called \textsc{Almost Induced Matching}, also known as \textsc{Maximum Dissociation Number} in the literature (\cite{Yannakakis81a,XiaoK17}).
A subgraph of $G$ induced on a vertex set $S\subseteq V$ is called an \emph{almost induced matching}, if 
every vertex $v\in S$ has degree $\le1$ in $G[S]$.

\begin{definition}
The problem \textsc{Almost Induced Matching (AIM)} takes as input an undirected graph $G=(V,E)$. The goal is 
to find an almost induced matching having the maximum number of vertices. 
\end{definition}

Our reduction creates an
instance of \textsc{Almost Induced Matching} that has several special
properties, notably producing a bipartite graph $G=(A,B,E)$. From this we then build our instance for $(1,1)$-d\eds. The basic strategy
will be to construct a tournament $T=(V',E')$, where $V'=A\cup B\cup C$
and $C$ is a set of new vertices. All edges of $E$ will be directed from $A$ to
$B$, non-edges of $E$ will be directed from $B$ to $A$, and all other edges
will be set randomly.  This intuitively encodes the structure of $G$ in $T$.

The idea is now that a solution $S$ in $G$ (that is, a set of vertices of $G$
that induces a graph with maximum degree $1$) will correspond to an edge
dominating set in $T$ where all vertices except those of $S$ will have total
degree $2$, and the vertices of $S$ will have total degree $1$ (in the solution). In particular,
vertices of $S\cap A$ will have out-degree $1$ and in-degree $0$, and vertices
of $S\cap B$ will have in-degree $1$ and out-degree~$0$.

The random structure of the remaining arcs of the tournament $T$ is useful in
two respects: in one direction, given the solution $S$ for $G$, it is easy to
deal with vertices that have degree $1$ in $G[S]$: we select the corresponding
arc from $A$ to $B$ in $T$. For vertices of degree $0$, however, we are forced
to look for edge-disjoint paths that will allow us to achieve our degree goals.
Such paths are guaranteed to exist if $C$ is random and large enough.  In the
other direction, given a good solution in $T$ we would like to guarantee that,
because the internal structure of $A$, $B$, and $C$ is chaotic, the only way to
obtain a large number of vertices with low degree is to place those with
in-degree $0$ in~$A$, and those with out-degree $0$ in $B$. The main result of this subsection is the following.

\begin{theorem}[Main]\label{thm_11_NP}
$(1,1)$-d\eds\ on tournaments cannot be solved in polynomial time, unless NP $\subseteq $ BPP. Furthermore, $(1,1)$-d\eds\ is APX-hard under randomized reductions, and does not admit an algorithm running in time $2^{n^{1-\epsilon}}$ for any $\epsilon$, unless the randomized ETH is false.
\end{theorem}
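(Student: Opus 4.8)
The plan is to realize the high-level strategy sketched just before the theorem statement, organized as a chain of three reductions whose composition yields all three claimed lower bounds simultaneously. First I would start from \textsc{Independent Set} on cubic graphs, which under the ETH admits no $2^{o(n)}$ algorithm and is APX-hard, and reduce it to \textsc{Almost Induced Matching} on a \emph{bipartite} graph $G=(A,B,E)$ with $|A|=|B|=m$; the point of routing through the induced-matching formulation is that a solution will translate cleanly into degree constraints in the tournament (degree-$0$ vertices of the induced subgraph become degree-$1$ vertices of the dominating set). I would make sure this intermediate reduction is linear in the number of vertices and preserves the optimum up to a $(1\pm o(1))$ factor, since both properties are needed downstream for the running-time and APX consequences.

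The core construction is the tournament $T=(V',E')$ with $V'=A\dot\cup B\dot\cup C$, where $C$ is a set of $\mathrm{polylog}$-many fresh vertices. I would fix the orientation of the $A$--$B$ arcs deterministically (edges of $E$ oriented $A\to B$, non-edges oriented $B\to A$) to encode $G$, and orient every remaining arc (inside $A$, inside $B$, inside $C$, and between these blocks) independently and uniformly at random. The design goal, stated in the overview, is the equivalence: $G$ has an almost-induced matching of size $t$ if and only if $T$ has a $(1,1)$-edge dominating set in which $|V'|-t$ of the relevant vertices carry total degree exactly $2$ and $t$ of them carry total degree $1$, with the degree-$1$ vertices of $S\cap A$ having in-degree $0$ and those of $S\cap B$ out-degree $0$. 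I would prove the two directions separately.

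For the forward direction (solution in $G$ gives a good dominating set in $T$), the subtlety is the degree-$0$ vertices of $G[S]$: these cannot be satisfied by a single encoded $A\to B$ arc, so I must find short edge-disjoint paths through the random part of the tournament (principally through $C$) that hit every arc while keeping the targeted vertices at degree $1$. The key lemma here is a \emph{with-high-probability} statement: if $C$ is random and of size $\Theta(\mathrm{polylog}\, m)$ (or whatever bound the union over all vertices requires), then the desired collection of edge-disjoint covering paths exists with probability $1-o(1)$; this is where I expect to invoke concentration (Chernoff-type) bounds on out- and in-neighborhoods in the random tournament, followed by a union bound over the polynomially many events. For the reverse direction (good dominating set in $T$ gives a large almost-induced matching in $G$), I would argue that because the internal structure of $A$, $B$, $C$ is random and hence ``chaotic,'' any solution achieving many low-degree vertices is forced, w.h.p., to place its in-degree-$0$ vertices in $A$ and out-degree-$0$ vertices in $B$ and to use precisely the encoded $A\to B$ arcs; the low-degree vertices then read off an almost-induced matching of matching size. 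Again this needs a high-probability structural lemma ruling out ``cheating'' configurations, proved by counting the relevant bad events and applying a union bound.

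The main obstacle, and the most technical part, is establishing these two probabilistic lemmas with a union bound that survives: I must control simultaneously, over all candidate low-degree vertices and all candidate covering-path systems, that the random arcs cooperate, while keeping $|C|$ small enough that $|V'|=m+\mathrm{polylog}\,m$ and the optimum is preserved up to $(1-o(1))$. Once the equivalence holds w.h.p., the three conclusions follow formally: the reduction is computable by a randomized polynomial-time algorithm that succeeds with probability bounded away from $0$, giving NP-hardness under $\mathrm{NP}\subseteq\mathrm{BPP}$; the near-linear size blow-up $n'=m+\mathrm{polylog}\,m = n^{1+o(1)}$ transfers the $2^{o(n)}$ ETH lower bound for \textsc{Independent Set} to a $2^{n'^{\,1-\epsilon}}$ lower bound for $(1,1)$-d\eds\ under the randomized ETH; and the $(1\pm o(1))$ preservation of the optimum transfers APX-hardness. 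I would present the construction first, then the two w.h.p.\ lemmas, then read off the three corollaries.
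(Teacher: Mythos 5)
Your overall architecture matches the paper's: reduce \textsc{Independent Set} on cubic graphs to \textsc{Almost Induced Matching} on bounded-degree bipartite graphs, then embed $G=(A,B,E)$ into a tournament on $A\cup B\cup C$ with the $A$--$B$ arcs deterministic and everything else random, with degree-$1$ vertices of the solution encoding the almost induced matching. However, there is a genuine quantitative flaw in your construction that breaks the completeness (forward) direction: you take $|C|=\mathrm{polylog}(m)$ and ask only for \emph{edge-disjoint} covering paths, whereas the paper takes $|C|=4n$, i.e.\ linear, and requires \emph{vertex-disjoint} paths. Both choices are forced. First, the size accounting that makes the reduction work is $|D| \leq |V(T)|-L/2+1$, and the saving of $L/2$ arcs comes precisely from vertex-disjointness: a system of vertex-disjoint paths covers $|V(\mathcal{P})|$ vertices using $|V(\mathcal{P})| - (\#\text{paths})$ arcs, so each degree-$0$ pair of the matching ``saves'' one arc over the trivial one-arc-per-vertex baseline (a Hamiltonian path). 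If paths may share internal vertices, a path of length $2$ through a reused vertex of $C$ costs $2$ arcs while covering only its $2$ endpoints, the saving vanishes, and the solution overshoots the target bound by roughly $|S_0|/2 - \mathrm{polylog}(n) = \Omega(n)$. Second, once the paths must be vertex-disjoint, there must be $|S_0|/2 \geq n/20$ of them, each with at least one internal vertex, and they essentially must cross from the $A$-side to the $B$-side through $C$: the only deterministic $A\to B$ arcs are the $G$-edges (at most $4$ per vertex, and the adversarial structure of $G$ can prevent a large matching of usable crossings), while all non-edges are oriented $B\to A$. Hence $|C|$ must be $\Omega(n)$; indeed the paper's Lemma~\ref{lem:nostrongbias3} needs $|C\setminus T|\ge 3n+1$ vertices surviving any candidate separator $T$ for its Menger-type union-bound argument, which is impossible with polylogarithmic $C$.

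Your motivation for shrinking $C$ --- keeping the blow-up at $n^{1+o(1)}$ --- is unnecessary: a constant-factor blow-up (the paper's tournament has $6n$ vertices) is harmless for all three conclusions, since $2^{(cn)^{1-\epsilon}} = 2^{o(n)}$, and the paper's final instance is $n\,\mathrm{polylog}\,n$ anyway after gap amplification. On that note, you also gloss over two points the paper needs: (i) the reverse direction only recovers an almost induced matching of size $L - O(\log^2 n)$, so before applying the construction one must amplify the gap (the paper takes a disjoint union of $10(\log \ell)^2$ copies of the AIM instance) to make the additive polylog loss innocuous; and (ii) the AIM instance must be guaranteed to have an optimal solution with $\Omega(n)$ degree-$0$ vertices (the paper's Lemma~\ref{lem:AIMnphard} arranges $n/20$), since the vertex-disjoint-paths lemma only holds for linearly large sets. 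These two are repairable details, but the choice $|C|=\mathrm{polylog}$ with edge-disjoint paths is not: with it, either the paths do not exist or the resulting dominating set is too large, so the reduction as you propose it fails.
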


To prove Theorem~\ref{thm_11_NP}, we first reduce the \textsc{Independent Set} problem on cubic graphs to \textsc{Almost Induced Matching}. 
Before presenting the first reduction, we recall here the following theorem(s) for Independent Set, that will act as our starting point.

\begin{theorem}\cite{AlimontiK00,CyganFKLMPPS15}
\textsc{Independent Set} is APX-hard  on cubic graphs. Furthermore, \textsc{Independent Set} cannot be solved in time $2^{o(n)}$ unless the ETH is false.
\end{theorem}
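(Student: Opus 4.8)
The statement bundles two standard hardness results for \textsc{Independent Set} on cubic (3-regular) graphs, and my plan is to derive each by a short reduction chain starting from satisfiability, reusing as much of the chain as possible.

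For the APX-hardness, the plan is to begin from the APX-hardness of \textsc{Max-3-SAT}, the prototypical consequence of the PCP theorem (a constant-gap version is NP-hard). First I would pass to a \emph{bounded-occurrence} version \textsc{Max-3-SAT}$(B)$, in which each variable occurs in at most $B$ clauses for an absolute constant $B$, via the classical expander-replacement trick: a variable occurring $d$ times is split into $d$ copies placed on the vertices of a constant-degree expander, whose good expansion lets equality clauses along the expander edges enforce consistency while changing the optimum only by a constant factor. Since expanders have linearly many edges, this is an L-reduction and \textsc{Max-3-SAT}$(B)$ stays APX-hard. Next I would apply the textbook clause-gadget reduction to \textsc{Independent Set}: each clause becomes a triangle on its three literals and opposite-polarity literal vertices are joined across clauses. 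Bounded occurrence forces bounded maximum degree, and the reduction is again an L-reduction, giving APX-hardness on bounded-degree graphs.

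The last and most delicate step is to drive the maximum degree down to exactly $3$ without destroying the gap, which is the heart of the Alimonti--Kann argument. The plan is to apply an approximation-preserving degree-reduction gadget that replaces each vertex $v$ of degree $d(v)>3$ by a constant-degree structure (for instance a cycle of $d(v)$ copies, with each original incident edge routed to a distinct copy), and to finish with a padding step enforcing exact $3$-regularity. The main obstacle is the accounting for this gadget: one must verify simultaneously that the additive change in the optimum is bounded by a constant times the optimum of the source instance (so the composition remains an L-reduction, preserving APX-hardness) and that a degree-$d$ vertex is replaced by only $O(d)$ new vertices (so the blow-up is linear). Getting both properties out of a single gadget is the crux.

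For the ETH lower bound I would reuse the very same chain, now feeding it through the \emph{Sparsification Lemma}. Under ETH the lemma lets me assume \textsc{3-SAT} on $n$ variables already has $m=O(n)$ clauses yet admits no $2^{o(n)}=2^{o(n+m)}$ algorithm. Running the (polynomial-time) reduction to \textsc{Independent Set} on such a sparse formula yields a graph on $O(n+m)=O(n)$ vertices, and the degree-reduction gadgets above inflate this only by a constant factor, since $\sum_v d(v)=O(m)=O(n)$ and each high-degree vertex contributes $O(d(v))$ copies. The resulting cubic instance therefore has $N=O(n)$ vertices, so a $2^{o(N)}$ algorithm on cubic graphs would solve sparse \textsc{3-SAT} in $2^{o(n)}$, contradicting ETH. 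Here no gap is needed, only that every gadget in the chain is linear-size-preserving; once that is confirmed, both halves of the theorem follow from the same construction.
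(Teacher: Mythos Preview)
The paper does not prove this theorem at all: it is stated as a citation to \cite{AlimontiK00,CyganFKLMPPS15} and used as a black box to seed the subsequent reductions (Lemma~\ref{lem:AIMnphard} and Theorem~\ref{prop:11edsreduction}). There is therefore no ``paper's own proof'' to compare your attempt against.

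That said, your sketch is a faithful outline of the standard arguments behind the cited results. One caution on the degree-reduction step: the naive cycle-of-copies gadget you mention as an example does not by itself yield an L-reduction (the optimum inside the gadget depends on the parity of $d$ and on which neighbours are picked, so the additive error accounting is not immediate). Alimonti and Kann's actual construction is more careful, and you correctly flag this as the crux that would need to be worked out. For the ETH half, your linear-size accounting via the Sparsification Lemma is exactly the textbook route in \cite{CyganFKLMPPS15}.
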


Concerning  \textsc{Almost Induced Matching}, the problem is known to be NP-complete on bipartite graphs of maximum degree 3 and on $C_4$-free bipartite graphs~\cite{BoliacCL04}. 
It is also NP-hard to approximate on arbitrary graphs within a factor of $n^{1/2-\epsilon}$ for any  $\epsilon>0$~\cite{OrlovichDFGW11}. Our next lemma supplements the known hardness results 
on bipartite graphs and might be of independent interest.

\begin{figure}[htbp]
\centerline{\includegraphics[width=80mm]{./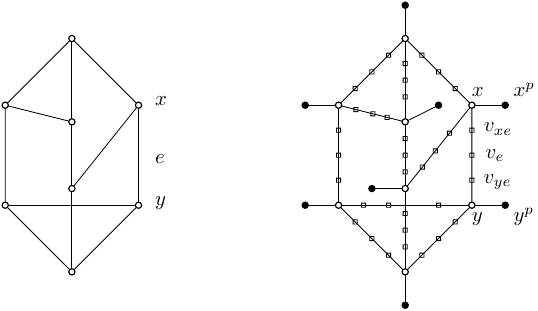}}
\caption{An example of our construction for Lemma~\ref{lem:AIMnphard}, with $G$ on the left and $G'$ on the right.}
\label{fig:AIM}
\end{figure}

\begin{lemma}\label{lem:AIMnphard}
\textsc{Almost Induced Matching} is APX-hard and cannot be solved in time
$2^{o(n)}$ under the ETH, even on bipartite graphs of degree at most 4.
Furthermore, this hardness still holds if we are promised that:
\begin{itemize}
 \item $OPT_{AIM}>0.6n$;
 \item there is an optimal solution $S$ that includes at least $n/20$ vertices
with degree $0$ in $G[S]$.
\end{itemize}\end{lemma}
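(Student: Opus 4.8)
The plan is to reduce from \textsc{Independent Set} on cubic graphs, using the well-known equivalence that a graph on $n$ vertices has an independent set of size $\alpha$ if and only if... but in fact I would build an \textsc{Almost Induced Matching} instance whose structure mimics the cubic graph while forcing the claimed quantitative promises. The main idea is a gadget replacement: for each vertex $v$ of the input cubic graph $G_0=(V_0,E_0)$, I introduce a small bipartite gadget, and for each edge I connect gadgets in a way that preserves the independence structure. Because $G_0$ is cubic, $|E_0|=\frac{3}{2}|V_0|$, so the total number of vertices in the constructed bipartite graph $G$ will be linear in $|V_0|$; this linear blow-up is exactly what gives the $2^{o(n)}$ lower bound under the ETH (via the known $2^{o(n)}$ lower bound for cubic \textsc{Independent Set}), and the APX-hardness will follow by ensuring the reduction is approximation-preserving (an $L$-reduction or a strict linear relationship between $OPT_{AIM}$ and $\alpha(G_0)$).

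The key steps, in order, are as follows. First I would fix a gadget: a natural choice is to subdivide each edge of $G_0$ once, producing a bipartite graph with $V_0$ on one side and the edge-vertices on the other; this is automatically bipartite and has maximum degree $\max\{3,2\}=3$, and one checks that a dissociation set (almost induced matching) in the subdivision corresponds cleanly to an independent set in $G_0$ together with a controlled number of edge-vertices that can always be added. Second, I would prove the correspondence $OPT_{AIM}(G)=\alpha(G_0)+|E_0|+c$ for an explicit constant term coming from how many edge-vertices are forced, establishing the $L$-reduction inequalities. Third, to obtain the two promises I would pad: attaching, for instance, a controlled number of disjoint edges (each an induced matching edge contributing two degree-$1$ vertices) and a controlled number of isolated or pendant vertices (contributing degree-$0$ vertices to every maximum solution). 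Since $\alpha(G_0)\ge |V_0|/4$ on cubic graphs, $OPT_{AIM}$ is already a constant fraction of $n$; by choosing the padding sizes as fixed multiples of $n=|V(G)|$ I can push $OPT_{AIM}$ above $0.6n$ and simultaneously guarantee at least $n/20$ degree-$0$ vertices in some optimal solution, all while keeping the blow-up linear and the degree bounded by $4$ (the extra degree coming from how pendant/padding vertices attach).

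The main obstacle, which I would flag as the hard part, is satisfying \emph{all} constraints simultaneously: bipartiteness, maximum degree at most $4$, the two quantitative promises ($OPT_{AIM}>0.6n$ and $\ge n/20$ degree-$0$ vertices in an optimal solution), and approximation-preservation, without breaking the clean correspondence to \textsc{Independent Set}. The delicate point is the interaction between padding and optimality: padding vertices added to boost $OPT_{AIM}$ or to create degree-$0$ vertices must be arranged so that they are \emph{forced} into (or freely addable to) every optimal solution, so that the value $OPT_{AIM}$ shifts by an exactly known amount and the $L$-reduction constants are preserved. I would handle this by using padding components that are disjoint from the main gadget (disjoint single edges and isolated vertices contribute their full vertex count to any maximum dissociation set, since an isolated vertex always has degree $0$ and a single edge always contributes both endpoints at degree $1$), which decouples the padding contribution from the combinatorial core and makes the arithmetic transparent.

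Finally, I would verify the promised parameters explicitly. Denoting by $n_0=|V_0|$ the cubic-graph size and by $n$ the final vertex count, I would choose the number of padding edges and isolated vertices as fixed fractions of $n$ so that the degree-$0$ count is at least $n/20$ and the total optimum exceeds $0.6n$; since every quantity is linear in $n_0$, the ETH lower bound $2^{o(n)}$ transfers directly, and the strict linear relationship between the optima yields APX-hardness. The only care needed is to confirm that attaching padding does not raise any gadget vertex's degree beyond $4$, which is why the bound is $4$ rather than $3$: the subdivision gadget alone has degree $3$, and the single extra unit of degree budget accommodates the pendant attachments used for the degree-$0$ padding.
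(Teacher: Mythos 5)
Your high-level strategy (reduce from \textsc{Independent Set} on cubic graphs, keep the blow-up linear for the ETH bound, make the reduction an $L$-reduction, then secure the two promises) is the same as the paper's, but your concrete gadget fails, and this is not something that ``one checks'': it is false. If each edge of a cubic graph $G_0$ with $n_0$ vertices and $m_0=3n_0/2$ edges is subdivided \emph{once}, the dissociation number of the resulting graph equals exactly $m_0$ for \emph{every} cubic $G_0$, regardless of $\alpha(G_0)$. Indeed, taking all edge-vertices gives a dissociation set of size $m_0$ (they form an independent set); conversely, if a dissociation set contains $a$ original vertices, each of them must exclude at least $2$ of its $3$ incident edge-vertices, and since each excluded edge-vertex is incident to at most $2$ original vertices, at least $a$ edge-vertices are missing, so the set has size at most $a+(m_0-a)=m_0$. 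Hence your claimed identity $OPT_{AIM}(G)=\alpha(G_0)+|E_0|+c$ cannot hold for any constant $c$ (for $K_4$ it would force $c=-1$, for $K_{3,3}$ it would force $c=-3$), and, worse, the construction carries no information about $\alpha(G_0)$ at all: putting an original vertex into the solution is never strictly profitable, so the reduction cannot distinguish cubic graphs with large independence number from those with small one. Everything downstream (the $L$-reduction inequalities, the APX-hardness, the promises) collapses with it.

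The paper avoids exactly this trap by subdividing each edge with \emph{three} vertices, so that $(x,y)$ becomes the length-four path $x,v_{xe},v_e,v_{ye},y$, and by attaching a pendant vertex $x^p$ to each original vertex $x$. The pendant makes it profitable to include $x$ (matched to $x^p$), and an exchange argument shows every edge gadget contributes exactly two of its three subdivision vertices, with the choice of which two encoding the endpoints' membership; this yields $OPT_{AIM}(G')=n_0+2m_0+\alpha(G_0)$, which is what makes the $L$-reduction work. The two promises then come for free from $\alpha(G_0)>n_0/4$ on cubic graphs---the degree-$0$ vertices of the optimal solution being the pendants of vertices outside the independent set---rather than from padding. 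Two smaller inconsistencies in your plan: (i) you say the padding components are disjoint from the core, in which case they cannot raise any vertex degree, contradicting your own explanation that the degree bound $4$ comes from pendant attachments; (ii) isolated padding vertices would clash with the ``no isolated vertices'' assumption used when this lemma is later fed into the randomized tournament construction. But the decisive gap is the gadget itself.
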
 

\begin{proof}
Let a graph $G=(V,E)$ and a positive integer $k$ be the input of \textsc{Independent Set}. 
We construct a graph $G'=(V',E')$ by subdividing each edge $e=(x,y)$ with three vertices $v_{xe},v_{e},v_{ye}$ so that 
the edge $e=(x,y)$ is replaced by a length-four path $x,v_{xe},v_{e},v_{ye},y$. 
In addition, we create a copy $x^p$ of each vertex $x\in V$ of $G$ and add it to $G'$ as a pendant vertex adjacent only to $x$ (see Figure~\ref{fig:AIM}). 
Fix $L=n+2m+k$. The vertices of $G'$ corresponding to the original vertices of $G$ are considered to inherit their labels in $G$ and we denote them as $V$. 
We prove that $G$ has an independent set of size $k$ if and only if $G'$ has 
an almost induced matching on $L$ vertices.

Suppose that $S$ is an independent set of $G$ with $|S|\geq k$. 
We construct a vertex set $S'$ of $G'$ so as to contain all vertices of $\{x^p: x\in V\}\cup S$ and also to 
include precisely one vertex set $\{v_{e},v_{ye}\}$ for each edge $e\in E$, where $y\notin S$. 
Since $S$ is an independent set, such a vertex set $S'$ exists. 
It is clear that $|S'|=n+k+2m$ and also that $G'[S']$ has degree at most one, meaning it is an almost induced matching of~$G'$. 
Conversely, let $S'$ be an almost induced matching of $G'$ of maximum size, and
suppose $|S'|\geq L$.  First observe that, without loss of generality,
we can assume that $S'$ contains all vertices of degree $1$. If a degree-one
vertex is not in $S'$ we add it, and remove its neighbor from~$S'$.

We now choose $S'$ so as to maximize the number of subdividing vertices
contained in $S'$.  We argue that for each edge $e=(x,y)\in E$, it holds that
$|S'\cap \{v_{xe},v_e,v_{ye}\}|=2$.  Clearly $|S'\cap
\{v_{xe},v_e,v_{ye}\}|\leq 2$.  Moreover, $S'$ contains at least one of
$\{v_{xe},v_e,v_{ye}\}$, since otherwise $S'\cup \{v_e\}$ is an almost induced
matching, contradicting the choice of $S'$. Suppose $|S'\cap
\{v_{xe},v_e,v_{ye}\}|=1$.  If $S'\cap \{v_{xe},v_e,v_{ye}\}=\{v_{xe}\}$, then
$v_{xe}$ must be matched with $x$ in $G'[S']$, as otherwise $S'\cup \{v_e\}$
is an almost induced matching. Then the set $S'\cup \{v_e\}\setminus \{x\}$
has strictly more subdividing vertices, giving rise to a contradiction.  Therefore, we have
$S'\cap \{v_{xe},v_e,v_{ye}\}=\{v_e\}$.  Now, the maximality of $S'$ implies
that both $x$ and $y$ are contained in $S'$.  Observe that $S'\cup
\{v_{xe}\}\setminus \{x\}$ is an almost induced matching of the same size as
$S'$ having strictly more subdividing vertices, producing a contradiction once more.  Therefore, we
have $|S'\cap \{v_{xe},v_e,v_{ye}\}|=2$ for every $e=(x,y)\in E$.

Moreover, this implies that for every $e=(x,y)\in E$, set $S'$ contains at most one of
$x$ and $y$, because, as $S'$ contains all leaves, if $x,y\in S'$, then
$v_{xe},v_{ye}\not\in S'$, which would mean that $S'$ only contains one of
$\{v_{xe},v_e,v_{ye}\}$.  Thus $S'\cap V$ corresponds to an independent set of
$G$.  It remains to note that $S'\cap (V\cup \{x^p:x\in V\})$ has at least $n+k$
vertices, and subsequently $S'\cap V$ has at least $k$ vertices.  This shows
that \textsc{Almost Induced Matching} is NP-hard. Observe also that the constructed
instance $G'$ is bipartite with one side of the bipartition including vertices $x^P,v_{xe},v_{ye},y^P$ and the other including vertices $x,v_e,y$ for every edge $e=(x,y)$ of $G$. 

To complete the proof, we note that when $G$ is a cubic graph, the constructed graph $G'$ has degree at most 4. 
Moreover, the hard instances of $G$ restricted to cubic graphs satisfy $k>n/4$, 
since any cubic graph on $n$ vertices has an independent set of size $\lceil n/4 \rceil.$ 
Now, it is straightforward to verify that the above reduction is an
$L$-reduction (i.e., linear) from \textsc{Independent set} on cubic graphs to \textsc{Almost
Induced Matching} on bipartite graphs of degree at most 4. The APX-hardness of
the former establishes the APX-hardness of the latter. Furthermore, the number
of vertices of the new graphs is linear in $n$. The inequality noted above for $k$ gives our properties' desired bounds.
\end{proof}

As our construction is randomized, the following (technical) property of a uniform random
tournament will be useful. Intuitively, the property established in Lemma
\ref{lem:nostrongbias1} below states that it is impossible in a large random
tournament to have two large sets of vertices $X,Y$, such that all vertices of
$X$ have in-degree $0$ and out-degree $1$ in a $(1,1)$-edge dominating set,
while all vertices of $Y$ have in-degree $1$ and out-degree~$0$.

\begin{lemma}\label{lem:nostrongbias1}
Let $T=(V,E)$ be a random tournament on the vertex set $\{1,2,\dots,n\}$, in which $(i,j)$ is an arc of $T$ with probability $1/2$. 
Then the following event happens with high probability: 
for any two disjoint  sets $X,Y\subseteq V$ with $|X|>(\log n)^2$ and $|Y|>
(\log n)^2$, there exists a vertex $x\in X$ with at least two outgoing arcs to
$Y$ and a vertex $y\in Y$ with at least two incoming arcs from $X$.
\end{lemma}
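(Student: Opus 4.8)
The plan is to prove this via a union bound over all pairs of disjoint sets $X, Y$ of the relevant sizes. The key observation is that we only need to consider sets of size exactly $\lceil(\log n)^2\rceil$, since if the property holds for all such minimal sets, it holds for all larger sets (a vertex with two outgoing arcs to a subset of $Y$ certainly has two outgoing arcs to $Y$). So fix $m = \lceil(\log n)^2\rceil$ and let $X, Y$ be disjoint sets each of size $m$. I want to bound the probability that the ``bad event'' occurs, namely that \emph{either} every vertex $x \in X$ has at most one outgoing arc to $Y$, \emph{or} every vertex $y \in Y$ has at most one incoming arc from $X$.

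\textbf{Estimating the failure probability for a fixed pair.} Consider just the first disjunct: every $x \in X$ sends at most one arc into $Y$. For a single fixed vertex $x \in X$, the arcs to the $m$ vertices of $Y$ are independent fair coin flips, so the probability that $x$ has at most one outgoing arc to $Y$ is $(1+m)\cdot 2^{-m}$ (zero or one successes out of $m$). Since the arcs from distinct vertices of $X$ to $Y$ are mutually independent, the probability that \emph{all} $m$ vertices of $X$ have this property is $\bigl((1+m)2^{-m}\bigr)^{m}$. The same bound holds for the second disjunct by symmetry, so by a union bound the bad event for this fixed pair $(X,Y)$ has probability at most $2\bigl((1+m)2^{-m}\bigr)^{m}$.

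\textbf{Union bound over all pairs.} The number of ways to choose the ordered pair of disjoint $m$-sets is at most $\binom{n}{m}^2 \le n^{2m}$. Hence the probability that the desired property \emph{fails} for some pair is at most
\[
n^{2m}\cdot 2\bigl((1+m)2^{-m}\bigr)^{m} = 2\cdot n^{2m}(1+m)^m 2^{-m^2}.
\]
The plan is to show this tends to $0$. Taking logarithms (base $2$), the exponent is roughly $2m\log n + m\log(1+m) - m^2 + O(1)$. Substituting $m \approx (\log n)^2$ gives $2(\log n)^3$ from the first term and only $(\log n)^2\cdot O(\log\log n)$ from the second, while the dominant negative term is $-m^2 \approx -(\log n)^4$. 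Since $(\log n)^4$ dominates $(\log n)^3$, the whole exponent goes to $-\infty$, so the failure probability is $o(1)$ and the stated property holds with high probability.

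\textbf{The main obstacle} is simply verifying that the reduction to minimal sets is valid and that the per-vertex independence structure is exactly as claimed; the arcs within $X$, within $Y$, and between $X,Y$ and the rest of $V$ are irrelevant, and one must be careful that the events ``$x$ has $\le 1$ out-arc to $Y$'' are genuinely independent across $x \in X$ (they are, since they depend on disjoint sets of coin flips). Once the independence is pinned down, the calculation is routine: the doubly-exponential-in-$m$ decay $2^{-m^2}$ easily beats the $n^{2m} = 2^{2m\log n}$ growth for $m = \Theta((\log n)^2)$, because $m^2 = \Theta((\log n)^4)$ while $2m\log n = \Theta((\log n)^3)$.
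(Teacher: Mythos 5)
Your proof is correct and takes essentially the same approach as the paper's: a direct union bound over pairs of sets, using the independence of the arc orientations to bound the per-pair failure probability by $\bigl((1+m)2^{-m}\bigr)^{m}$ and observing that $2^{-m^2}$ overwhelms $n^{2m}$ when $m=\Theta((\log n)^2)$. The only (harmless) difference is that you first reduce to sets of size exactly $\lceil(\log n)^2\rceil$ via monotonicity, whereas the paper union-bounds over all admissible sizes $s_1,s_2$; both routes go through.
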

\begin{proof}
Fix arbitrary sets $X$ and $Y$ satisfying the stated cardinality conditions. We
will show that the claimed vertex $x$ exists with high probability and the
proof is symmetric for $y$. 

Let $|X|=s_1>\log^2n$ and $|Y|=s_2>\log^2n$.  We say that $(X,Y)$ is
\emph{strongly biased} if each $x\in X$ has at most one outgoing arc to $Y$.
Then we have:

\begin{align*} 
{\sf Prob}[(X,Y) \text{ is strongly biased}] &\leq \big (2^{-s_2} \cdot s_2\big )^{s_1}\\ 
&\le 2^{-s_1s_2 + 2(\log n)^3}\le 2^{-\frac{s_1s_2}{2}},  
\end{align*} 
\noindent where the last inequality follows from the lower bounds on $s_1,s_2$. Applying the union bound, the probability that $T$ has a strongly biased pair
$(X,Y)$ with $|X|=s_1, |Y|=s_2$
is at most 

\[2^{-\frac{s_1s_2}{2}} \cdot n^{ s_1}n^{s_2}\leq 2^{-\frac{s_1s_2}{4}}, \]

\noindent for any sufficiently large
$n$. This probability is smaller than $\frac{1}{n^3}$ for sufficiently large $n$ and thus taking the union bound over all possible values of $s_1,s_2$ gives the claim. 
\end{proof}

Another useful (albeit also technical) property of the random digraphs we will be employing in our construction, concerning the existence of vertex-disjoint directed paths, is given next. 

\begin{lemma}\label{lem:nostrongbias3}

Let $G=(V=A \dot{\cup} B \dot{\cup} C,E)$ be a random directed graph with
$|A|=|B|=n$ and $|C|=4n$, such that for any pair $(x,y)$ with $\{x,y\}\cap C\neq
\emptyset$ we have exactly one arc, oriented from $x$ to $y$, or from $y$ to
$x$ with probability $1/2$. Let $\ell\geq n/20$ be a positive integer.  Then
with high probability, we have: for any two disjoint  sets $X\subseteq
A$, $Y\subseteq B$ with $|X|=|Y|=\ell$, there exist $\ell$ vertex-disjoint
directed paths from $X$ to $Y$.

\end{lemma}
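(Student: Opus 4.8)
The plan is to realize the $\ell$ paths as internally disjoint \emph{length-two} paths through the large random set $C$: each path will have the form $x\to c\to y$ with $x\in X$, $y\in Y$ and a distinct intermediate vertex $c\in C$. Since $|C|=4n$ is comfortably larger than $\ell\le n$, such a system ought to exist with room to spare, and proving its existence is stronger than what the lemma asks. Note that the only randomness this uses is on arcs incident to $C$, which is exactly the randomized part of the construction; the orientations inside $A$, inside $B$, and between $A$ and $B$ play no role (direct $A$--$B$ arcs could only help).

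To certify these paths I would set up a unit-capacity flow network: a source $s$ joined to each $x\in X$, a sink $t$ fed by each $y\in Y$, the arcs $x\to c$ and $c\to y$ of $G$ (given infinite capacity), and a unit \emph{vertex} capacity on every $c\in C$ so that no middle vertex is reused. An integral flow of value $\ell$ is then precisely a family of $\ell$ vertex-disjoint length-two paths saturating both $X$ and $Y$, so by the max-flow/min-cut theorem (equivalently Menger's theorem) it suffices to show that every cut has capacity at least $\ell$. A cut is described by the sets $X'\subseteq X$, $Y'\subseteq Y$ of source/sink edges left uncut; to separate $s$ from $t$ one must additionally pay for every $c$ that \emph{both} has an in-neighbour in $X'$ \emph{and} an out-neighbour in $Y'$. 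Writing $D(X',Y')$ for this set of ``useful'' middle vertices, the minimum cut equals $\min_{X',Y'}\big(2\ell-|X'|-|Y'|+|D(X',Y')|\big)$, so the whole statement reduces to proving that, with high probability,
\[
|D(X',Y')|\ \ge\ |X'|+|Y'|-\ell \qquad\text{for all } X'\subseteq A,\ Y'\subseteq B.
\]
This inequality is vacuous unless $|X'|+|Y'|>\ell$, and since $|X'|,|Y'|\le\ell$ its right-hand side never exceeds $\min(|X'|,|Y'|)$.

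The probabilistic core is a Chernoff-plus-union-bound estimate on $|D(X',Y')|$. For fixed $X',Y'$ of sizes $a,b$, a vertex $c\in C$ lies in $D$ iff it has an in-neighbour among the $a$ vertices of $X'$ and an out-neighbour among the $b$ vertices of $Y'$; as these involve disjoint independent random arcs, the events are independent across $c$ with common probability $p_{a,b}=(1-2^{-a})(1-2^{-b})$, so $|D|\sim\mathrm{Bin}(4n,p_{a,b})$. The point is that $1-p_{a,b}\le 2^{-\min(a,b)+1}$ (and $1-p_{a,b}\le 3/4$ always), so the lower-tail probability that $|D|$ falls below the required threshold $a+b-\ell$ decays like $2^{-\Theta(n\cdot\min(a,b))}$, which beats the number $\binom{n}{a}\binom{n}{b}$ of choices of $(X',Y')$ once we sum over all $a,b$. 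I would split into the case $\min(a,b)=1$ (there the threshold is at most $1$, so one only needs $\Pr[|D|=0]=(1-p_{a,b})^{4n}\le(3/4)^{4n}$ against the $\le n\cdot 2^{n}$ relevant pairs), and the case $\min(a,b)\ge 2$ (where the generous factor $4n$ in the exponent dominates the $2^{\,a\log n+n}$ union-bound factor). The slack $|C|=4n$ is exactly what makes every term summable, and $\ell\ge n/20$ keeps us in the linear regime where these estimates are clean.

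The main obstacle is this final union bound: there are exponentially many candidate obstructions $(X',Y')$, so a single concentration bound is not enough — one must exploit that the per-pair failure probability shrinks exponentially in $\min(|X'|,|Y'|)$, precisely the parameter governing how many subsets of that size exist, and then check that the two exponents line up (which is where the constants $|C|=4n$ and $\ell\ge n/20$ are actually used). The flow/Menger reformulation and the per-pair binomial tail bound are routine once the correct separating event $|D(X',Y')|\ge |X'|+|Y'|-\ell$ has been isolated; matching the exponents across all subset sizes is the only genuinely delicate step.
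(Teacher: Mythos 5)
Your proposal is correct, and it takes a genuinely different route from the paper. The paper argues directly about arbitrary paths through $C$: assuming no $\ell$ disjoint paths exist, it invokes Menger to get a minimal separator $T\subseteq X\cup C\cup Y$ of size at most $\ell-1$, calls a vertex $u\in C\setminus T$ \emph{helpful} if it has an in-arc from $X$ and an out-arc to $Y$, asserts that no vertex of $C\setminus T$ can be helpful, bounds the per-vertex non-helpfulness probability by $2\cdot 2^{-n/20}$ (using the \emph{full} sizes $|X|=|Y|=\ell\ge n/20$), and thus gets a $2^{-\Theta(n^2)}$ bound per triple that crushes the $2^{O(n)}$ union bound over $(X,Y,T)$. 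Your argument instead restricts to length-two paths $x\to c\to y$, reformulates via max-flow/min-cut as the deficiency condition $|D(X',Y')|\ge |X'|+|Y'|-\ell$, and runs the union bound over the residual pairs $(X',Y')$ with a failure probability that degrades as $\min(|X'|,|Y'|)$ shrinks, forcing your case split between $\min=1$ and $\min\ge 2$. What each buys: the paper's uniform $2^{-\Theta(n^2)}$ estimate makes its union bound trivial, but it rests on measuring helpfulness against all of $X$ and $Y$ — which silently presumes the witnesses $v_1,v_2$ lie outside $T$; when the separator swallows most of $X$ or $Y$ this is exactly what can fail, and that regime corresponds precisely to your small-$\min(|X'|,|Y'|)$ case, which you handle explicitly. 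Your version is thus more delicate (the exponents must be matched across all subset sizes, as you note) but is watertight in that regime and proves a stronger conclusion (disjoint paths of length two, saturating both $X$ and $Y$). One small wrinkle: after reducing to the condition ``for all $X'\subseteq A$, $Y'\subseteq B$'' you assert $|X'|,|Y'|\le\ell$, which is inconsistent with quantifying over all subsets of $A$ and $B$; the fix is simply to state the condition for all $X'\subseteq A$, $Y'\subseteq B$ with $|X'|,|Y'|\le\ell$, which is exactly what the min-cut reduction requires (every such pair is contained in some pair $(X,Y)$ of size $\ell$), and your estimates then go through as written.
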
 
\begin{proof} Suppose that there do not exist $\ell$ vertex-disjoint directed
paths from $X$ to $Y$ and let $T\subseteq X\cup C\cup Y$ be a  minimal
$(X,Y)$-separator of size at most $\ell-1$. We have $|C\setminus T|\ge 3n+1$.
We say that a vertex $u\in C\setminus T$ is \emph{helpful}, if there exists $v_1\in X$
and $v_2\in Y$ such that $(v_1,u),(u,v_2)$ are arcs of the graph. Clearly, if
$T$ is a separator, $C\setminus T$ must not contain any helpful vertices.

A vertex $u\in C$ is not helpful if either all edges between $u$ and $X$ are
oriented towards $X$, or all arcs between $u$ and $Y$ are oriented towards $u$.
Each of these events happens with probability at most $2^{-n/20}$. Therefore,
the probability that all the vertices of $C\setminus T$ (being at least $3n+1$) are not
helpful is at most $2^{-\frac{3n^2}{20}}$ (as these events are independent).
This is an upper-bound on the probability that two specific sets $X,Y$ do not
have $|X|$ vertex disjoint sets connecting them, and are therefore separated by
a set $T$.  Taking the sum over all the
choices for $X,Y,T$ (being at most $2^n\cdot 2^n\cdot 2^{4n}$) and using the union bound, we conclude that no such sets exist with high probability (as $n$ increases).
\end{proof}

We are now ready to present our construction in Theorem~\ref{prop:11edsreduction} below. Our construction is randomized and rather technical, making use of the specific properties held by the intermediate instances produced by the above transformation from Independent Set (Lemma~\ref{lem:AIMnphard}).

\begin{theorem}[Construction]\label{prop:11edsreduction}

Suppose we are given an instance of \textsc{Almost Induced Matching} on a
bipartite graph with $2n$ vertices and maximum degree $4$ such that there is an
optimal solution that induces at least $n/10$ vertices of degree $0$. There is
a randomized algorithm which runs in time polynomial in $n$ and, given an
integer $L\ge 1.2n$, reduces the \textsc{Almost Induced Matching} instance to an
instance $T$ of $(1,1)$-d\eds, such that $T$ is a tournament with $O(n)$
vertices and we also have with high probability:

\begin{itemize}

\item[(a)] if $OPT_{AIM}(G)\geq  L$, then $OPT_{(1,1)dEDS}(T)\leq |V(T)|-L/2
+1$;

\item[(b)] if $OPT_{AIM}(G)<  L - 5(\log L)^2$, then $OPT_{(1,1)dEDS}(T)>
|V(T)|-L/2 +1$.

\end{itemize}
\end{theorem}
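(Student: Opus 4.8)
The plan is to exploit a clean characterization of $(1,1)$-domination in a tournament together with a degree-counting identity that converts the size of a solution into the number of low-degree vertices. Given a solution $K$, write $V^+$ (resp.\ $V^-$) for the set of vertices with positive out-degree (resp.\ in-degree) in $K$. Since an arc $(x,y)$ is $(1,1)$-dominated precisely when $x\in V^-$, or $y\in V^+$, or $(x,y)\in K$, any solution in which $V_0:=V(T)\setminus(V^+\cup V^-)$ is empty and every arc from $V^+\setminus V^-$ to $V^-\setminus V^+$ is selected is automatically feasible. Moreover $2|K|=\sum_{v}d_K(v)$, so controlling how many vertices have degree $0$, $1$, or $\ge 2$ in $K$ directly controls $|K|$; a quick check shows that at most one vertex can have degree $0$ (two degree-$0$ vertices would leave the arc between them undominated), which makes the counting tight.

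For direction (a), I would start from an optimal almost induced matching $S$ with $|S|\ge L$ and build a $K$ realising the degree pattern from the overview: each vertex of $S\cap A$ gets out-degree $1$ and in-degree $0$, each vertex of $S\cap B$ in-degree $1$ and out-degree $0$, and every other vertex total degree exactly $2$. For a matched pair $a\in A, b\in B$ of $G[S]$ the edge $ab$ is oriented $a\to b$ in $T$, so selecting this single arc gives both endpoints the desired degree; the almost-induced-matching property guarantees no $S$-vertex has a second $T$-out/in-neighbour lying in $S$ on the opposite side, which is exactly what keeps all remaining incident arcs dominated. The degree-$0$ vertices of $G[S]$ must instead be handed one incident arc each; here I invoke Lemma~\ref{lem:nostrongbias3} (after discarding or rebalancing so each side has at least $n/20$ such vertices) to route vertex-disjoint directed paths from the degree-$0$ vertices of $S\cap A$, through $C$, to those of $S\cap B$. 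Each path gives its endpoints degree $1$ and every internal $C$-vertex degree $2$, and the still-unused vertices of $A\cup B\cup C$ are paired up to reach degree $2$ using the ambient tournament arcs. As $V_0=\emptyset$ and all cross arcs are selected, $K$ is feasible, and the degree pattern yields $|K|=|V(T)|-|S|/2\le |V(T)|-L/2$, absorbing any rounding in the promised $+1$.

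For direction (b) I would argue by contraposition: assume a solution $K$ with $|K|\le |V(T)|-L/2+1$ and produce an almost induced matching of size at least $L-5(\log L)^2$. The degree identity gives $2m_0+m_1\ge L-2$, where $m_i$ counts vertices of degree $i$; combined with $m_0\le 1$ this forces $m_1\ge L-4$ degree-$1$ vertices. Split them into the out-type set $X$ (out-degree $1$, in-degree $0$, hence $X\subseteq V^+\setminus V^-$) and the in-type set $Y$ (in-degree $1$, out-degree $0$). Feasibility forces every out-neighbour of an $x\in X$ other than its unique selected target to lie in $V^+$, so $x$ has at most one out-neighbour in any set disjoint from $V^+$, and symmetrically for $Y$. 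I would then apply Lemma~\ref{lem:nostrongbias1} wherever arcs are random (any pair meeting $C$, or pairs inside $A$ or inside $B$): two large sets cannot be so strongly biased, so the numbers of out-type vertices lying in $C$ or $B$, and of in-type vertices lying in $C$ or $A$, are each $O((\log L)^2)$. Setting $S$ to be the out-type vertices of $A$ together with the in-type vertices of $B$, the same bias constraints show each such $A$-vertex has at most one $S$-neighbour across the bipartition (and likewise for the $B$-vertices), so $G[S]$ has maximum degree $1$; thus $S$ is an almost induced matching with $|S|\ge m_1-O((\log L)^2)\ge L-5(\log L)^2$, the desired contradiction.

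The main obstacle I anticipate is the bias case analysis in direction (b): one must carefully track which vertex pairs carry random arcs (everything except the deterministic $A\leftrightarrow B$ orientation) and apply Lemma~\ref{lem:nostrongbias1} to each relevant pair of large sets, including the degenerate situations in which $X$ or $Y$ is small, so that the total number of lost vertices is at most $5(\log L)^2$ with a matching constant. A secondary technical point is verifying in direction (a) that the disjoint-path routing, the matched-pair selection, and the degree-$2$ padding of the leftover vertices can be performed simultaneously while keeping every non-$S$ vertex at degree \emph{exactly} $2$, which is what makes the bound $|V(T)|-|S|/2$ an exact count rather than merely an upper estimate.
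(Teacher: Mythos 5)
Your overall plan coincides with the paper's proof in its essentials: the degree-counting identity, the bound of roughly $L-4$ on the number of total-degree-$1$ vertices, the forced-arc observation (arcs from in-degree-$0$ vertices to out-degree-$0$ vertices must be selected), the choice of $S=(R_{0,1}\cap A')\cup(R_{1,0}\cap B')$ in direction (b), and, in direction (a), matching arcs plus Lemma~\ref{lem:nostrongbias3}-paths through $C$ plus padding of the leftover vertices. However, direction (b) has a genuine gap at precisely the step you defer to ``careful bias case analysis''. You claim that applying Lemma~\ref{lem:nostrongbias1} only to pairs carrying random arcs (pairs meeting $C$, pairs inside $A'$, pairs inside $B'$) already bounds the number of out-type vertices in $B'$ and of in-type vertices in $A'$ by $O((\log L)^2)$. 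Randomness alone cannot do this. Consider the ``wrong diagonal'' configuration: $R_{0,1}$ concentrated in $B'$, $R_{1,0}$ concentrated in $A'$, with $R_{pos,0}$ essentially inside $A'$ and $R_{0,pos}$ essentially inside $B'$. Every application of Lemma~\ref{lem:nostrongbias1} you allow is then vacuous or consistent: the only forced arcs that could contradict the degree-$1$ assumption would have to run from $B'$ to $A'$, and those are exactly the \emph{deterministic} arcs, to which the lemma does not apply. (This configuration is also genuinely useless to you: arcs from $R_{1,0}\cap A'$ to $R_{0,1}\cap B'$ go from $R_{pos,0}$ to $R_{0,pos}$, are not forced into the solution, and so the corresponding vertex set of $G$ need not induce a matching.) The paper excludes this case not by randomness but by the deterministic structure inherited from the degree bound: since $G$ has maximum degree $4$, a vertex $x\in R_{0,1}\cap B'$ has at most $4$ incoming arcs from $A'$, hence at least $|R_{1,0}\cap A'|-4$ outgoing arcs to $R_{1,0}\cap A'$, all of which are forced into $D$, contradicting out-degree $1$ as soon as $|R_{1,0}\cap A'|\ge 6$. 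This interplay — randomness bounds the $C$-portions and narrows things to the two diagonal cases (Claim~\ref{large_claim}), while the degree-$4$ determinism kills the wrong diagonal — is the crux of the argument and is absent from your proposal.

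Two smaller issues in direction (a). First, Lemma~\ref{lem:nostrongbias3} needs two \emph{equal-size} sets of at least $n/20$ degree-$0$ vertices, one on each side of the bipartition; ``discarding or rebalancing'' cannot create this if the degree-$0$ vertices of the optimal solution lie almost entirely on one side. The paper secures balance in advance by building the tournament on the disjoint union of two copies of $G$ (sides exchanged), so that one may assume $|S\cap A|=|S\cap B|$. Second, ``pairing up'' the leftover vertices does not give each of them both positive in-degree and positive out-degree, which your own feasibility criterion ($V_0=\emptyset$ together with selecting all arcs from $V^+\setminus V^-$ to $V^-\setminus V^+$) requires; the paper instead takes a Hamiltonian path of the leftover tournament plus one extra incoming arc at its start and one extra outgoing arc at its end, which is exactly where the $+1$ in the bound comes from. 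Both of these are fixable, but the direction (b) issue is a missing idea, not a detail.
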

\begin{proof} 
Let $G=(A\dot{\cup}B,E)$ be an input bipartite graph of
\textsc{Almost Induced Matching} with maximum degree at most 4. We may assume
that no vertex of $G$ is isolated.  We may also assume that $|A|=|B|=n$, and if
$S$ is an almost induced matching of $G$ with $|S|\geq L$ then $|S\cap
A|=|S\cap B|$, by taking the disjoint union of two copies of $G$.  This means
that we may also assume that $L$ is even.  

From $G$, we construct a tournament $T$ on the vertex set $A'\dot{\cup}
B'\dot{\cup} C$, where $A'=\{x':x'\in A\}$, $B'=\{x':x' \in B\}$ and $|C|=4n$.
The arc set of $T$ is formed as follows (see Figure~\ref{fig:rand_tour}):

\begin{itemize}
\item for every pair of vertices $x\in A$ and $y\in B$, $(x,y)\in A(T)$, if and only if $(x,y)\in E$.
\item $T[A']$, $T[B']$, $T[C]$ are random tournaments in which each pair $u,v$ of vertices gets an orientation $u\rightarrow v$ with probability $1/2$, independently.
\item For every $a\in A'$ and $c\in C$, we have an orientation $a\rightarrow c$ with probability $1/2$, independently. The same holds between $B'$ and $C$.
\end{itemize}

\begin{figure}[htbp]
\centerline{\includegraphics[width=60mm]{./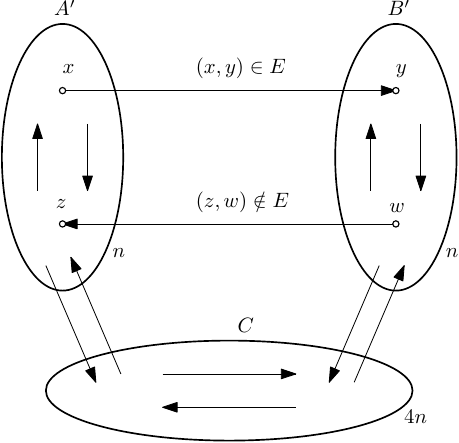}}
\caption{A simplified representation of our construction for Theorem~\ref{prop:11edsreduction}.}
\label{fig:rand_tour}
\end{figure}

We first prove (a): Suppose that $S$ is an almost induced matching containing at
least $L$ vertices, and let $S_0$ and $S_1\subseteq S$ be the sets of all
vertices having degree exactly $0$ and $1$ in $G[S]$, respectively. Slightly abusing
notation, let $S_0$ and $S_1$ refer to the corresponding vertex sets in $T$.  Note
that $|S_0\cap A'|=|S_0\cap B'|\geq n/20$.  We construct an  arc set $D$ of $T$
as follows.  Let $M$ be the set of arcs defined as $\delta(S_1\cap A',S_1\cap
B')$. We include all arcs of $M$ in $D$. 

By Lemma \ref{lem:nostrongbias3}, there exist (with high probability) $|S_0\cap A|$
vertex-disjoint directed paths $\mathcal{P}$ from $S_0\cap A$ to $S_0\cap B$.
We add to $D$ all arcs contained in a path of $\mathcal{P}$, denoted as
$E(\mathcal{P})$.  

Let us now observe that, with high probability, $T$ does not contain any sources
or sinks, as the probability that a vertex is a source or a sink is at most
$2^{-n}$, and there are $O(n)$ vertices in $T$. We use this fact to complete
the solution as follows: consider the digraph $T'=T-S_1-V(\mathcal{P})$, where
$V(\mathcal{P})$ is the set of all vertices contained in a path of
$\mathcal{P}$.  Recall that any tournament has a Hamiltonian path that can be found in polynomial time.  We choose a
directed Hamiltonian path $Q$ of $T'$, with $s$ and $t$ as the start and end
vertices of $Q$.  We add all the arcs $E(Q)$ of $Q$ to $D$, plus one incoming
arc $(s',s)$ of $s$ and one outgoing arc $(t,t')$ of $t$. Since we have no
sources or sinks, such arcs $(s',s)$ and $(t,t')$ exist. Note that $|D'|\leq
|V(T')|+1$.

We argue that the obtained arc set 
\[D=E(M)\cup E({\mathcal P}) \cup E(Q) \cup \{(s',s),(t,t')\}\] is a
$(1,1)$-edge dominating set of $T$. First note that all internal vertices of
the disjoint paths $\mathcal{P}$, as well as all vertices of $T'$ have both
positive in-degree and positive out-degree, therefore all arcs incident on such
vertices are covered. For edges induced by $S_0\cup S_1$, we have that all arcs
of this type going from $A$ to $B$ have been selected (since $S$ is an almost
matching), and all arcs going in the other direction are covered as all
vertices of $(S_0\cup S_1)\cap A$ have positive out-degree.

Lastly, we observe 
\begin{align*}
|D|&=|V(M)|-|S_1|/2+|V(\mathcal{P})|-|S_0|/2 + (|V(T)|-|V(M)|-|V(\mathcal{P})|
+1)\\ &\leq  |V(T)|-L/2+1.  \end{align*}

To see (b), let $D$ be a $(1,1)$-edge dominating set of $T$ of size at most
$|V(T)|-L/2 + 1$. We will use this to build a large almost induced matching in
$G$.  We define the following vertex sets:

\begin{align*}
R_{0,pos}&=\{v\in V(T): d_D^-(v)=0\quad \text{and} \quad d_D^+(v)>0\}\\
R_{0,1}&=\{v\in V(T): d_D^-(v)=0\quad \text{and} \quad d_D^+(v)=1\}\\
R_{pos,0}&=\{v\in V(T): d_D^-(v)>0 \quad \text{and} \quad d_D^+(v)=0\}\\
R_{1,0}&=\{v\in V(T): d_D^-(v)= 1 \quad \text{and} \quad d_D^+(v)=0\}
\end{align*}

Clearly, it holds that $R_{0,1}\subseteq R_{0,pos}$ and $R_{1,0}\subseteq R_{pos,0}$. 
By definition, the arc set from $R_{0,pos}$ to $R_{pos,0}$ must be completely contained in $D$, since no such arc can be 
$(0,1)$-dominated or $(1,0)$-dominated, and the arc is thus required to dominate itself. 
\begin{equation}
\delta(R_{0,pos},R_{pos,0})\subseteq D \label{eqn:takeall}
\end{equation}

Given this, we observe that $(R_{0,1}\cap A' ) \cup (R_{1,0}\cap B')$, seen as a vertex set of $G$ sharing the same vertex names, 
is an almost induced matching of $G$. If that is not so, then either there exists $x\in R_{0,1}\cap A'$ with two outgoing arcs to $R_{1,0}\cap B'$, or
$y\in R_{1,0}\cap B'$ with two incoming arcs from $R_{0,1}\cap A'$. In the
former case, both outgoing arcs from $x$ must be contained in $D$ as previously
noted. This means $x\notin R_{0,1}$, however, which gives a contradiction. A
symmetric argument holds in the latter case.

Our aim is then to show that a ``good chunk'' of $R_{0,1}$ is contained in $A'$, and that of $R_{1,0}$ in $B'$. We will use the following claim.

\begin{claim}\label{claim:inequalities} We have $|R_{0,pos}|\geq L/2-1$,
$|R_{pos,0}|\geq L/2-1$ and $|R_{0,1}| +|R_{1,0}|\geq L-4$.  \end{claim}

\begin{proof}
Consider the numbers $\sum_{v\in V(T)} d_D^-(v)$ and $\sum_{v\in V(T)}
d_D^+(v)$, where $d_D^-(v), d_D^+(v)$ denote the number of arcs of $D$ going
into and coming out of $v$, respectively.  As every arc $(x,y)\in D$ is counted
precisely once in each sum, it holds that

\[|D|=\sum_{v\in V(T)} d_D^-(v)= \sum_{v\in V(T)} d_D^+(v).\]

We now have

\begin{align*} |V(T)|-L/2+1 &\geq |D| = \sum_{v\in V(T)} d^-_D(v) = \sum_{i}
i\cdot |\{v\in V(T):d^-_D(v)=i\}|\\ &\geq |V(T)|- |R_{0,pos}|, \end{align*} 

\noindent from which it follows that $|R_{0,pos}|\geq L/2-1$ and similarly
$|R_{pos,0}|\geq L/2-1$. Also, observe that there is at most one vertex $v$
with $d_D(v)=0$, where $d_D(v)$ is the total number of arcs of $D$ incident on
$v$.  Indeed, if there are two such vertices $u$ and $v$ then the arc between
$u$ and $v$ cannot be $(1,1)$-dominated. We therefore have:

\begin{align*} 2|V(T)|-L +2 &\geq 2|D| = \sum_{v\in V(T)} d_D(v) = \sum_{i}
i\cdot |\{v\in V(T):d_D(v)=i\}|\\ &\geq |R_{0,1}| +|R_{1,0}| + 2 (|V(T)|-
|R_{0,1}| - |R_{1,0}| -1) \end{align*} 

\noindent establishing the claimed inequalities.  
\end{proof}

We can now resume the proof of Theorem~\ref{prop:11edsreduction} (reduction).
By (\ref{eqn:takeall}) and the definition of $R_{0,1}$,  every $x\in R_{0,1}$
has at most one outgoing arc to $R_{pos,0}$, because as we previously argued,
all such arcs are included in $D$. Consider now the bigger of the three sets
among $R_{pos,0}\cap A'$, $R_{pos,0}\cap B'$ and $R_{pos,0}\cap C$. The biggest
of these sets must have size at least $L/6$ which is larger than $(\log n)^2$
for sufficiently large $n$. We apply Lemma \ref{lem:nostrongbias1} on
$R_{0,1}\cap C$ and the largest of the three aforementioned sets. We conclude
that $|R_{0,1}\cap C|\le (\log n)^2$, because otherwise there is a vertex in
$R_{0,1}\cap C$ which has two outgoing arcs to $R_{pos,0}$, which is a
contradiction. With symmetric arguments for $R_{1,0}\cap C$ we have
\begin{equation} |R_{0,1}\cap C|\leq (\log n)^2 \qquad \text{and} \qquad
|R_{1,0}\cap C|\leq (\log n)^2.  \label{eqn:smallInC} \end{equation} That is,
most vertices of $R_{0,1}$ and $R_{1,0}$ can be found in $A'\cup B'$.

We now concentrate on the four sets $R_{0,1}\cap A'$, $R_{1,0}\cap A'$,
$R_{0,1}\cap B'$ and $R_{1,0}\cap B'$. We will say that one of these sets is
``large'' if its cardinality is at least $(\log n)^2$.  The following claim more
carefully specifies which combinations of these sets may be simultaneously large.

\begin{claim}\label{large_claim}
Precisely two of the following sets have size larger than $(\log n)^2$:  $R_{0,1}\cap A'$, $R_{1,0}\cap A'$, $R_{0,1}\cap B'$, $R_{1,0}\cap B'$.
Furthermore, it holds that:
\begin{itemize}
\item either $|R_{0,1}\cap A'|>(\log n)^2$ and $|R_{1,0}\cap B'|>(\log n)^2$,
\item or $|R_{1,0}\cap A'|>(\log n)^2$ and $|R_{0,1}\cap B'|>(\log n)^2$.
\end{itemize} 
\end{claim} 
\begin{proof}
Because, from Claim \ref{claim:inequalities} we have $|R_{0,1}|+|R_{1,0}|\ge
L-4$ and $L\ge 1.2n$, if we take into account that $|A'|=|B'|=n$ and the fact
that $|R_{0,1}\cap C|$ and $|R_{1,0}\cap C|$ are at most $(\log n)^2$, we
conclude that at least two of the four sets we focus on ($R_{0,1}\cap A',
R_{0,1}\cap B', R_{1,0}\cap A', R_{1,0}\cap B'$) must be large, that is, have
cardinality at least $(\log n)^2$.  

We now propose the following facts: (i) if $R_{0,1}\cap A'$ is large, then
only $R_{1,0}\cap B'$ is large; (ii) if $R_{0,1}\cap B'$ is large, then only
$R_{1,0}\cap A'$ is large; (iii) $R_{1,0}\cap A'$ and $R_{1,0}\cap B'$
cannot be simultaneously large. It is not hard to see that these three
statements together give the claim.

To see (i) suppose that $|R_{0,1}\cap A'|$ is large. We argue that
$|R_{pos,0}\cap A'|\le (\log n)^2$. Indeed, if not, then by Lemma
\ref{lem:nostrongbias1} there exists a vertex in $R_{0,1}\cap A'$ which has two
outgoing arcs to $R_{pos,0}\cap A'$, a contradiction. Therefore, $|R_{1,0}\cap
A'|\le (\log n)^2$. Furthermore, we must have $|R_{pos,0}\cap C| \le (\log
n)^2$. Indeed, otherwise we again invoke Lemma \ref{lem:nostrongbias1} to find
a vertex in $R_{0,1}\cap A'$ with two outgoing arcs to $R_{pos,0}\cap C$, a
contradiction. Since by Claim \ref{claim:inequalities} we have that
$|R_{pos,0}|\ge L/2-1$ it must be the case that $|R_{pos,0}\cap B'|\ge(\log
n)^2$. If we have $|R_{0,1}\cap B'|\ge (\log n)^2$ then by Lemma
\ref{lem:nostrongbias1} we have a vertex in $R_{0,1}\cap B'$ with two outgoing
arcs to $R_{pos,0}\cap B'$, a contradiction. Therefore, $|R_{0,1}\cap B'|$ is
also small, and hence the only other set that may be large is $R_{1,0}\cap B'$.

To see (ii) it suffices to see that this statement is symmetric to (i) with the
roles of $A',B'$ reversed, so identical arguments apply.

Finally, to see (iii), suppose that $|R_{1,0}\cap A'|, |R_{1,0}\cap B'|\ge
(\log n)^2$. We argue that $|R_{0,pos}\cap A'|\le (\log n)^2$. Indeed,
otherwise by Lemma \ref{lem:nostrongbias1} we have a vertex $y\in R_{1,0}\cap
A'$ with two incoming arcs from $R_{0,pos}\cap A'$, a contradiction. With a
similar argument $|R_{0,pos}\cap B'| \le (\log n)^2$. Therefore, since
$|R_{0,pos}|\ge L/2-1$ by Claim \ref{claim:inequalities}, we must have
$|R_{0,pos}\cap C|\ge (\log n)^2$. This also gives rise to a contradiction, however,
since we can apply Lemma \ref{lem:nostrongbias1} to find a vertex $y\in
R_{1,0}\cap A'$ with two incoming arcs from $R_{0,pos}\cap C$.  \end{proof}

We can now complete the proof of our reduction, Theorem~\ref{prop:11edsreduction}.
Suppose that the first case of Claim~\ref{large_claim} above holds, meaning $|R_{1,0}\cap
A'|>(\log n)^2$ and $|R_{0,1}\cap B'|>(\log n)^2$.  For every $x \in B'$, we
know that the in-degree of $x$ with respect to $A'$ is at most 4 because we
reduce from an input instance $G$ whose degree is at most 4. Therefore, $x\in
R_{0,1}\cap B'$ has at least $(\log n)^2-4$ outgoing arcs to $R_{1,0}\cap A'$.
All such arcs must be included in $D$ by (\ref{eqn:takeall}), however, contradicting the definition of $R_{0,1}$.  Therefore, we have:

\begin{align*}
|R_{0,1}\cap A'|>(\log n)^2 \qquad &\text{and} \qquad |R_{1,0}\cap B'|>(\log n)^2\\
|R_{1,0}\cap A'|\leq (\log n)^2 \qquad &\text{and} \qquad |R_{0,1}\cap B'|\leq (\log n)^2.
\end{align*}
With Inequalities (\ref{eqn:smallInC}) and Claim~\ref{claim:inequalities}, we get:
\[|R_{0,1}\cap A'| +|R_{1,0}\cap B'|\geq |R_{0,1}| + |R_{1,0}|-4(\log n)^2 \geq L-4-4(\log n)^2.\]
Therefore $(R_{0,1}\cap A') \cup (R_{1,0}\cap B')$, seen as a vertex subset of $G$, is an almost induced matching of size at least $L-4-4(\log n)^2$. 
From $n\leq 2L$, we establish property (b) of the theorem's statement for sufficiently large $n$.
\end{proof}

\begin{proof}[Proof of Theorem \ref{thm_11_NP} (Main)]
Let $G$ be an instance of \textsc{Independent Set} on cubic graphs and let $G'$ be the instance of \textsc{Almost Induced Matching} 
obtained by the construction of Lemma~\ref{lem:AIMnphard}. We set $\ell$ as in the reduction and observe that 
$OPT_{IS}(G)\geq  k$, if and only if $OPT_{AIM}(G')\geq \ell$.

Let $G^*$ be a disjoint union of $10(\log \ell)^2$ copies of $G'$. Then $G^*$ is a gap-instance, 
whose optimal solution is of size either at least $10\ell(\log \ell)^2$, or at most $10\ell(\log \ell)^2- 10(\log \ell)^2 \leq L - 5(\log L)^2$, 
where $L:=10\ell(\log \ell)^2$. Now, Theorem~\ref{prop:11edsreduction}
implies that using a probabilistic polynomial-time algorithm for $(1,1)$-d\eds\
with two-sided bounded errors, one can correctly decide an instance of
\textsc{Independent Set} on cubic graphs with bounded errors. We observe that
the size of the instance has only increased by a poly-logarithmic factor, hence
an algorithm solving the new instance in time $2^{n^{1-\epsilon}}$ would give a
randomized sub-exponential time algorithm for \textsc{3-SAT}.

Finally, for APX-hardness, we observe that we may assume we start our reduction
from an \textsc{Independent Set} instance where either $OPT_{IS}\ge k$, or
$OPT_{IS}< rk$, for some constant $r<1$ and $k=\Theta(n)$. Lemma
\ref{lem:AIMnphard} then gives an instance of \textsc{Almost Induced Matching}
where either $OPT_{AIM}\ge L_1$, or $OPT_{AIM} \le r'L_1 = L_2$, for some
(other) constant $r'<1$. We now use Theorem \ref{prop:11edsreduction} to
create a gap-instance of $(1,1)$-d\eds. \end{proof}

\subsection{Equivalent to Dominating Set on tournaments:
$p=2$ or $q=2$}\label{subsec:tournament2}
We next consider the versions for $p=2$ or $q=2$ and show that they are W[2]-hard, while being solvable in $n^{O(\log n)}$. We begin with a series of lemmas that we then use to obtain the main theorems of this subsection.

\begin{lemma}\label{lem_02_DS}
On tournaments without a source, 
we have $OPT_{(0,2)dEDS}\le OPT_{DS}$. 
\end{lemma}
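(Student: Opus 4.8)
The plan is to convert an optimal dominating set $D$ of $T$ into a $(0,2)$-edge dominating set of no larger size, by replacing each vertex of $D$ with a single incoming arc. The crucial observation is that selecting an arc $(x,u)$ whose head is $u$ allows us to $(0,2)$-dominate every arc lying on a directed path of length at most $2$ starting from $u$; in particular it dominates every arc leaving $u$ (a path of length $1$) and every arc leaving an out-neighbour of $u$ (a path of length $2$). The hypothesis that $T$ has no source is exactly what guarantees that such an incoming arc $(x,u)$ exists for every $u\in D$.

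Concretely, I would proceed as follows. Let $D$ be a minimum dominating set of $T$, so $|D|=OPT_{DS}$. Since $T$ has no source, every vertex $u\in D$ has at least one in-neighbour; for each such $u$ fix one incoming arc $(x_u,u)$ and set $K=\{(x_u,u):u\in D\}$. Distinct vertices of $D$ yield arcs with distinct heads, so $|K|\le|D|=OPT_{DS}$, and it remains only to verify that $K$ is feasible for $(0,2)$-d\eds.

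To check feasibility, take an arbitrary arc $(a,b)$ of $T$ and distinguish two cases according to whether its tail lies in $D$. If $a\in D$, then $K$ contains the arc $(x_a,a)$; since $(a,b)$ lies on the length-$1$ path $a\to b$ emanating from the head $a$ of that arc, it is $(0,2)$-dominated. If $a\notin D$, then because $D$ dominates $T$ the vertex $a$ has an in-neighbour $u\in D$, i.e.\ $(u,a)$ is an arc; now $u\to a\to b$ is a directed path of length $2$ starting at $u$, so $(x_u,u)\in K$ $(0,2)$-dominates $(a,b)$. Every arc falls into one of these two cases, hence $K$ is a valid $(0,2)$-edge dominating set and $OPT_{(0,2)dEDS}\le|K|\le OPT_{DS}$.

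This argument is essentially self-contained and I do not foresee a genuine obstacle. The only point worth flagging is the precise role of the ``no source'' hypothesis: it is invoked exactly once, to ensure that each vertex of $D$ admits an incoming arc to be placed in $K$. Without it, a source belonging to $D$ would have no incoming arc and the substitution would fail, which is why the bound is stated only for source-free tournaments.
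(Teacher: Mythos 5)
Your proof is correct and follows essentially the same route as the paper: take one arbitrary incoming arc per vertex of the dominating set (existence guaranteed by the no-source hypothesis), then observe that arcs leaving a vertex of $D$ are dominated via a length-$1$ path and arcs leaving other vertices via a length-$2$ path through their dominating in-neighbour. Your write-up merely makes the case distinction on the tail and the size accounting more explicit than the paper does.
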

\begin{proof}
 Let $T=(V,E)$ be a tournament with no source and $D\subseteq V$ be a dominating set of $T$. Then let $K\subseteq E$ be a set containing one arbitrary incoming arc of every vertex in $D$. We claim $K$ $(0,2)$-dominates all arcs in $E$: since $D$ is a dominating set, for any vertex $u\notin D$ there must be an arc $(v,u)$ from some $v\in D$. Thus all outgoing arcs $(u,w)$ from such $u\notin D$ are $(0,2)$-dominated by $K$, as are all arcs $(v,u)$ from $v\in D$.
\end{proof}

\begin{lemma}\label{lem_wsource_r}
Let $T=(V,E)$ be a tournament and let $s$ be a source of $T$. Then $\delta^+(s)$ is an optimal $(p,q)$-edge dominating set of $T$, for any $p\le 1$ and  $q\geq 1$.
\end{lemma}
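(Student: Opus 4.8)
The plan is to prove the two directions separately: that $\delta^+(s)$ is a feasible $(p,q)$-edge dominating set, and that no feasible solution can be strictly smaller. First I would record the basic structural fact that, since $s$ is a source of a tournament, the unique arc between $s$ and any other vertex is oriented out of $s$; hence $\delta^+(s)=\{(s,v):v\in V\setminus\{s\}\}$ and $|\delta^+(s)|=|V|-1$.

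For feasibility only the assumption $q\ge 1$ is needed. Take an arbitrary arc $(u,w)$. If $u=s$ it is dominated by itself. Otherwise $(s,u)\in\delta^+(s)$, and since the arc $(u,w)$ lies on the directed path $u,w$ of length $1\le q$ starting at the head $u$ of $(s,u)$, it is $(0,q)$-dominated by $(s,u)$. Thus every arc of $T$ is dominated.

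The crux is optimality, and this is where the assumption $p\le 1$ and the source property of $s$ come together. The key step is to pin down exactly how an arc of $\delta^+(s)$ can possibly be dominated. I claim that an arc $(s,v)$ cannot be forward-dominated by any arc other than itself: forward domination of $(s,v)$ by an arc $(x,y)$ would require $(s,v)$ to lie on a directed path of length at most $q$ emanating from $y$, and such a path would have to pass through $s$; but since $s$ has in-degree $0$ it can only be the starting vertex of the path, forcing $y=s$ and hence the existence of an arc $(x,s)$ into the source $s$ — a contradiction. Backward domination of $(s,v)$ by $(x,y)$ requires $(s,v)$ to lie on a directed path of length at most $p\le 1$ ending at $x$; the only such possibility is the length-$1$ path $s,v$, which forces $x=v$ (and $p=1$). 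Hence $(s,v)$ can be dominated only by $(s,v)$ itself or, when $p=1$, by some arc $(v,y)$ whose tail is $v$.

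Given this characterization, I would finish with a short injective charging argument. Let $K$ be any feasible solution. For every $v\in V\setminus\{s\}$, fix an arc $a_v\in K$ dominating $(s,v)$; by the previous step $a_v$ is either the arc $(s,v)$ (tail $s$, head $v$) or an arc with tail $v$. The map $v\mapsto a_v$ is injective: arcs of the first type have pairwise distinct heads, arcs of the second type have pairwise distinct tails, and the two types never coincide since a first-type arc has tail $s$ while a second-type arc has tail $v\neq s$. Therefore $|K|\ge |V\setminus\{s\}|=|V|-1=|\delta^+(s)|$, and combined with feasibility this shows $\delta^+(s)$ is optimal. The main obstacle is precisely the characterization in the third paragraph — establishing that the source property rules out all forward domination and that $p\le 1$ limits backward domination to outgoing arcs of $v$ — after which the counting is routine.
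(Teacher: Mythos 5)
Your proof is correct and follows essentially the same approach as the paper: the paper's argument also establishes that any arc dominating $(s,v)$ must lie in $\{(s,v)\}\cup\delta^+(v)$ and uses the mutual disjointness of these sets over $v\in V\setminus\{s\}$ to get the lower bound $|V|-1$, which is exactly your injective charging map, and feasibility is likewise shown via $(0,1)$-domination. Your write-up merely makes explicit the case analysis (source property kills forward domination; $p\le 1$ restricts backward domination to $\delta^+(v)$) that the paper states without elaboration.
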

\begin{proof}
Since $s$ has no incoming arcs, any $(p,q)$-edge dominating set must select at least one arc from $\{(s,v)\} \cup \delta^+(v)$ for every $v\in V\setminus \{s\}$ in order to $(p,q)$-dominate 
$(s,v)$. Because the arc sets $\{(s,v)\} \cup \delta^+(v)$ are mutually disjoint over all $v\in V\setminus \{s\}$, any $(p,q)$-edge dominating set has size at least $|\delta^+(s)|$. 
Now, observe that $\delta^+(s)$  $(0,1)$-dominates every arc of $T$.
\end{proof}

\begin{lemma}\label{lem_22_EDSsize}
On tournaments on $n$ vertices, for any $p\geq 2$, it is $OPT_{(p,2)dEDS}\leq OPT_{(2,2)dEDS}\le 2 \log n+3$.
\end{lemma}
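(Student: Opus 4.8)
The plan is to prove the two inequalities separately. For $OPT_{(p,2)dEDS}\le OPT_{(2,2)dEDS}$ with $p\ge 2$, I would just invoke monotonicity of domination in $p$: increasing $p$ can only enlarge the set of arcs that a fixed arc dominates, so any $(2,2)$-edge dominating set is automatically a $(p,2)$-edge dominating set for every $p\ge 2$, and hence the optimum cannot grow. All the work is in the bound $OPT_{(2,2)dEDS}\le 2\log n+3$.

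The first step I would take is to record the combinatorial meaning of $(2,2)$-domination on a tournament: an arc $(u,v)$ $(2,2)$-dominates an arc $(x,y)$ if and only if $y=u$ or $y\to u$ (the head of the dominated arc reaches the tail of the dominating arc in one step), \emph{or} $v=x$ or $v\to x$ (the head of the dominating arc reaches the tail of the dominated arc in one step). This splits domination into a ``forward'' mechanism depending only on the \emph{heads} of the chosen arcs, and a ``backward'' mechanism depending only on their \emph{tails}. The key consequences are: if the set of heads of a solution $K$ is a dominating set of $T$, then every arc whose tail is \emph{not} a source is dominated; symmetrically, if the set of tails of $K$ is a dominating set of $T^{rev}$, then every arc whose head is \emph{not} a sink is dominated.

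I would then assemble $K$ in (at most) three parts. Let $s$ be the source of $T$ if one exists and $t$ the sink if one exists. Set $T_1:=T-s$ (or $T_1:=T$ if there is no source) and apply Lemma~\ref{lem_ds_tour} to obtain a dominating set $D_1$ of $T_1$ with $|D_1|\le\log n+1$. Because a source points to every other vertex, each $d\in D_1$ still has an incoming arc in $T$, so I can pick one incoming arc per $d\in D_1$ to form $K_1$; its heads are exactly $D_1$, which dominate all of $V\setminus\{s\}$, covering every arc with tail $\neq s$. Symmetrically, set $T_2:=T-t$ (or $T$) and apply Lemma~\ref{lem_ds_tour} to $(T_2)^{rev}$ to get $D_2$ with $|D_2|\le\log n+1$; each $d'\in D_2$ has an outgoing arc in $T$ (it points to $t$), so picking one outgoing arc per $d'$ forms $K_2$, whose tails are $D_2$ and cover every arc with head $\neq t$. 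Finally, if both $s$ and $t$ exist I add the single arc $(s,t)$. For $K:=K_1\cup K_2\cup\{(s,t)\}$ we get $|K|\le 2\log n+3$, and every arc $(x,y)$ is dominated: by $K_1$ if $x\neq s$, by $K_2$ if $x=s$ and $y\neq t$, and by itself if $(x,y)=(s,t)$.

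The main obstacle, and the reason the clean bound $\log n+1$ (available for source-free tournaments via Lemma~\ref{lem_02_DS} and monotonicity) has to be roughly doubled, is exactly the source/sink boundary. A source lies in \emph{every} dominating set of $T$ yet has no incoming arc, so its outgoing arcs cannot be reached by the forward mechanism at all; this is what forces me to delete $s$ before applying Lemma~\ref{lem_ds_tour} and to recover the arcs leaving $s$ through the backward mechanism on $T^{rev}$ (and symmetrically for $t$). The genuinely degenerate object is the arc $(s,t)$ from source to sink: since nothing enters $s$ and nothing leaves $t$, no \emph{other} arc can $(2,2)$-dominate $(s,t)$, so it must be selected explicitly. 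Verifying that this lone arc is the only obstruction not handled by $K_1\cup K_2$ is the crux, and it is precisely where the additive surplus over $2\log n+2$ originates.
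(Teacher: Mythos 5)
Your proof is correct and follows essentially the same route as the paper's: pick a dominating set of $T-s$ and one incoming arc per member to get $(0,2)$-domination of arcs with tail $\neq s$, symmetrically a dominating set of $(T-t)^{rev}$ with one outgoing arc each for arcs with head $\neq t$, then add the unavoidable arc $(s,t)$, giving $2\log n+3$ via Lemma~\ref{lem_ds_tour}. The only cosmetic difference is that the paper first dispatches the no-source and no-sink cases separately (getting the sharper bound $\log n+1$ there via Lemma~\ref{lem_02_DS}), while you run the two-sided construction uniformly; both yield the stated bound.
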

\begin{proof}
The first inequality trivially holds, so we prove the second inequality. 
Let $T=(V,E)$ be a tournament on $n$ vertices. If $T$ has no source, then $OPT_{(2,2)dEDS}\leq OPT_{(0,2)dEDS} \leq OPT_{DS}\leq \log n +1$, 
where the second and the last inequality follow from Lemma~\ref{lem_02_DS} and Lemma~\ref{lem_ds_tour}, respectively. 
If $T^{rev}$ contains no source, observe that a $(0,2)$-edge dominating set of $T^{rev}$ is a $(2,0)$-edge dominating set of $T$ and the statement holds. 

Therefore, we may assume that $T$ has a source $s$ and a sink $t$.  
Let $S_1\subseteq V\setminus \{s\}$ be a dominating set of $T-s$ of size at most $\log n +1$. 
Clearly, every arc $(u,v)$ of $T-s$ lies on a directed path of length at most two from some vertex of $S_1$. Let $D_1\subseteq E$ be a minimal arc set such that $D_1\cap \delta^-(v)\neq\emptyset$ for every $v\in S_1$. 
Since every $v\in S_1$ has positive in-degree, such a set $D_1$ exists and we have $|D_1|\leq |S_1|$. 
Observe that $D_1$ (0,2)-dominates every arc of $T-s$. Applying a symmetric argument to $T^{rev}-t$, we know that there exists an arc set $D_2$ of size at most $\log n+1$ 
which $(2,0)$-dominates every arc of $T-t$. 
Now $D_1\cup D_2$ (2,2)-dominates every arc incident with $V\setminus \{s,t\}$. Therefore, $D_1\cup D_2 \cup \{(s,t)\}$ is a  $(2,2)$-edge dominating set.
\end{proof}

\begin{lemma}\label{lem:with2EDS}
There is an FPT reduction from \textsc{Dominating Set} on tournaments parameterized by solution size to $(p,q)$-d\eds\ parameterized by solution size, when $p=2$ or $q=2$.
\end{lemma}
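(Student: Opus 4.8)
The plan is to give a parameter-preserving (hence FPT) reduction from \textsc{Dominating Set} on tournaments, which is W[2]-hard by~\cite{Downey95}, to $(p,q)$-d\eds. I would reduce the whole statement to the single case $q=2$ with arbitrary $p\ge 0$: the case $p=2$ (with $q\ge 2$) then follows by arc reversal, since $OPT_{(2,q)dEDS}(U)=OPT_{(q,2)dEDS}(U^{rev})$ for every tournament $U$, so composing the $(q,2)$-reduction with the map $U'\mapsto (U')^{rev}$ yields a reduction to $(2,q)$-d\eds. Throughout I would assume the input tournament $T=(V,E)$ has no source, which is safe: a source in a tournament has out-degree $|V|-1$ and is thus a dominating set of size $1$, so whenever a source is present the instance is a trivial yes-instance and we simply output a fixed yes-instance.

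Given a source-free tournament $T=(V,E)$ with DS-parameter $k$, I would build $T'$ by adding one new vertex $t$ and the arcs $(v,t)$ for all $v\in V$; thus $t$ is a sink, $T'$ is again a source-free tournament, and I would output $(T',k)$ as the $(p,2)$-d\eds\ instance, claiming $OPT_{(p,2)dEDS}(T')=OPT_{DS}(T)$. The easy direction mimics Lemma~\ref{lem_02_DS}: from a dominating set $D$ of $T$, pick one incoming arc of each $d\in D$ (which exists as $T$ is source-free). Since $D\cup N^+(D)=V$, every arc $(w,x)$ of $T'$ (including each pendant arc $(v,t)$) has its tail in $D\cup N^+(D)$ and is therefore $(0,2)$-dominated, so this arc set is a $(p,2)$-edge dominating set of size at most $|D|$.

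For the reverse direction, let $K$ be a $(p,2)$-edge dominating set of $T'$ and let $H$ be the set of heads of arcs of $K$. The key observation I would exploit is that the backward domination granted by $p$ is useless against the pendant arcs: an arc $(v,t)$ can lie on a directed path of length at most $p$ ending at the tail of a selected arc only if $t$ has an outgoing arc on that path, which is impossible because $t$ is a sink. Hence each $(v,t)$ must be $(0,2)$-dominated in the forward direction, forcing exactly one of: $(v,t)\in K$; or $v\in H$; or $v\in N^+(H)$ (the length $0,1,2$ cases, noting no head can equal $t$). I would then set $D:=(H\cap V)\cup\{v:(v,t)\in K\}$: each vertex of $V$ falls into one of these three cases and is thus either in $D$ or out-dominated by a vertex of $H\cap V\subseteq D$, so $D$ is a dominating set of $T$. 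Since the arcs accounting for $|H\cap V|$ (heads in $V$) are disjoint from the arcs $(v,t)\in K$ (heads equal to $t$), we obtain $|D|\le|K|$, giving $OPT_{DS}(T)\le OPT_{(p,2)dEDS}(T')$.

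The hard part will be precisely this reverse direction for $p\ge 1$: one must rule out that the extra backward reach is used to cover the ``selector'' arcs $(v,t)$ more cheaply, which would break the correspondence between heads and a dominating set. Making $t$ a sink is exactly what neutralizes the backward direction and lets one construction serve every $p$ uniformly. Combining the two directions gives $OPT_{(p,2)dEDS}(T')=OPT_{DS}(T)$ with the parameter preserved ($k'=k$) and the construction computable in polynomial time, completing the FPT reduction; the $p=2$ cases follow from the reversal remark above.
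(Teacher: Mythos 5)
Your proposal is correct and follows essentially the same route as the paper's proof: reduce to the case $q=2$ by arc reversal, assume the input tournament is source-free, attach a new sink $t$, map a dominating set to one incoming arc per dominated vertex in one direction, and in the other direction read off a dominating set from the heads of $K$ lying in $V$ together with the tails of the pendant arcs $(v,t)\in K$ (the paper's sets $K^+$ and $K^-$). The only nitpicks are cosmetic: the three cases for covering $(v,t)$ are not mutually exclusive (so ``at least one'' rather than ``exactly one''), and heads of arcs in $K$ \emph{can} equal $t$ (the pendant arcs themselves) --- they are merely useless for forward domination, which is what your argument actually uses.
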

\begin{proof}
We assume that $q=2$, without loss of generality. Let $T=(V,E)$ be an input tournament to \textsc{Dominating Set}, and let $k$ be the solution size. 
It can be assumed that $T$ has no source.  We construct a tournament $T'$ by
adding to $T$ a new vertex $t$ which is a sink, meaning we orient all arcs from
$V$ to $t$. We claim that $OPT_{(p,2)dEDS}(T') = OPT_{DS}(T)$.

Given a dominating set $D$ of $T$, we select an arbitrary arc set $K$ of $T'$ so that $\delta_K^-(v)=1$ for each $v\in D$. 
It is easy to see that $K$ $(0,2)$-dominates every arc of $T'$: any arc $(u,v)$ with $u\in D$ 
is clearly dominated by $K$. For any arc $(u,v)$ with $u\notin D$, there is $w\in D$ such that $(w,u)\in E$ 
and thus $K$ $(0,2)$-dominates $(u,v)$. 

Conversely, suppose that $K$ is a $(p,2)$-edge dominating set of size at most
$k$ and let $K^+$ be the set of heads of $K$ found in $V$. Let $K^-$ be the set
of vertices $u\in V$ such that $(u,t)\in K$. We have $|K^+\cup K^-|\le k$,
because each arc of $K$ either contributes an element in $K^+$ or in $K^-$. We
claim that $K^+\cup K^-$ is a dominating set of $T$. Suppose the contrary,
therefore there exists $u\in V\setminus (K^+\cup K^-)$ that is not dominated by
$K^+\cup K^-$. The arc $(u,t)$, however, is dominated by $K$. We have
$(u,t)\not\in K$, as $u\not\in K^-$. Therefore, since $t$ is a sink, $(u,t)$ is
$(0,2)$-dominated by an arc $(v,w)\in K$. This means that either $w=u$, or the
arc $(w,u)$ exists. It is $w\in K^+$, however, meaning that $u$ is dominated.
 \end{proof}

\begin{theorem}\label{thm_02_12_22_domsetCom}
 On tournaments, the problems $(p,2)$-d\eds\ are W[2]-hard for each fixed $p$.
\end{theorem}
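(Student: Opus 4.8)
The plan is to combine the FPT reduction of Lemma~\ref{lem:with2EDS} with the known W[2]-hardness of \textsc{Dominating Set} on tournaments. Recall from the related-work discussion that \textsc{Dominating Set} remains W[2]-hard even when restricted to tournaments, as shown by~\cite{Downey95}, when parameterized by the solution size $k$. This is the source of hardness I would transfer.

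First I would fix the case $q=2$; the case $p=2$ then follows by the arc-reversal symmetry noted in the introduction, since reversing all arcs of a tournament again yields a tournament and exchanges the roles of $p$ and $q$ (so a $(2,p)$-edge dominating set of $T$ corresponds to a $(p,2)$-edge dominating set of $T^{rev}$). Given a tournament $T$ together with a parameter $k$ as an instance of \textsc{Dominating Set} on tournaments, Lemma~\ref{lem:with2EDS} produces in polynomial time a tournament $T'$ with $OPT_{(p,2)dEDS}(T')=OPT_{DS}(T)$. Consequently $T$ has a dominating set of size at most $k$ if and only if $T'$ has a $(p,2)$-edge dominating set of size at most $k$, and since the target parameter equals the source parameter, this is a bona fide parameterized reduction.

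Because parameterized reductions compose and preserve W[2]-hardness, the existence of this reduction from a W[2]-hard problem immediately yields that $(p,2)$-d\eds\ on tournaments is W[2]-hard for every fixed $p$. There is essentially no remaining technical obstacle: all the real work has been carried out in the proof of Lemma~\ref{lem:with2EDS}, which establishes the exact equality of the two optimum values. The only points left to check are bookkeeping, namely that the reduction runs in FPT (here even polynomial) time and that the new parameter is bounded by a function of the original one; both hold trivially, since $T'$ is obtained from $T$ by appending a single sink vertex and the optima coincide.
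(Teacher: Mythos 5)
Your proposal is correct and follows essentially the same route as the paper: both derive the result by composing Lemma~\ref{lem:with2EDS} with the known W[2]-hardness of \textsc{Dominating Set} on tournaments (the paper cites the \textsc{Set Cover} reduction of \cite{CyganFKLMPPS15}, you cite \cite{Downey95}; the difference is immaterial). The only superfluous element is your discussion of a separate ``$p=2$'' case via arc reversal --- the theorem concerns $(p,2)$-d\eds, so $q=2$ always holds and Lemma~\ref{lem:with2EDS} applies directly.
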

\begin{proof}
 For all problems, we use the reduction from \textsc{Set Cover} to \textsc{Dominating set on Tournaments} given by \cite{CyganFKLMPPS15} in Theorem~13.14 therein and our results follow from the W[2]-hardness of that problem (see also Theorem~13.28 therein) and our Lemma~\ref{lem:with2EDS} above.
\end{proof}

\begin{theorem}\label{thm_02_12_22_domsetALG}
 On tournaments, the problems $(0,2)$-d\eds, $(1,2)$-d\eds\ and $(p,2)$-d\eds, for any $p\ge2$, can be solved in time $n^{O(\log n)}$.
\end{theorem}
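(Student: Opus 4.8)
The statement claims a quasi-polynomial $n^{O(\log n)}$ algorithm for $(0,2)$-d\eds, $(1,2)$-d\eds, and $(p,2)$-d\eds\ for $p\ge 2$ on tournaments. The natural strategy is to reduce each case to \textsc{Dominating Set} on tournaments (or a bounded number of instances thereof), and then invoke the brute-force quasi-polynomial algorithm: by Lemma~\ref{lem_ds_tour} the optimum dominating set has size at most $\log n + 1$, so we may enumerate all vertex subsets of that size in time $n^{O(\log n)}$. The key leverage is that Lemma~\ref{lem_22_EDSsize} already guarantees $OPT_{(p,2)dEDS}\le 2\log n + 3$ on any tournament, so the solution we seek is always of logarithmic size; the only question is how to search for it efficiently.

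\textbf{Case analysis.} I would split according to the presence of sources and sinks, mirroring the proof of Lemma~\ref{lem_22_EDSsize}. If $T$ has a source $s$, then by Lemma~\ref{lem_wsource_r} the set $\delta^+(s)$ is an optimal solution for every $p\le 1, q\ge 1$, which settles $(0,2)$ and $(1,2)$ immediately, and one argues symmetrically using $T^{rev}$ when $T$ has a sink (for the $p\ge 2$ cases). The interesting situation is when $T$ has neither a source nor a sink, or when we are in the $(p,2)$ case with $p\ge 2$ and a source is present. For $(0,2)$-d\eds\ on a source-free tournament, Lemma~\ref{lem_02_DS} gives $OPT_{(0,2)dEDS}\le OPT_{DS}$, and conversely a $(0,2)$-edge dominating set yields a dominating set of the same size by taking heads of selected arcs (each vertex dominated within distance $2$ forward), so the two optima coincide and we can solve $(0,2)$ by solving \textsc{Dominating Set} in $n^{O(\log n)}$ time. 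The $(1,2)$ case reduces to $(0,2)$ or to \textsc{Dominating Set} by a similar head/tail argument, exploiting that the extra backward reach of $p=1$ can only help and is dominated by the forward structure.

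\textbf{The main obstacle.} The genuinely delicate part is the $(p,2)$-d\eds\ case for $p\ge 2$ when $T$ has both a source and a sink, since here neither Lemma~\ref{lem_02_DS} nor Lemma~\ref{lem_wsource_r} applies directly. Here I would follow the decomposition of Lemma~\ref{lem_22_EDSsize}: peel off the source $s$ and sink $t$, find a logarithmic-size dominating set $S_1$ of $T-s$ and a corresponding arc set $D_1$ that $(0,2)$-dominates $T-s$, dually obtain $D_2$ that $(2,0)$-dominates $T-t$, and glue with the arc $(s,t)$. The algorithmic content is that each of $S_1$ and the dual set can be found by the brute-force dominating-set search in quasi-polynomial time, and $|D_1\cup D_2\cup\{(s,t)\}|\le 2\log n + 3$. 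The subtlety I expect to wrestle with is ensuring \emph{optimality} rather than merely producing \emph{some} solution of logarithmic size: to get the true optimum one must instead enumerate candidate solutions of size up to $2\log n + 3$ directly, which is still $n^{O(\log n)}$ since that many arcs can be chosen in $\binom{n^2}{O(\log n)} = n^{O(\log n)}$ ways, and then verify feasibility in polynomial time. I would lean on this direct enumeration as the clean uniform argument, using the size bound from Lemma~\ref{lem_22_EDSsize} to cap the search depth, so that all three stated cases are handled by a single $n^{O(\log n)}$ guess-and-check procedure whose correctness rests on the solution-size guarantees established in the preceding lemmas.
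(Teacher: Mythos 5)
Your final argument --- bound the optimum by $O(\log n)$ via Lemmas~\ref{lem_wsource_r}, \ref{lem_02_DS}, \ref{lem_ds_tour} and~\ref{lem_22_EDSsize}, handle the source case of $(0,2)$/$(1,2)$ separately, and otherwise enumerate all arc subsets of logarithmic size and check feasibility --- is exactly the paper's proof, so in its final form the proposal is correct and takes the same route.

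However, two intermediate claims you make along the way are false, and you should strike them, since the write-up invites the reader to rely on them. First, it is \emph{not} true that $OPT_{(0,2)dEDS}=OPT_{DS}$ on source-free tournaments: taking the heads of a $(0,2)$-edge dominating set does not in general yield a dominating set. In the directed triangle $u\to v\to w\to u$, the single arc $(u,v)$ is a $(0,2)$-edge dominating set (every arc lies on a path of length at most $2$ from its head $v$), yet the head set $\{v\}$ does not dominate $u$; indeed $OPT_{(0,2)dEDS}=1<2=OPT_{DS}$ here. Only the one-directional inequality of Lemma~\ref{lem_02_DS} holds, which is exactly why the paper's Lemma~\ref{lem:with2EDS} must first attach an artificial sink before its heads-of-arcs argument goes through. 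Had you actually solved $(0,2)$-d\eds\ by solving \textsc{Dominating Set}, you would output wrong values; your direct enumeration is what saves you. Second, your claim that the $(p,2)$ case with $p\ge 2$ and a sink is handled ``symmetrically'' by Lemma~\ref{lem_wsource_r} on $T^{rev}$ does not work: reversal turns $(p,2)$-d\eds\ into $(2,p)$-d\eds, and the lemma requires the \emph{first} parameter to be at most $1$. Fortunately no such case analysis is needed there, because Lemma~\ref{lem_22_EDSsize} bounds $OPT_{(p,2)dEDS}$ by $2\log n+3$ on \emph{every} tournament, with no assumption about sources or sinks, so guess-and-check covers all $(p,2)$ instances with $p\ge 2$ directly. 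The only place a separate polynomial-time subroutine is genuinely required is $(0,2)$/$(1,2)$ in the presence of a source, where the optimum can be as large as $n-1$; you do handle that correctly via Lemma~\ref{lem_wsource_r}.
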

\begin{proof}
For $(0,2)$-d\eds\ and $(1,2)$-d\eds, the case when a given tournament contains a source can be solved in polynomial time by Lemma~\ref{lem_wsource_r}. If the input tournament contains no source, then 
by Lemma~\ref{lem_02_DS} we have $OPT_{(1,2)dEDS}\leq OPT_{(0,2)dEDS}\leq OPT_{DS}$, which is bounded by $\log n +1$ by Lemma~\ref{lem_ds_tour}.  
Lemma~\ref{lem_22_EDSsize} states that $OPT_{(p,2)dEDS} \leq 2\log n +3$. Exhaustive search over vertex subsets of size $O(\log n)$ performs in the claimed runtime.
\end{proof}

\subsection{P-time solvable: $p+q\leq 1$ or, $2\notin \{p,q\}$ and $\max\{p,q\}\geq 3$}\label{subsec:tournamentpoly}
We turn our attention to the remaining cases and show that they are in fact solvable in polynomial time.

\begin{theorem}\label{thm_01_P}

 $(0,1)$-d\eds\ can be solved  in polynomial time on tournaments.

\end{theorem}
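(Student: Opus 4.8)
The plan is to prove the sharper quantitative statement that $OPT_{(0,1)dEDS}(T)=n-1$ for every tournament $T$ on $n$ vertices, and to exhibit a witnessing arc set that can be written down in polynomial time. The whole argument is then just a matching lower and upper bound, with the explicit optimal solution read off from the upper bound construction.

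For the lower bound I would invoke Lemma~\ref{vc:01EDS}: the set of head vertices of any $(0,1)$-edge dominating set $K$ is a vertex cover of the underlying undirected graph $G^*$. Since $T$ is a tournament, $G^*$ is the complete graph on $n$ vertices, whose vertex cover number is $n-1$. As each arc of $K$ contributes exactly one head, the number of distinct heads is at most $|K|$, so $|K|\ge\vc(G^*)=n-1$. Equivalently, and perhaps more transparently, I could argue directly that the set $Z$ of vertices having no incoming arc of $K$ must be independent in $T$ (if $u,v\in Z$ and $(u,v)$ is an arc, it can only be dominated by itself, forcing $v$ to be a head of $K$, a contradiction); hence $|Z|\le 1$ and at least $n-1$ vertices are heads.

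For the upper bound I would split on whether $T$ has a source. If $T$ has a source $s$, then Lemma~\ref{lem_wsource_r} (applicable since $p=0\le 1$ and $q=1\ge 1$) already certifies that $\delta^+(s)$ is an optimal solution, and $|\delta^+(s)|=d^+(s)=n-1$. If $T$ has no source, I would fix an arbitrary vertex $x_0$ and let $K$ consist of all arcs of $\delta^+(x_0)$ together with one arbitrarily chosen incoming arc of each in-neighbour of $x_0$; such incoming arcs exist exactly because $T$ has no source. This set has $d^+(x_0)+d^-(x_0)=n-1$ arcs. Feasibility is then routine: every vertex other than $x_0$ is the head of a selected arc, so all of its outgoing arcs are $(0,1)$-dominated; all outgoing arcs of $x_0$ lie in $K$ and hence dominate themselves; and each arc $(v,x_0)$ entering $x_0$ is dominated because its tail $v$, an in-neighbour of $x_0$, has a selected incoming arc. (Taking $x_0=s$ when a source exists recovers the first case, so the two constructions are really one.)

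Combining the bounds gives $OPT_{(0,1)dEDS}(T)=n-1$, and in either case the witnessing set is produced in polynomial time by computing degrees, testing for a source, and selecting arbitrary incoming arcs. I do not expect a genuine obstacle here; the only points needing care are the routine feasibility verification of the no-source construction and the observation that the absence of a source is exactly what guarantees the required incoming arcs exist.
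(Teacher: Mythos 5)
Your proposal is correct, and it reaches the same quantitative conclusion as the paper ($OPT_{(0,1)dEDS}=n-1$) with an identical lower bound: the paper, too, argues that two vertices with no incoming arc of $K$ would leave the arc between them undominated, so at most one such vertex exists and $|K|\ge n-1$ (your alternative derivation via Lemma~\ref{vc:01EDS} and $\vc(K_n)=n-1$ is a valid, slightly slicker packaging of the same fact). Where you genuinely diverge is the upper bound. The paper builds its size-$(n-1)$ solution globally: it decomposes $T$ into strongly connected components in topological order, runs BFS from a vertex $s$ in the first component to cover $C_1$, and adds all arcs from $s$ to every vertex outside $C_1$, so that every vertex except $s$ ends up with in-degree exactly $1$ in $K$. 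You instead build a purely local solution around a single vertex $x_0$: all of $\delta^+(x_0)$ plus one incoming arc for each in-neighbour of $x_0$, with a case split (or the choice $x_0=s$) handling the possible source via Lemma~\ref{lem_wsource_r}. Your construction is more elementary --- it avoids the SCC decomposition and BFS entirely, and its size count $d^+(x_0)+d^-(x_0)=n-1$ is immediate (the selected arcs have pairwise distinct heads, so there is no double counting, a point worth stating explicitly) --- while the paper's construction buys uniformity: it needs no case analysis, since a source, if present, is automatically the singleton first component and the construction degenerates to $\delta^+(s)$. Both are polynomial-time, so either proof establishes the theorem.
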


\begin{proof}
 We will show that $OPT_{(0,1)dEDS}=n-1$ and give a polynomial-time algorithm for finding such an optimal solution. First, given a tournament $T=(V,E)$, to see why $OPT_{(0,1)dEDS}\ge n-1$ consider any optimal solution $K\subseteq E$: if there exists a pair of vertices $u,v\in V$ with $d_K^-(u)=d_K^-(v)=0$, meaning a pair of vertices, neither of which has an arc of $K$ as an incoming arc, then the arc between them (without loss of generality, let its direction be $(v,u)$) is not dominated: as $d_K^-(u)=0$, the arc itself does not belong in $K$ and as $d_K^-(v)=0$, there is no arc preceding it that is in $K$. This leaves 
 $(v,u)$ undominated. Therefore, there cannot be two vertices with no incoming arcs in any optimal solution, implying any  solution must include at least $n-1$ arcs.
 
To see $OPT_{(0,1)dEDS}\le n-1$, consider a partition of $T$ into strongly connected components $C_1,\dots,C_l$, where we can assume these are given according to their topological ordering, meaning for $1\le i<j\le l$, all arcs between $C_i$ and $C_j$ are directed towards $C_j$. Let $S$ be the set of arcs traversed in breadth-first-search (BFS) from some vertex $s\in C_1$ until all vertices of $C_1$ are spanned. Also let $S'$ be the set of arcs $(s,u),\forall u\in C_i,\forall i\in[2,l]$, that is, all outgoing arcs from $s$ to every vertex of $C_2,\dots,C_l$. Note that set $S'$ must contain an arc from $s$ to \emph{every} vertex that is not in $C_1$: $T$ being a tournament means every pair of vertices has an arc between them and $C_1$ being the first component in the topological ordering means all arcs between its vertices and those of subsequent components are oriented away from $C_1$. Then $K\coloneqq S\cup S'$ is a directed $(0,1)$-edge dominating set of size $n-1$ in $T$: observe that $d_K^-(u)=1, \forall u\not=s\in T$, that is, every vertex in $T$ has positive in-degree within $K$ except $s$. Thus all outgoing arcs from all such vertices $u$ are $(0,1)$-dominated by $K$, while all outgoing arcs from $s$ are in $K$, due to the BFS selection for $S$ and the definition of $S'$.
 
Since such an optimal solution $K$ can be computed in polynomial time
(partition into strongly connected components, BFS), the claim follows.
\end{proof}

\begin{theorem}\label{thm_pq3_P} For any $p,q$ with $\max\{p,q\}\ge3$, $p\neq 2$ and $q\neq 2$, $(p,q)$-d\eds\ can be solved in polynomial time on tournaments.
\end{theorem}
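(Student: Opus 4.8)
The plan is to establish the bound $\max\{p,q\}\ge 3$ with $p\neq 2$ and $q\neq 2$ by reducing to cases already handled and by exploiting the small diameter of tournaments. Recall that any tournament has a king, so from the vertex of maximum out-degree every vertex is reachable by a path of length at most $2$; more generally, once $q\ge 2$ a single well-chosen arc dominates a great deal in the forward direction, and symmetrically for $p$. Since we assume $p\le q$ throughout, the hypothesis $\max\{p,q\}\ge 3$, $p\neq 2$, $q\neq 2$ splits into exactly two situations: either $q\ge 3$ and $p\in\{0,1\}$, or $p\ge 3$ and $q\ge 3$ (as $p\le q$ forces $q\ge 3$ whenever $p\ge 3$). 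The case $p\ge 3, q\ge 3$ can be treated symmetrically to the first after reversing arcs, so the crux is $q\ge 3$ with $p\in\{0,1\}$.

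For the main case $p\le 1$ and $q\ge 3$, I would first dispose of tournaments containing a source: by Lemma~\ref{lem_wsource_r}, if $s$ is a source then $\delta^+(s)$ is an optimal $(p,q)$-edge dominating set for any $p\le 1$ and $q\ge 1$, so the optimum is simply $|\delta^+(s)|$ and is computable in polynomial time. It therefore remains to handle source-free tournaments. Here the idea is that a king-type argument combined with $q\ge 3$ forces the optimum to be extremely small (a constant, or at most a fixed function independent of the exponential search), so that an optimal solution can be found by polynomial exhaustive search. Concretely, I would pick a king $u$ (the maximum out-degree vertex), select a single incoming arc $(w,u)$ of $u$ (which exists since the tournament has no source), and argue that from $u$ every other vertex is reachable within distance $2$, so that $(w,u)$ together with one or two additional carefully chosen arcs $(0,q)$-dominates every arc of $T$ once $q\ge 3$. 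The quantity $q-f(u)$ in the distance bookkeeping, with $f(u)=1$, leaves slack $q-1\ge 2$, which is exactly the reach of a king; this should show that $OPT_{(p,q)dEDS}$ is bounded by a small constant (I expect the bound to be $O(1)$, something like at most $2$ or $3$ arcs).

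Having bounded the optimum by a constant $c$, the polynomial-time algorithm is immediate: enumerate all arc subsets of size at most $c$, which is $n^{O(c)}=n^{O(1)}$ many, and for each check in polynomial time whether it $(p,q)$-dominates every arc (computing $(p,q)$-domination is a pair of bounded-depth reachability tests). The symmetric case $p\ge 3$ (hence $q\ge 3$) is reduced to the above by passing to $T^{rev}$ and solving $(q,p)$-d\eds, using that a $(q,p)$-edge dominating set of $T^{rev}$ corresponds to a $(p,q)$-edge dominating set of $T$ and that reversal preserves the tournament structure and polynomial-time solvability.

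The main obstacle I anticipate is pinning down the exact constant bound on the optimum in the source-free case and verifying that the king's distance-$2$ reachability genuinely yields $(0,q)$-domination of \emph{all} arcs (not just all vertices) when $q\ge 3$. In particular one must be careful that an arc $(x,y)$ lying ``far'' from the king is still dominated: since every vertex is within distance $2$ of being reached from $u$, and since a selected incoming arc of $u$ provides backward slack, the relevant directed path of length at most $q$ reaching the tail region should cover $(x,y)$, but I would check the boundary arithmetic (using $q\ge 3$ rather than $q\ge 2$) explicitly, as this is precisely where the excluded value $q=2$ would fail and force the harder \textsc{Dominating Set}-equivalent behavior of Subsection~\ref{subsec:tournament2}.
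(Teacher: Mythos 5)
Your main case ($p\le 1$, $q\ge 3$) matches the paper's argument: Lemma~\ref{lem_wsource_r} when a source exists, and otherwise a single incoming arc of a king $(0,q)$-dominates every arc (in fact the optimum is exactly $1$ there, not ``2 or 3'' arcs), so the brute-force step is not even needed. The genuine gap is your treatment of the case $p\ge 3$ and $q\ge 3$. Reversing all arcs transforms $(p,q)$-d\eds\ on $T$ into $(q,p)$-d\eds\ on $T^{rev}$; when \emph{both} parameters are at least $3$, this maps the case to itself, so it cannot be ``treated symmetrically to the first'' --- the reduction only swaps the roles of $p$ and $q$ and never produces the case $p\in\{0,1\}$. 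Consequently you have no argument at all for $p,q\ge 3$ on tournaments that contain a source: there the vertex of maximum out-degree \emph{is} the source, so it has no incoming arc to select, and worse, an arc $(s,u)$ leaving the source can never be dominated in the forward direction (no arc has head $s$), so the king argument collapses. Lemma~\ref{lem_wsource_r} is also unavailable, since it requires $p\le 1$; indeed with $p\ge 3$ the set $\delta^+(s)$ can be badly suboptimal.

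The missing piece, which the paper supplies, is a constant bound on the optimum in exactly this configuration, split into two sub-cases. If $T$ has a source but no sink, then $T^{rev}$ is source-free, so an incoming arc of a king of $T^{rev}$ --- equivalently an outgoing arc of a ``reverse king'' of $T$ --- $(3,0)$-dominates every arc, and the optimum is $1$. If $T$ has both a source $s$ and a sink $t$, the arc $(s,t)$ is dominated only by itself and is therefore forced; taking additionally $(s,s')$ and $(t',t)$, where $s'$ is a king of $T-s$ and $t'$ is a reverse king of $T-t$, yields a solution of size $3$, which is optimal. Once these constant bounds are established, your enumeration over arc subsets of constant size does finish the proof; but without them, the constant $c$ you brute-force over is unjustified precisely in the case your reversal argument silently discards.
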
 \begin{proof}
Suppose, without loss of generality, that $q\geq 3$, as otherwise we can solve
$(q,p)$-d\eds\ on $T^{rev}$, the tournament obtained by reversing the
orientation of every arc. In any tournament $T$, there always exists a
\emph{king} vertex, that is, a vertex with a path of length at most 2 to any
other vertex in the graph. One such vertex is the vertex of maximum out-degree
$v$. If $v$ is not a source, it suffices to select one of its incoming arcs:
since there is a path of length at most 2 from $v$ to any other vertex $u$ in
the graph, any outgoing arc from any such $u$ will be $(0,3)$-dominated by this
selection. This is clearly optimal. 

Suppose now that $s$ is a source. We consider two cases: if $p\le 1$, then  Lemma~\ref{lem_wsource_r} 
implies that $\delta^+(s)$ is 
optimal.
Finally, suppose $s$ is a source and $p\ge 3$. If $T$ does not have a sink, then a king of $T^{rev}$ has 
an incoming arc, which $(0,3)$-dominates $T^{rev}$ as observed above, and thus $T$ 
has a $(3,0)$-edge dominating set of size~1. 

Therefore, we may assume that $T$ has both a source $s$ and 
a sink $t$. Let $s'$ and $t'$ be vertices of $V\setminus\{s,t\}$ with maximum out- and in-degree, respectively. 
Now $\{(s,t), (s,s'), (t',t)\}$ is a $(3,3)$-edge dominating set. This is because $s'$ is a king of $T-s$ and thus 
every arc $(u,v)$ with $u\neq s$ is $(0,3)$-dominated by $(s,s')$. Similarly, every arc $(u,v)$ with $v\neq t$ 
is $(3,0)$-dominated by $(t',t)$. The only arc not $(3,3)$-dominated by these two arcs is $(s,t)$, which is only
dominated by itself. Note this also implies optimality as any $(3,3)$-edge dominating set contains at least three arcs. Examining all vertex subsets of size up to 3, we can compute an optimal $(3,3)$-edge dominating set in polynomial time.
\end{proof}



\bibliography{deds}

\end{document}